\newtheorem{theorem}{Theorem} 
\newtheorem{mainthm}[theorem]{Theorem}
\newtheorem{lemma}[mainthm]{Lemma}
\newtheorem{conjecture}[mainthm]{Conjecture}
\DeclareFontFamily{U}{mathx}{\hyphenchar\font45}
\DeclareFontShape{U}{mathx}{m}{n}{
      <5> <6> <7> <8> <9> <10>
      <10.95> <12> <14.4> <17.28> <20.74> <24.88>
      mathx10
      }{}
\DeclareSymbolFont{mathx}{U}{mathx}{m}{n}
\DeclareMathAccent{\widecheck}{0}{mathx}{"71}
\def\ket#1{\mathinner{|{#1}\rangle}}
\def\Real{{\mathbf R}}
\def\Complex{{\mathbb C}}
\def\Prob{{\mathbf P}}
\def\Expec{{\mathbf E\,}}
\def\eps{{\varepsilon}}
\def\One{{\mathbf{1}}}
\newcommand{\poly}{{\mathrm{poly}}}
\newcommand{\diff}{\mathop{}\!\mathrm{d}}
\numberwithin{equation}{section}
\newcommand{\mcal}[1]{{\mathcal{#1}}}
\newcommand{\BPP}{{\mathsf{BPP}}}
\newcommand{\NP}{{\mathsf{NP}}}
\newcommand{\sharpP}{\#\mathsf{P}}
\DeclareMathOperator{\trace}{tr}
\DeclareMathOperator{\Per}{Per}
\newtheorem{definition}[theorem]{Definition}
\theoremstyle{remark}
\newtheorem{remark}[theorem]{Remark}
\definecolor{MidnightBlue}{RGB}{25,25,150}
\definecolor{BrickRed}{RGB}{182,50,28}
\definecolor{ForestGreen}{RGB}{34,139,34}
\begin{document}

\title{Exponential improvements to the average-case hardness of BosonSampling}

\date{}

\clearpage

\author[1]{Adam Bouland\thanks{ \texttt{abouland@stanford.edu}}}
\author[1]{Ishaun Datta\thanks{ \texttt{idatta@stanford.edu}}}
\author[2]{Bill Fefferman\thanks{ \texttt{wjf@uchicago.edu}}}
\author[3]{Felipe Hern\'{a}ndez\thanks{ \texttt{felipeh@psu.edu}}}

\affil[1]{Department of Computer Science, Stanford University}
\affil[2]{Department of Computer Science, University of Chicago}
\affil[3]{Department of Mathematics, Penn State}

\maketitle
\thispagestyle{empty}

\begin{abstract}

BosonSampling and Random Circuit Sampling are important both as a theoretical tool for separating quantum and classical computation, and as an experimental means of demonstrating quantum speedups. Prior works have shown that average-case hardness of sampling follows from certain unproven conjectures about the hardness of computing output probabilities, such as the Permanent-of-Gaussians Conjecture (PGC), which states that $e^{-n\log{n}-n-O(\log n)}$ additive-error estimates to the output probability of most random BosonSampling experiments are \textsf{\#P}-hard. Prior works have only shown weaker average-case hardness results that do not imply sampling hardness. Proving these conjectures has become a central question in quantum complexity.

In this work, we show that $e^{-n\log n -n - O(n^\delta)}$ additive-error estimates to output probabilities of most random BosonSampling experiments are \textsf{\#P}-hard for any $\delta>0$, exponentially improving on prior work. In the process, we circumvent all known barrier results for proving PGC. The remaining hurdle to prove PGC is now ``merely'' to show that the $O(n^\delta)$ in the exponent can be improved to $O(\log n).$ We also obtain an analogous result for Random Circuit Sampling.

We then show, for the first time, a hardness of average-case classical \textit{sampling} result for  BosonSampling, under an anticoncentration conjecture. Specifically, we prove the impossibility of multiplicative-error sampling from random BosonSampling experiments with probability $1-2^{-\tilde O(N^{1/3})}$ for input size $N$, unless the Polynomial Hierarchy collapses. This exponentially improves upon the state-of-the-art. To do this, we introduce new proof techniques which tolerate exponential loss in the worst-to-average-case reduction. This opens the possibility to show the hardness of average-case sampling without ever proving PGC.

\end{abstract}

\newpage\thispagestyle{empty}\tableofcontents

\newpage \section{Introduction}
\setcounter{page}{1}

What makes quantum mechanics hard to simulate classically?  This has been the central question of quantum computation since it was first proposed \cite{benioff1980computer, manin1980computable,feynman1982simulating}. 
The need to answer this question has been made even more urgent by recent experiments claiming to solve certain problems much faster than by any classical computer
\cite{supremacy,google2,ustcrcs,ustc-bosonsampling1,ustc-bosonsampling2,ustc-bosonsampling3,nist-bosonsampling,xanadu,quantinuum,liu2025robust}. 
These claims represent the first \emph{experimental} violations of the Extended Church-Turing Thesis, the belief that all physically realizable models of computation are efficiently simulable by randomized Turing machines, and thus deserve careful scrutiny. 

While the physics of these experiments differs dramatically, from a computational standpoint they all solve random sampling problems 
that have three components: \textit{(i)} initialize a fiducial starting state (such as $\ket{0^n}$), \textit{(ii)} evolve by a random quantum circuit drawn from some particular distribution, and \textit{(iii)} measure to draw a sample. 
Seminal results due to Terhal and DiVincenzo \cite{terhal2004adptive} and concurrent works of Aaronson and Arkhipov \cite{Aaronson2013} and Bremner, Jozsa, and Shepherd \cite{bremner2010iqp} gave evidence that for these distributional sampling problems, even sub-universal quantum computation can outperform any efficient classical algorithm in the worst case.
Fascinatingly this only assumes the non-collapse of \textsf{PH}.

However, the shortcoming of these statements is that they are \emph{brittle}, pertaining only to exact sampling in the worst case. Thus the important open problem is to make these separations more robust, so as to close the gaps between theory and experiment.

Foremost among these gaps is to prove classical hardness of sampling from an \emph{average-case} experiment, i.e.,\ to extend the previous worst-case sampling results to prove that sampling is hard for a \emph{randomly chosen} or typical circuit from a given ensemble. This random choice of circuit is crucial in quantum advantage experiments.  For one thing, such randomness gives a hard candidate distribution to test with a quantum device.  
Additionally, randomness plays an important role in classical verification, e.g.,\ by benchmarking tests such as Linear Cross-Entropy which make use of specific properties of random circuits.
It has also been shown that average-case hardness allows one to derive cryptographic primitives (see e.g., \cite{khurana2024foundingquantumcryptographyquantum}).
Thus for reasons both practical and fundamental, it behooves us to study the average-case hardness of sampling---which, in contrast to the worst case, remains an \emph{open problem} for \emph{all} quantum advantage proposals. 

Why should sampling from random quantum circuits be intractable for classical computers?  The first evidence for this came from Aaronson and Arkhipov, in two parts.
First, they showed a reduction from the problem of \emph{sampling} from random circuits to the problem of \emph{approximately computing} output probabilities of random circuits, via Stockmeyer counting \cite{stockmeyer1983complexity}.
Second, they showed that computing an output probability of a random BosonSampling circuit is just as hard as in the worst case, i.e.\ $\#\mathsf{P}$-hard, using the connection between bosons and the matrix permanent.  
This was extended to random circuits  on qubits \cite{bouland_complexity_2019} and subsequently improved and generalized \cite{movassagh2023hardness,haferkamp_closing_2019}.
However, these two parts do not connect with one another to establish hardness of sampling.
The key issue is that existing average-case hardness of computing proofs are not \emph{error tolerant} enough to prove hardness of sampling.
That is, to show hardness of sampling we want to show it is $\#\mathsf{P}$-hard to estimate output probabilities to additive error $\epsilon$, but so far we have only proven it is $\#\mathsf{P}$-hard to estimate them to additive error $\epsilon' \ll \epsilon$.
We call this gap between $\epsilon$ and $\epsilon'$ the ``robustness gap'', and it remains open for all quantum advantage proposals.
Hardness of average-case sampling has therefore only been established under unproven conjectures.

Consequently, 
the focus of this work is to close this robustness gap.
While our results are broadly applicable to many random sampling experiments, we will primarily focus on BosonSampling. The goal is to prove the \textit{Permanent-of-Gaussians Conjecture} (PGC), the statement that the following problem, known as Gaussian Permanent Estimation, \textsf{GPE}$_{\pm},$ is $\sharpP$-hard: estimate the output probability of a random BosonSampling experiment to within additive error $\pm e^{-n\log{n}-n-O(\log n)}.$ 
Aaronson and Arkhipov showed this conjecture suffices to show hardness of sampling from average-case BosonSampling experiments \cite{Aaronson2013}.\footnote{We note \cite{Aaronson2013} also conjecture a certain ``flatness'' property about the output distribution known as anticoncentration (which we also assume in this work). This allows them to convert average-case additive estimates to relative-error estimates.}

In the last decade, progress has been made toward proving PGC \cite{Aaronson2013,Bouland2021,krovicomm}. 
While Aaronson and Arkhipov's initial work showed computing additive error estimates of $e^{-O(n^4)}$ to the output probability of most BosonSampling experiments is \textsf{\#P}-hard \cite{Aaronson2013}, this error tolerance was subsequently improved to $e^{-6n\log{n}-O(n)}$ by Bouland, Fefferman, Landau, and Liu  \cite{Bouland2021}, and then to $e^{-4n\log{n} -O(n)}$ in unpublished work of Krovi \cite{krovicomm}.  Therefore the remaining gap to establish the hardness of BosonSampling is to improve the robustness of this result by a constant factor in the exponent.  This seems tantalizingly ``close’’ to the target in additive terms yet \textit{exponentially far away} in relative terms.
We note the analogous conjectures for all other quantum advantage experiments remain open as well, such as Random Circuit Sampling \cite{Boixo2016}, despite much progress in the area \cite{bouland_complexity_2019,movassagh2023hardness,Bouland2021,Kondo2021_robustness,fermionsampling,krovi2022}.

Why has it been so difficult to improve the robustness of output probability estimation and prove the classical hardness of BosonSampling or any other quantum advantage experiment?  One of the major reasons is that there are a number of proof barriers that have been identified, indicating that to prove hardness of sampling, new techniques are required:\footnotemark 
\begin{itemize}

    \item \textit{Convexity barrier.} Noted in \cite{Aaronson2013}, the basic idea is that worst-to-average-case reductions for the permanent are based on polynomial extrapolation, following Lipton's proof \cite{lipton1991}. 
    Polynomial extrapolation is in general exponentially ill-conditioned, i.e.\ an error $\gamma$ in a degree-$d$ polynomial $p(t)$ near $t=0$ becomes error $\sim2^d \gamma$ near $t=1$.
    Moreover, one can show this is necessary even for the set of polynomials corresponding to valid matrix permanents, which is a convex set. 
    Thus any worst-to-average case reduction for the permanent based on polynomial extrapolation will introduce exponential relative error. A special case was referred to as the ``noise barrier'' of \cite{Bouland2021}.

\item \textit{``Jerrum-Sinclair-Vigoda'' barrier for BosonSampling.}
This barrier is inspired by a landmark result of \cite{jerrum-sinclair-vigoda} giving an efficient classical algorithm to estimate the permanent of a nonnegative matrix to $1/\poly(n)$ relative error.  This algorithm tells us that any technique used to prove PGC must fundamentally make use of the fact that i.i.d.\ Gaussian matrices have negative as well as positive entries.
By contrast, all existing worst-to-average-case reductions for Gaussian permanents work equally well for permanents of nonnegative matrices, and therefore cannot possibly prove PGC.
In other words, to show hardness of sampling, we will need a proof which uses a special property of matrices with negative entries that does not hold for nonnegative matrices, such as multiplicative hardness in the worst case.

\item \textit{Depth and ``Born-rule'' barriers for Random Circuit Sampling.} \cite{Napp2020} gives a classical algorithm that approximately samples from the output distribution of a particular ensemble of constant depth RCS experiments.  On the other hand, the existing techniques for proving hardness of computing output probabilities work with respect to circuits of any depth.  Therefore, if we are to prove hardness of sampling, we need to find a proof technique that is sensitive to circuit depth and only works to prove hardness for sufficiently deep circuits. 

The ``Born-rule'' barrier identified by Krovi \cite{krovi2022} is that the additive error needed to prove the hardness of average-case sampling ($~2^{-n}$) is already larger than the additive error known to be hard in the worst case ($2^{-2n}$, which is derived from the Born rule by squaring the output amplitude of a Quantum Fourier Sampling circuit).
How can we ever hope to prove a worst-to-average case reduction in which the additive error in the average case \emph{is larger} than the additive error we need to obtain in the worst case?  
\footnotetext{
There is also a relativization barrier to proving hardness of average-case sampling to small $\ell_1$ error \cite{Aaronson2017foundations}. However, here we focus on average-case multiplicative-error sampling (Def.\ \ref{def:sampler}) to which no relativization barrier applies.
}
\end{itemize}

There has also been work aiming to falsify variants of PGC. 
For example, Eldar and Mehraban showed there is a quasipolynomial-time classical algorithm to multiplicatively estimate random Gaussian permanents if the means are non-zero but asymptotically slowly vanishing, despite being $\sharpP$-hard to compute exactly on average \cite{eldar-mehraban}.
Thus the dividing line between classically easy and classically hard is very narrow---making it yet more difficult to furnish a proof of PGC.

\subsection{Our results}

In this work we introduce a new suite of tools which allow us to exponentially improve on the state-of-the-art hardness results for BosonSampling. 
In particular, we invent new techniques that overcome \emph{all of the barriers} described above.

Our first result makes progress towards proving the Permanent-of-Gaussians Conjecture (PGC).
We show a new worst-to-average-case reduction for computing Gaussian permanents whose additive error tolerance exponentially improves on the state-of-the-art.
Our error tolerance for the first time matches to leading order that of the Permanent-of-Gaussians Conjecture (PGC).

\begin{restatable}[Hardness of computing output probabilities]{theorem}{dilutionrestatable}
\label{thm:dilution-intro}
            For any $\delta>0,$ it is \textsf{\#P}-hard under a $\mathsf{BPP}^{\mathsf {NP}}$ reduction to approximate output probabilities of an $n$-photon, $O(n^2)$-mode BosonSampling experiment to additive error $\exp(-n\log n -n - O(n^\delta))$    with success probability at least $ 2/3$, assuming the Permanent Anticoncentration Conjecture \ref{conj:pacc-aa}.
\end{restatable}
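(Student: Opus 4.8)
The plan is to build a worst-to-average-case reduction for the permanent that tolerates additive error only slightly worse than the target $e^{-n\log n - n - O(\log n)}$, by replacing the single polynomial-extrapolation step of Lipton-style arguments with a sequence of reductions that move the "query point" toward a Gaussian matrix in many small steps rather than one large one. Concretely, given a worst-case matrix $X$ whose permanent we wish to compute, I would embed it into a larger ($O(n^2)$-mode) BosonSampling instance via a "dilution" construction: tensor or pad $X$ with a random Gaussian block so that the relevant output amplitude is, up to an explicit known prefactor, $\perm(X)$ times a permanent of an $m \times m$ i.i.d.\ Gaussian matrix with $m = O(n^2)$. The point of working at $m = O(n^2)$ modes is that a random $n\times n$ submatrix of a truncation of a Haar-random $m$-mode interferometer is close in total variation to i.i.d.\ complex Gaussians with the right variance, so "output probabilities of a random BosonSampling experiment" really does mean "$|\perm(G)|^2$ for $G$ Gaussian," and the dilution leaves the target additive scale essentially unchanged to leading order.

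The core of the argument is the interpolation. Following the low-degree-polynomial method, I would set up a path $X(t) = (1-t)A + t X$ (or a similar affine/rotational interpolation) from a Gaussian-distributed matrix $A$ at $t$ near $0$ to the worst-case matrix $X$ at $t=1$, so that $q(t) = \perm(X(t))$ is a degree-$n$ polynomial in $t$, and every evaluation point along the path is itself Gaussian-distributed (this is where the Gaussian-rotation trick of Aaronson--Arkhipov, refined in \cite{Bouland2021}, comes in). Rather than extrapolating in one shot from $[0,\eps]$ to $t=1$ — which costs the exponential blow-up flagged by the convexity barrier — I would chain together many short extrapolation steps, at each step only extending the region by a $(1+1/\poly)$ factor and re-randomizing so that the average-case oracle can be invoked afresh. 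The number of steps is $\poly(n)$ (hence the $\mathsf{BPP}^{\mathsf{NP}}$, i.e.\ Stockmeyer-plus-$\mathsf{NP}$, overhead), and the key bookkeeping is that each step amplifies error by only $1 + O(n^{-c})$, so $n^{O(1)}$ steps accumulate only a subexponential $\exp(O(n^\delta))$ factor in the exponent once the step size and degree are balanced against $\delta$. Self-correction of the low-degree polynomial at each stage (Berlekamp--Welch / robust Hermite interpolation over a slightly noisy oracle) handles the constant fraction of erroneous oracle answers, and the Permanent Anticoncentration Conjecture \ref{conj:pacc-aa} is what lets us pass from additive error on $|\perm(G)|^2$ to additive-then-relative control on $\perm(G)$ itself, so that the multiplicative errors introduced along the path are meaningful.

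I would then assemble the pieces: (i) the dilution reduction turns an arbitrary worst-case permanent into an average-case BosonSampling output probability at $O(n^2)$ modes; (ii) the chained interpolation, with self-correction at each link, computes $\perm(X)$ from an average-case oracle of additive precision $\exp(-n\log n - n - O(n^\delta))$; (iii) exact computation of $\perm(X)$ for a suitable worst-case $X$ is $\#\mathsf{P}$-hard, giving the conclusion. The main obstacle I expect is step (ii)'s error accounting: one must show that the re-randomization at each link genuinely restores the "fresh Gaussian" distribution (so the average-case oracle's success probability applies independently at each step, or at least that a union bound over $\poly(n)$ steps costs only a constant factor in the final success probability), and simultaneously that the interpolation points stay in a regime where anticoncentration holds and the polynomial degree stays $O(n)$ — these constraints pull against each other, and getting the step size, the mode count $O(n^2)$, and the exponent $O(n^\delta)$ to be mutually consistent for every $\delta>0$ is the delicate part. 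A secondary subtlety is ensuring the explicit prefactors from dilution are computable to enough precision that they don't themselves reintroduce an $\exp(\Omega(n))$ loss.
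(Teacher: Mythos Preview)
Your proposal has a genuine gap at its core. The ``chained short-step extrapolation with re-randomization'' does not reduce the total blowup: for a degree-$n$ polynomial, extending the evaluation interval by a factor $(1+\alpha)$ costs a factor $(1+\alpha)^n$ (Chebyshev polynomials saturate this), and chaining $\sim \alpha^{-1}\log n$ such steps to go from $t=O(1/n)$ to $t=1$ compounds to $(1+\alpha)^{n\log(n)/\alpha}\approx e^{n\log n}$ regardless of $\alpha$---exactly the one-shot loss. The ``re-randomize and invoke the oracle afresh'' step cannot help once $t$ leaves the $O(1/n)$ window: the average-case oracle is only guaranteed on matrices close in total variation to i.i.d.\ Gaussian, and $A(t)=(1-t)R+tX$ for $t\gg 1/n$ simply is not; adding fresh Gaussian noise to restore the distribution would sever the polynomial relationship to the worst-case target $X$. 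Your description of dilution as padding $X$ with a Gaussian block to size $m=O(n^2)$ also conflates the interferometer dimension with the submatrix dimension, and a block-diagonal $X\oplus G$ is not itself close to i.i.d.\ Gaussian, so the oracle cannot be applied to it either. Finally, you explicitly keep the polynomial degree at $O(n)$; that is precisely what must change.

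The paper's argument is orthogonal to step-size engineering: it \emph{reduces the degree}. The worst case is a tiny $W'\in\{0,\pm 1\}^{n^\varepsilon\times n^\varepsilon}$ placed in the upper-left of an otherwise all-zero $n\times n$ matrix $W_{\text{dilute}}$. Then $\Per(R+tW_{\text{dilute}})$ has degree only $n^\varepsilon$ in $t$, with top coefficient $\Per(W')\cdot\Per(R_D)$ where $R_D$ is the complementary $(n-n^\varepsilon)$-minor of $R$. One queries the oracle at $O(n^\varepsilon)$ points $t=O(1/n^\varepsilon)$ (still close to Gaussian in TV), asks the $\NP$ oracle for a degree-$n^\varepsilon$ polynomial whose square matches the data (Lemma~\ref{lem:square-trick-coex}), reads off its top coefficient, and divides by $|\Per(R_D)|$---itself an average-case instance. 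The coefficient-extraction blowup on a degree-$n^\varepsilon$ polynomial is only $\exp(O(n^\varepsilon\log n))=\exp(O(n^\delta))$, and the output is a \emph{relative}-error estimate of $\Per(W')$, which is $\sharpP$-hard. Anticoncentration (Conjecture~\ref{conj:pacc-aa}) enters only to lower-bound $|\Per(R+tW_{\text{dilute}})|$ at the sample points so that the square-root step is well-conditioned, not to repair a chained error budget.
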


\noindent This is nearly the additive error tolerance needed to prove PGC, $e^{-n\log{n}-n-O(\log n)}$. In particular, all that remains is ``merely'' to improve the $O(n^\delta)$ term in the exponent to $O(\log n)$.
In order to prove this result, we give a new worst-to-average-case reduction for BosonSampling which replaces polynomial extrapolation with polynomial \emph{coefficient extraction}.
This allows us to use a technique we call \emph{``dilution''} to lessen the degree of the polynomial involved 
and hence reduce the error blowup of the worst-to-average-case reduction.

Crucially, our proof surpasses the Jerrum-Sinclair-Vigoda barrier as it requires that the worst-case matrix contain both positive and negative entries.
This is because our worst-to-average-case reduction derives the worst-case permanent value to within small relative error, which is only $\#\mathsf{P}$-hard with mixed signs.
This is an essential feature of any proof that might solve PGC, and a feature which was missing from all prior proofs of average-case hardness for the permanent \cite{Aaronson2013,Bouland2021,krovicomm}.

We also show this idea can be ported to other quantum advantage experiments, like RCS:
\begin{restatable}{corollary}{rcsdilution}
\label{cor:rcs-dilution}
        For any $\delta>0$, it is $\#\textsf{P}$-hard to approximate the output probabilities of $n$-qubit Random Circuit Sampling experiments of $\Omega(\log n)$ depth to additive error $2^{-n- O(n^\delta)}$.
\end{restatable}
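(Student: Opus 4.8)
The plan is to port the worst-to-average-case reduction behind Theorem~\ref{thm:dilution-intro} from BosonSampling to Random Circuit Sampling, replacing the permanent of a Gaussian matrix by the output amplitude $\bra{0^n}C\ket{0^n}$ of a depth-$\Omega(\log n)$ brick-wall circuit $C$ with Haar-random two-qubit gates, and $|\perm(\cdot)|^2$ by the output probability $p(C)=|\bra{0^n}C\ket{0^n}|^2$ returned (via Stockmeyer) by a hypothetical sampler. There are three ingredients to adapt. First, a \emph{worst-case} hardness statement: a family of depth-$\Omega(\log n)$ circuits $W$ for which $p(W)$ is $\sharpP$-hard to compute even to small \emph{relative} error, which follows from the standard postselection / encoded-permanent constructions and, crucially, forces the use of genuinely signed amplitudes, so that the Jerrum--Sinclair--Vigoda-style barrier is dodged. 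Second, an interpolation path: fixing $W$ and a small dilution parameter $\eps$, form $C(\theta)$ by composing each Haar gate $H_i$ of a random circuit with a Cayley- or truncated-Taylor path $V_i(\theta)$ that is the identity at $\theta=0$ and carries $\eps$ times the corresponding generator of $W$, switching on only a \emph{block} of $O(n^\delta)$ gates in any given round. Along the relevant (real, unitarity-preserving) range of $\theta$, each $C(\theta)$ is a bona fide brick-wall circuit of the same architecture and depth lying within $O(\eps)$ total-variation distance of the RCS ensemble, so the average-case oracle returns $p(C(\theta))$ to within the permitted additive error at all queried $\theta$.

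Third, the \emph{dilution} itself, which is what lowers the degree. Because only $O(n^\delta)$ gates are switched on, $p(C(\theta))$ is a polynomial in $\theta$ of degree $O(n^\delta)$, so in place of polynomial extrapolation we perform coefficient \emph{extraction}: since unitarity confines the evaluation points to a real interval, extraction incurs a blow-up that grows with the degree, and shrinking the degree from $\Theta(n\log n)$ to $O(n^\delta)$ shrinks the blow-up from $2^{\Omega(n)}$ to $2^{O(n^\delta)}$ --- which is exactly how the convexity/noise barriers are evaded. Iterating the extraction over the $\poly(n)$ blocks of $W$ reconstructs $p(W)$; the per-round targets are (rescaled) $\sharpP$ integers, so one rounds between rounds and errors do not compound --- or, equivalently, a constant relative error accumulated over $\poly(n)$ rounds is still $\sharpP$-hard for signed amplitudes. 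Tracking this through, an additive error of $2^{-n-O(n^\delta)}$ on $p$ --- with $2^{-n}$ the Porter--Thomas scale of a typical RCS probability and $2^{O(n^\delta)}$ the extraction blow-up --- already suffices to solve a $\sharpP$-hard problem; relabelling $\delta$ gives the statement for every $\delta>0$. Note the reduction genuinely needs $\Omega(\log n)$ depth (both for $W$ and for the ensemble to anticoncentrate), so it respects the depth barrier; and since anticoncentration of random circuits at logarithmic depth is known, no additional conjecture is required, unlike in Theorem~\ref{thm:dilution-intro}.

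The main obstacle I expect is the joint calibration of the interpolation path and the dilution under the rigidity of the RCS model. Unlike the translation-invariant Gaussian ensemble, the Haar measure on $U(4)$ moves under a multiplicative perturbation, so showing that every $C(\theta)$ is within negligible total-variation distance of a genuine depth-$\Omega(\log n)$ brick-wall circuit requires an honest operator-norm bound on each perturbed gate together with a hybrid argument over all $\poly(n)$ gates; simultaneously, $\eps$ and the block size must be chosen so that the per-round polynomial stays degree $O(n^\delta)$ \emph{and} the reconstructed per-round coefficient remains as hard as the corresponding piece of $p(W)$, with the relative-error guarantee surviving the iteration. Pinning down a worst-case hard family $W$ that already lives at $\Omega(\log n)$ depth, and checking the arithmetic of the error budget through the Stockmeyer and rounding steps, should be routine once the BosonSampling version of the argument is in place.
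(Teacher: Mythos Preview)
Your dilution scheme differs from the paper's in a way that creates a real gap. The paper dilutes \emph{spatially}: the worst-case instance is a circuit $W_A$ supported on only the first $n^\varepsilon$ qubits, placed in tensor product (via bridge gates sent to the identity) with the untouched random sub-circuit $R_B$ on the remaining $n-n^\varepsilon$ qubits; see Figure~\ref{fig:rcs-dilution}. A \emph{single} Cayley-path interpolation then perturbs only the gates acting on the first $n^\varepsilon$ qubits (and the bridge gates), so the rational function in $\theta$ has degree $O(n^\delta)$, and at the endpoint the output probability \emph{factorizes} as $p(W_A)\cdot p(R_B)$. Extrapolation to $\theta=1$ (Robust Berlekamp--Welch, not coefficient extraction---the Cayley endpoint \emph{is} the target) followed by division by the anticoncentrated random factor $p(R_B)$ recovers $p(W_A)$. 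No iteration is involved: the worst-case is simply a polynomially smaller $\sharpP$-hard instance, exactly analogous to the $n^\varepsilon\times n^\varepsilon$ block $W'$ in Theorem~\ref{thm:dilution-intro}.

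Your scheme instead keeps a full $n$-qubit worst-case $W$ and tries to switch on its gates in $O(n^\delta)$-sized blocks over $\poly(n)$ rounds. This iteration cannot work as stated: after round one the base circuit already carries a block of $W$ and is no longer within $o(1)$ total-variation distance of the Haar ensemble, so the average-case oracle has no guarantee in any subsequent round. Neither proposed fix rescues this---the coefficient extracted in a single round depends on the entire surrounding random circuit and is not itself a (rescaled) integer encoding a $\sharpP$-hard quantity, and constant relative error compounded multiplicatively over $\poly(n)$ rounds becomes exponential, which is far beyond what is $\sharpP$-hard even for signed amplitudes. The idea you are missing is precisely the tensor-product structure and the resulting factorization of the output probability, which lets the paper reduce to a \emph{small} worst case in one shot rather than reassemble a large one piecewise.
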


Just as with BosonSampling, this exponentially improves over prior work \cite{bouland_complexity_2019,movassagh2023hardness,Bouland2021,Kondo2021_robustness,krovi2022}, and obtains hardness which is within an $O(n^\delta)$ factor of what is needed for hardness of sampling.
Applied to RCS, our techniques overcome the depth barrier by requiring anticoncentration, and the Born-rule barrier by ``diluting'' the worst-case instance to be polynomially smaller than the average-case instance. We prove Corollary \ref{cor:rcs-dilution} in Sec.\ \ref{sec:rcs}.

Our second major contribution is to develop a suite of techniques that taken together allow us to show, for the first time, a hardness of average-case sampling theorem.

\begin{restatable}[Hardness of sampling]{theorem}{nosampler}
\label{thm:no-sampler}
    There does not exist a multiplicative-error classical sampler
    (see Def.\ \ref{def:sampler}) 
    from the output distribution of an $n$-photon, $O(n^2)$-mode real BosonSampling experiment that succeeds with probability at least~$1-\exp(-O(n))$ over the choice of experiment, assuming $\textsf{PH}$ does not collapse and a slight generalization of Permanent Anticoncentration, Conjecture \ref{conj:anticoncentration}.
\end{restatable}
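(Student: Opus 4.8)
The plan is to run the Stockmeyer-counting skeleton of Aaronson--Arkhipov, but to feed its output into the new worst-to-average-case reduction behind Theorem~\ref{thm:dilution-intro} rather than into the (still-unproven) Permanent-of-Gaussians Conjecture. Suppose for contradiction that a multiplicative-error sampler $\mathcal{S}$ (Definition~\ref{def:sampler}) exists, succeeding with probability $1-\exp(-O(n))$ over the choice of an $n$-photon, $O(n^2)$-mode real BosonSampling experiment $U$. Hard-wiring a collision-free target outcome $S$ makes $p_U(S)$ equal, up to a known normalizing constant, to $|\Per(X)|^2$ for an $n\times n$ submatrix $X$ of the orthogonal matrix underlying $U$. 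Viewing $\mathcal{S}$ as a randomized circuit with $S$ hard-wired, Stockmeyer's approximate-counting theorem places in $\BPP^{\NP}$ an estimator of the sampler's probability $q_U(S)$ to within a $(1+1/\poly(n))$ multiplicative factor; composing with the guarantee relating $q_U$ to $p_U$, and with the fact that for Haar-random $U$ on $\Omega(n^2)$ modes the submatrix $X$ is within $1/\poly(n)$ total-variation distance of a matrix of i.i.d.\ real Gaussians, we obtain a single $\BPP^{\NP}$ algorithm $\mathcal{A}$ that estimates $|\Per(X)|^2$ to within a multiplicative factor $1+\eta$ for all $X$ outside a set of Gaussian measure $\exp(-O(n))$. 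The generalized Permanent Anticoncentration Conjecture~\ref{conj:anticoncentration} enters to ensure this remains meaningful at the relevant scale --- that $|\Per(X)|^2$ is within a $\poly(n)$ factor of its mean $n!$ on almost all such $X$ --- and, in its slightly generalized form, that the mildly shifted and rescaled Gaussian matrices appearing along the interpolation curve below are anticoncentrated as well (this is the strengthening over Conjecture~\ref{conj:pacc-aa}).

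The core is to run the worst-to-average-case reduction with $\mathcal{A}$ as the average-case oracle while \emph{tolerating exponential loss}. Given a worst-case mixed-sign integer matrix $M$ of a dilution dimension $k=k(n)\ll n$, we embed a suitably rescaled copy of $M$ into an otherwise-i.i.d.-Gaussian $n\times n$ matrix along a curve $X(t)$ whose queried points $X(t_i)$ are each distributed (up to total-variation error well within the $\exp(-O(n))$ slack) as a Gaussian matrix, but for which $\Per(X(t))$ is a polynomial in $t$ of degree only $k$: this is the ``dilution,'' in which burying the worst-case data inside a much larger Gaussian instance makes the interpolation degree, and hence the error amplification, governed by $k$ rather than $n$. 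Instead of extrapolating $\Per(X(t))$ to $t=1$ --- the route blocked by the convexity/noise barrier --- we \emph{extract a coefficient}: the degree-$k$ coefficient equals $\Per(M)$ times a complementary Gaussian minor, which is itself a Gaussian permanent estimated by a further call to $\mathcal{A}$ and divided out. Reconstructing a degree-$k$ polynomial from $O(k)$ queries to a $(1+\eta)$-multiplicative oracle recovers $\Per(M)$ to within a multiplicative factor that, although exponential in the dilution parameter $k$, is chosen to stay below the precision at which multiplicative approximation of mixed-sign permanents is $\sharpP$-hard. Running $\mathcal{A}$ at all $O(k)$ query points --- repeating $\poly(n)$ times at each to suppress the failure probabilities of Stockmeyer and of the Gaussian approximation --- succeeds with probability $1-\exp(-O(n))$, and \emph{this is exactly where the sampler's overwhelming success guarantee is used}: each query lands on a bad experiment only with probability $\exp(-O(n))$ and there are only $\poly(n)$ of them, so the union bound comfortably closes. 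Since computing the permanent of a worst-case mixed-sign matrix to within this multiplicative factor is $\sharpP$-hard, the existence of $\mathcal{S}$ yields $\sharpP\subseteq\BPP^{\NP}$, whence $\mathsf{PH}$ collapses by Toda's theorem --- contradicting the hypothesis. Choosing $k$ to grow polynomially in $n$, as large as the error budget and the bookkeeping between photon number and input size permit, is what produces the $\widetilde{\Theta}(N^{1/3})$ exponent in the input size $N$; it is also here that the Jerrum--Sinclair--Vigoda barrier is surmounted, since the identical reduction applied to a nonnegative worst-case matrix would only certify information computable in polynomial time by the JSV FPRAS, so negativity of the Gaussian ensemble is used essentially.

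The main obstacle --- the step on which all prior approaches stalled --- is to propagate error through an interpolation-based permanent reduction without the inherent $\exp(\Omega(n))$ relative-error blow-up (the convexity/noise barrier) swamping the conclusion, and in particular to keep the \emph{exponential} loss confined to the dilution parameter $k$ rather than to $n$, while beating the deficit incurred by dividing out the large complementary Gaussian minors. The two devices that break through are dilution, which trades a polynomial increase in instance size for a reduction of the interpolation degree from $n$ to $k$, and coefficient extraction against a mixed-sign instance, which replaces ill-conditioned extrapolation and leaves only a multiplicative error at a precision where worst-case permanent approximation with mixed signs remains $\sharpP$-hard. Fitting everything together --- calibrating $k$ and the curve to balance imported hardness against tolerated error; checking that the curve points stay within the $\exp(-O(n))$ total-variation slack of the Gaussian ensemble and that the generalized anticoncentration holds along the curve; bounding the condition number of the coefficient reconstruction; handling the square-root/sign ambiguity in passing from $|\Per|^2$ to $\Per$; and verifying that the union bound over query points costs only $\exp(-O(n))$ --- is the technical heart of the proof, while the Stockmeyer front end and the Toda back end are standard.
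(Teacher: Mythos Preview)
Your proposal has two genuine gaps that prevent it from going through, and both are places where the paper's proof does something substantively different from what you describe.

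\textbf{Gap 1: Transferring the $\exp(-O(n))$ guarantee from Haar submatrices to Gaussians.} You write that because the $n\times n$ submatrix of a Haar-random $O(n^2)\times O(n^2)$ orthogonal is within $1/\poly(n)$ total-variation distance of i.i.d.\ Gaussian, the $\BPP^{\NP}$ estimator works on Gaussians outside a set of measure $\exp(-O(n))$. This is false: a TV bound of $1/\poly(n)$ can only carry a $1-\exp(-O(n))$ success guarantee on Haar submatrices to a $1-1/\poly(n)$ guarantee on Gaussians, because the ``bad'' set could sit entirely in the region where the two measures differ. The paper singles this out as a key obstacle and resolves it with a separate random-matrix-theory result (Proposition~\ref{prp:small-small}), comparing the \emph{densities} of the two ensembles to show that events of Haar-submatrix probability $\delta$ have Gaussian probability at most $e^{O(\sqrt{n})}\delta + e^{-n}$. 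Without this, you lose the exponentially small failure rate at the very first step.

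\textbf{Gap 2: Dilution alone does not yield hardness of sampling.} You invoke the dilution/coefficient-extraction reduction underlying Theorem~\ref{thm:dilution-intro}, claiming the recovered $\Per(M)$ carries multiplicative error ``exponential in $k$'' but still below the $\sharpP$-hardness threshold. The accounting does not close. Stockmeyer supplies \emph{relative} error $1/\poly(n)$, i.e.\ \emph{additive} error $\gamma \approx n!/\poly(n)$. With box size $\Delta=O(1/k)$ (the largest TV-based arguments allow) and degree $k$ via the square method, Lemma~\ref{lem:square-trick-coex} gives top-coefficient error $\sim 2^{O(k)}\Delta^{-k}K^{-1}\gamma \approx 2^{O(k)} k^k \sqrt{n!}$, while the top coefficient itself is $|\Per(W')|\cdot|\Per(R_D)|\approx \sqrt{(n-k)!}$. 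The resulting relative error on $\Per(W')$ is $\sim 2^{O(k)} k^k n^{k/2}\gg 1$ for every $k\ge 1$. The paper says this explicitly in Section~1.2.3: the dilution proof ``still does not imply hardness of sampling from Stockmeyer counting,'' because there is ``no compensating factor in the reduction to `fight against' this exponential loss.'' The paper's actual reduction (Theorem~\ref{thm:hardness-of-probs}) replaces the zero-padding of dilution with an all-$1$s block (\emph{magnification}, Lemma~\ref{lem:magnification}), making the top coefficient $(n-k)!\,\Per(W')$ rather than $\sqrt{(n-k)!}\,\Per(W')$, and---crucially---uses the $1-\exp(-O(n))$ guarantee via the first rare-events Lemma~\ref{lem:probability-attenutation} to query \emph{out of distribution} at $t=O(1/\sqrt{n})$ rather than $t=O(1/n)$. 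That is where the exponentially small failure probability is actually consumed; your union-bound argument over $\poly(n)$ queries would go through with any $1-1/\poly(n)$ sampler and hence cannot be the place the hypothesis bites. The same out-of-distribution step is also why Conjecture~\ref{conj:anticoncentration} (shifted-mean anticoncentration) is genuinely needed rather than the standard PACC.
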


This theorem exponentially improves upon the trivial hardness of sampling statement. In particular, if the sampling algorithm succeeds with probability $1-2^{-\tilde{O}(n^3)}$, then the algorithm directly estimates the value of the worst case as the input size\footnote{Here $n$ is the number of photons, and the input is an $n\times n$ matrix of reals specified to $\tilde{O}(n)$ bits of precision.} is $\tilde{O}(n^3)$ (see Lemma \ref{lem:nontriviality}). We note another exponential improvement would be required to show the desired hardness of sampling for $1-1/\poly(n)$ fraction of experiments. 
This is the first hardness result for average-case multiplicative-error sampling.
This had been open for all quantum advantage proposals, as prior hardness results for \emph{computing} output probabilities do not imply  average-case \emph{sampling} hardness (even for exact sampling) due to the losses in the Stockmeyer reduction from sampling to computing. 

In order to show this result, we develop a suite of new techniques that allow us to tolerate an exponential error blowup in the worst-to-average-case reduction, overcoming the convexity barrier of \cite{Aaronson2013}.
This is achieved by ``magnifying'' the worst-case permanent value to tolerate more error in the reduction, among other improvements.

This still falls short of proving PGC---the bottleneck is that the average-case algorithm can only compute permanents of matrices which are close in total variation distance to i.i.d. Gaussian, which limits the error tolerance.
To overcome this bottleneck, we show that if an average-case algorithm works with sufficiently high probability, then it can also compute permanents ``out of distribution'' in TV distance. 
This uses special properties of the Gaussian measure and also requires proving new results in random matrix theory regarding submatrices of Haar-random orthogonals.
Our work opens up the possibility that one could prove the classical hardness of sampling, even \emph{without} proving PGC, by improving some of the parameters of these new tools.
Interestingly, this result pertains only to real BosonSampling, and extending to the complex case requires solving an open problem in complex analysis---see Appendix \ref{sec:complex-squares}.

\subsection{Proof techniques}
\subsubsection{What controls robustness in the standard worst-to-average-case reduction?}\label{subsubsec:what-controls-robustness}
To explain our proof, it is helpful to briefly recall the average-case hardness proofs of \cite{Aaronson2013} and its subsequent improvements \cite{Bouland2021,krovicomm}. The basic idea is to use polynomial extrapolation to show the squared permanent is hard to compute on average, following Lipton \cite{lipton1991}.
Suppose we wish to compute the squared permanent of an arbitrary (worst-case) matrix $W\in\{0,\pm 1\}^{n\times n}$ using only the ability to compute most Gaussian permanents $R$ drawn from $\mcal N(0,1)^{n\times n}$.
We define a univariate family of matrices interpolating between $W$ and a single random choice of Gaussian matrix $R$:
\[A(t) = (1-t)R + tW\]
This family has three nice properties that enable the reduction: \textit{(i)}  $|\Per(A(t))|^2$ is a degree $2n$ polynomial in $t$, \textit{(ii)}  for small values of $t,$ $A(t)$ is close to i.i.d.\ Gaussian in total variation distance, and \textit{(iii)} $|\Per(A(1))|^2 = |\Per(W)|^2$.
This motivates a worst-to-average-case reduction whereby one computes $|\Per(W)|^2$ by computing the average-case permanents $|\Per(A(t))|^2$ at many small values of $t$, inferring the polynomial in $t$, and extrapolating it to $t=1$. This shows that computing average-case permanents, namely estimating the polynomial close to $t=0$, is as hard as computing a worst-case permanent, the polynomial at $t=1$.

What controls robustness, i.e.\ the additive error tolerance, in this worst-to-average-case reduction? In other words, what are the largest error bars we can tolerate on our estimates to the polynomial close to $t=0,$ and how do these errors accrue under polynomial extrapolation? 

Polynomial extrapolation is ill-conditioned, in the sense that errors in the values of the polynomial close to $t=0$ blow up exponentially under extrapolation to $t=1.$ Formally, we can quantify the error blowup 
using a discrete version of the Remez inequality that we prove in this work.\footnote{The Remez inequality is more commonly shown in a continuous form to bound sup norms of polynomials defined over measurable sets \cite{remez1936propriete}. We discretize the inequality so that it is more natural for a computer science setting and in particular our worst-to-average-case reduction.}  

\begin{restatable}[Discrete Remez inequality]{lemma}{discreteremez}
\label{lem:discrete-remez}
Let $\{x_j\}_{j=0}^{d}\subset [0,1]$ be a $\delta$-separated set of points, meaning that
$|x_i-x_j| \geq \delta$ for $i\not=j$.  Then if $p$ is a degree-$d$ polynomial and $L\geq1$,
\[
|p(L)| \leq (e^2(\delta d)^{-1} L)^{d} \max_{0\leq j\leq d} |p(x_j)|.
\]
\end{restatable}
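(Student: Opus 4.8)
The plan is to express $p(L)$ via Lagrange interpolation through the $\delta$-separated nodes $x_0,\dots,x_d$ and then bound the Lagrange basis polynomials at the point $L$. Concretely, since $p$ has degree $d$ and we are given its values at $d+1$ distinct points, we have the exact identity
\[
p(L) = \sum_{j=0}^{d} p(x_j) \prod_{k \neq j} \frac{L - x_k}{x_j - x_k}.
\]
Taking absolute values and pulling out $\max_j |p(x_j)|$, it suffices to show that $\sum_{j=0}^d \prod_{k\neq j} \frac{|L-x_k|}{|x_j - x_k|} \leq (e^2 (\delta d)^{-1} L)^d$, and in fact it is cleaner to bound each individual term by $(e^2(\delta d)^{-1}L)^d / (d+1)$ or simply to bound the whole sum directly. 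The numerator is easy: since all $x_k \in [0,1]$ and $L \geq 1$, each factor $|L - x_k| \leq L$, so $\prod_{k\neq j}|L - x_k| \leq L^d$.

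The heart of the argument is the lower bound on the denominator $\prod_{k \neq j} |x_j - x_k|$, and this is where the $\delta$-separation and the factor $(\delta d)^{-d}$ come in. The idea is that because the points are $\delta$-separated, if we sort the distances $\{|x_j - x_k| : k \neq j\}$ in increasing order, the $m$-th smallest is at least $m\delta/2$ (the points to the left of $x_j$ and the points to the right each form a $\delta$-separated set, so among any $m$ of them the closest is at distance $\geq \delta \lceil m/2 \rceil \geq m\delta/2$... more carefully, the $i$-th closest point on one side is at distance $\geq i\delta$). This gives
\[
\prod_{k\neq j} |x_j - x_k| \;\geq\; \Big(\prod_{i=1}^{\lceil d/2\rceil} i\delta\Big)\Big(\prod_{i=1}^{\lfloor d/2\rfloor} i\delta\Big) \;=\; \delta^d \,\lceil d/2\rceil!\,\lfloor d/2\rfloor!.
\]
Then Stirling's approximation, in the form $m! \geq (m/e)^m$, yields $\lceil d/2\rceil!\,\lfloor d/2\rfloor! \geq (d/(2e))^d$ up to lower-order factors, so $\prod_{k\neq j}|x_j - x_k| \geq \delta^d (d/(2e))^d$, hence each Lagrange term is at most $L^d / (\delta^d (d/(2e))^d) = (2e L / (\delta d))^d$. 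Summing $d+1$ such terms and absorbing the $d+1$ into the constant — using $d+1 \leq 2^d$ for $d \geq 1$, or more sharply using a slightly better denominator bound — gives $(e^2 L/(\delta d))^d$ as claimed.

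I expect the main obstacle to be the bookkeeping in the denominator bound: one must argue carefully that splitting the other nodes into those left of $x_j$ and those right of $x_j$, and using $\delta$-separation within each half, really does give the product of two factorials rather than something weaker, and then to check that Stirling plus the $(d+1)$-fold sum lands inside the stated constant $e^2$ rather than some larger absolute constant. A subtlety worth flagging: the worst case for the denominator is when $x_j$ sits at one end of the node set (say $x_j = x_0$), in which case all $d$ other nodes lie on one side and the product is only $\geq \delta^d \, d!$, which is actually \emph{larger} than the two-factorial bound — so the genuinely worst configuration is when $x_j$ is in the middle, and one should take the minimum over $j$ of these bounds, which is $\delta^d\lceil d/2\rceil!\lfloor d/2\rfloor!$ as above. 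Everything else (Lagrange interpolation, the trivial numerator bound, Stirling) is routine, so the proof should be short.
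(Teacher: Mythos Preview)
Your approach is essentially the paper's: Lagrange interpolation through the $d+1$ nodes, the trivial numerator bound $|L-x_k|\leq L$, and a lower bound on $\prod_{k\neq j}|x_j-x_k|$ coming from $\delta$-separation. The only methodological difference is in that last step: you split into left/right nodes and obtain $\prod_{k\neq j}|x_j-x_k|\geq \delta^d\lceil d/2\rceil!\,\lfloor d/2\rfloor!$, then invoke Stirling; the paper instead takes logarithms and uses the layer-cake identity
\[
\sum_{k\neq j}\log|x_j-x_k|^{-1}=\int_0^1 s^{-1}\,\#\{k:|x_j-x_k|<s\}\,ds
\]
together with $\#\{k:|x_j-x_k|<s\}\leq\min(2s/\delta,\,d)$, which integrates directly to $2d+d\log(d\delta)^{-1}$ and hits the constant $e^2$ without any Stirling bookkeeping. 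So the one concern you flagged---getting the sum of $d+1$ Lagrange terms inside the stated constant---is precisely where the paper's route is a bit cleaner, but your argument is correct and the two are otherwise interchangeable.
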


\begin{table}[b!]
    \centering
    \begin{tabular}{|p{7.5cm}|c|c|}
        \hline
        \textbf{Technique} & \textbf{Polynomial degree} & \textbf{Box size $\Delta$} \\
        \hline
        \raggedright Robust Berlekamp-Welch \cite{Bouland2021} & $2n$ & $1/n^2$ \\
        \raggedright Tighter TVD analysis \cite{krovicomm} & $2n$ & $1/n$ \\
        \raggedright \textit{(This work)} Square method, Lemma \ref{lem:square-trick-coex}& $n$ & $1/n$ \\
         \raggedright \textit{(This work)} Rare events lemmas \ref{lem:probability-attenutation} \& \ref{prp:small-small}
        & $n$ & $1/\sqrt{n}$ \\
        \raggedright \textit{(This work)} Dilution via coefficient extraction, Thm.\ \ref{thm:dilution-intro} & $n^\delta$ $\forall$ const.\ $\delta>0$ & $1/n^\delta$ $\forall$ const.\ $\delta>0$ \\
        \hline
    \end{tabular}
    \caption{Lemma \ref{lem:discrete-remez} tells us that estimating a degree $d$ polynomial to within $\pm \gamma$ at points $t\in[0,\Delta]$ incurs a blowup at $t=1$ of $\gamma (1/\Delta)^d.$ Our work introduces a suite of techniques, shown here, that decrease $d$ and increase $\Delta.$ }
    \label{tab:techniques}
\end{table}

Lemma \ref{lem:discrete-remez} has a very simple interpretation. If we take $p(t)$ to be the difference between the true polynomial $|\Per(A(t))|^2$ and the approximate polynomial obtained from estimates of the permanent at $\{t_j\},$ then $\max_{0\leq j\leq d} |p(t_j)|\eqqcolon \gamma$ is precisely the robustness, i.e.\ the maximum additive error tolerance on average-case values of the permanent. Then Lemma \ref{lem:discrete-remez} tells us that the error blowup $|p(1)|$ is bounded above by $\gamma(\delta d)^{-d},$ which for $\delta$-separated points in the interval $[0,\Delta]$ where $0<\Delta<1$ is $\gamma (1/\Delta)^d.$ We will refer to $\Delta$ as the \textit{``box size,''} which is determined by the largest value of $t$ such that the total variation distance between $A(t)$ and i.i.d.\ Gaussian is, say, $0.01.$ In short, estimating a degree $d$ polynomial to within $\pm \gamma$ at points $t\in[0,\Delta]$ incurs a blowup at $t=1$ of $\gamma (1/\Delta)^d.$ 

Posed in this way, we see that to increase the robustness of our worst-to-average-case reduction we need to reduce our effective polynomial degree $d$ or increase the box size $\Delta$ over which we estimate average-case values. 
In both the proofs of \cite{Bouland2021} and \cite{krovicomm} the main improvement was in reducing the distance of extrapolation, while keeping the same degree of polynomial ($2n$ for a squared permanent).
In particular in \cite{Bouland2021} the distance was reduced to $\Delta = O(1/n^2)$ by introducing a robust version of Berkelamp-Welch over the complex numbers.\footnote{We note similar results for BosonSampling could be obtained by the techniques of \cite{Kondo2021_robustness}.}
In \cite{krovicomm} the box size was improved to $\Delta=O(1/n)$ by a more sophisticated calculation of the total variation distance between $A(t)$ and Gaussian, which saves a factor of $n^{2n} = e^{2n\log n }$.

In this work, we will introduce a suite of techniques that improve both the polynomial degree $d$ and the ``box size'' $\Delta.$ We tabulate these techniques and their improvements in Table \ref{tab:techniques}. 

\subsubsection{Coefficient extraction: a new way to encode the permanent}

A natural approach to try to improve the robustness of this argument is to reduce the degree of the polynomial involved.
A simple observation is that for any $\varepsilon>0$, it is $\#\mathsf{P}$-hard to compute the permanent of an $n^\eps \times n^\eps$ matrix $W$ as well---this is simply polynomially shrinking the input size.
Therefore a natural way to improve the robustness is to try to make $W$ smaller, an idea we henceforth refer to as \emph{``dilution.''}
Using standard polynomial extrapolation arguments, this doesn't yield much progress.
That's because if we set $W$ to have small support---say with only $O(n^\varepsilon)$ nonzero entries---then $\Per(W)=0$. Trivially, a matrix must have at least $n$ non-zero entries for its permanent to be non-zero. This lower bounds how much one could gain by such arguments using extrapolation, and the best one can obtain by dilution is $e^{-3n\log n -O(n)}$ robustness\footnote{This is obtained by setting $W$ to be a (tiny) arbitrary matrix of size $n^\varepsilon\times n^\varepsilon$ in direct sum with an identity on the remaining $n-n^\varepsilon$ dimensions.}---which sits right at the convexity barrier.

To get around this obstacle, our first step is to change the worst-to-average-case reduction from a problem about polynomial extrapolation to a problem about polynomial \emph{coefficient extraction}.
We consider a one-parameter family of matrices
\[A(t) = R+tW_{\text{dilute}}\]
and consider the case that $W_{\text{dilute}}$ consists of a tiny $n^\varepsilon$-sized worst case matrix $W'$ in direct sum with the all $0$'s matrix on the remaining $n-n^\varepsilon$ dimensions.
The key point of this construction is, even though the value of $|\Per(A(1))|^2$ is not what we want (as $A(1)=R+W_{\text{dilute}}$), the coefficients of the polynomial $|\Per(A(t))|^2$ do encode information about $\Per(W')$. In particular, the degree of the polynomial $|\Per(A(t))|^2$ is now $n^{2\varepsilon}$, and the top coefficient is $|\Per(W')|^2|\Per(R_D)|^2$, where $R_D$ is the bottom righthand minor of $R$ of dimension $n-n^\varepsilon$ (see Figure \ref{fig:dilution}):
\[|\Per(A(t))|^2 = |\Per(W')|^2|\Per(R_D)|^2 t^{2n^\varepsilon} + \displaystyle\sum_{\ell=0}^{n^{2\varepsilon}-1} c_\ell t^{\ell} \]
where the $c_\ell$ are some other coefficients which depend (in some complicated manner) on the entries of $R$ and $W$.
To see this, simply note that any term in the permanent which picks up all possible factors of $t$ must take all of its entries in the first $n^\varepsilon$ rows from the upper left submatrix.

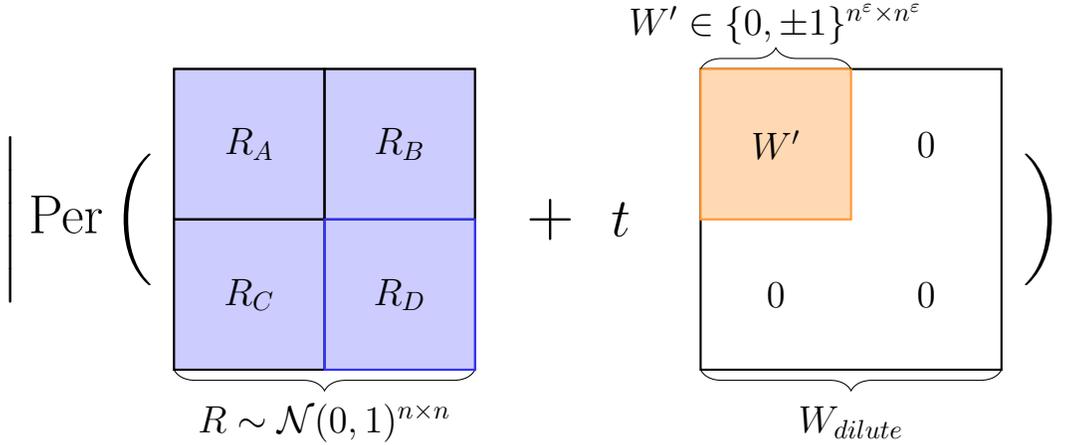
\begin{figure}[t!]

    \centering
    \begin{tikzpicture}[scale=2]

        \node[anchor=east] at (0,1) {\huge $\Bigg| \Per \bigg($};  
        \hspace{0.15cm}

        \draw[thick] (0,0) rectangle (2,2);
        \fill[blue!20] (0,0) rectangle (2,2);

        \node at (1,-0.35) {\textbf{\Large $R \sim \mcal N(0,1)^{n \times n}$}};
        \draw[decorate, decoration={brace, amplitude=8pt, mirror}] (0,0) -- (2,0) node[midway,xshift=0.4cm, yshift=-5pt] {};

        \node at (2.5,1) {\Huge $+$};

        \node at (2.9,1) {\huge \textbf{ $t$}};

        \draw[thick] (3.5,0) rectangle (5.5,2);

        \node at (4,0.5) {\textbf{\Large $0$}};
        \node at (5,0.5) {\textbf{\Large $0$}};
        \node at (5,1.5) {\textbf{\Large $0$}};

        \draw[thick] (0,0) rectangle (1,1);
        \draw[thick] (0,1) rectangle (1,2);
        \draw[thick] (1,1) rectangle (2,2);
        \draw[thick, blue!80] (1,0) rectangle (2,1);

        \node at (0.5,0.5) {\Large $R_{C}$};
        \node at (1.5,0.5) {\Large $R_{D}$};
        \node at (0.5,1.5) {\Large $R_{A}$};
        \node at (1.5,1.5) {\Large $R_{B}$};

        \fill[orange!30] (3.5,1) rectangle (4.5,2);  
        \draw[thick, orange!80] (3.5,1) rectangle (4.5,2);  

        \node at (4,1.5) {\textbf{\Large $W'$}};

        \draw[decorate, decoration={brace, amplitude=8pt}] (3.5,2) -- (4.5,2) node[midway,xshift=0.4cm, yshift=5pt] {};

        \node at (4,2.3) {\textbf{\Large $W' \in \{0,\pm 1\}^{n^\varepsilon \times n^\varepsilon}$}};

        \draw[decorate, decoration={brace, amplitude=8pt, mirror}] (3.5,0) -- (5.5,0) node[midway,xshift=0.4cm, yshift=-5pt] {};

        \node at (4.5, -0.35) {\textbf{\Large $W_{{dilute}}$}};

        \hspace{0.15cm}  
        \node[anchor=west] at (5.5,1) {\huge $\bigg) \Bigg|$};  

    \end{tikzpicture}
    \caption{In Theorem \ref{thm:dilution-intro}, we extract the coefficient of the polynomial $\vert\Per(R+tW_{dilute})\vert,$ where $R$ is a matrix of standard normals and $W_{dilute}$ has a worst-case matrix in its upper left block of size $n^\varepsilon\times n^\varepsilon$ for any constant $\varepsilon>0,$ with all other matrix entries being 0. The top coefficient of this polynomial is $\left|\Per W'\vert\vert \Per R_{D}\right|,$ where $R_D$ is the complementary minor to $W'.$}
    \label{fig:dilution}
\end{figure}

With this insight in hand, we can now give a new worst-to-average-case reduction for the permanent based on coefficient extraction: to compute $|\Per(W')|^2$ for some worst-case matrix $W' \in \{0,\pm1\}^{n^\varepsilon \times n^\varepsilon}$, pick many small values of $t$ ($t=O(1/n^\varepsilon)$ suffices by prior arguments) and compute $|\Per(A(t))|^2$ using our average-case algorithm. Then ask the $\NP$ oracle to give us a polynomial of degree $2n^\varepsilon$ which approximately matches these values.
Now look at the top coefficient of that polynomial, and divide by the value of $|\Per(R_D)|^2$.
Crucially, we can estimate the value of $|\Per(R_D)|^2$ to small multiplicative error, as this is another average-case instance.
As multiplicative error only adds under division, this now gives us a \emph{multiplicative} estimate for $|\Per(W')|^2$.
In other words, our algorithm translates relative error in the average case to relative error in the worst case.

We show that the overall robustness of this algorithm is merely $O(n^\delta)$ far in the exponent from showing quantum advantage, for any $\delta>0$ (Theorem \ref{thm:dilution-intro}).
The key point is that our polynomial now has degree $2n^\varepsilon$ rather than $2n$, and as such polynomial coefficient extraction incurs exponentially less error blowup. As noted earlier, our proof crosses the Jerrum-Sinclair-Vigoda barrier as this argument intrinsically requires that $W$ have mixed signs. 

The corollary for RCS follows by a similar dilution argument---one simply picks a worst case random circuit which is a concatenation of an $n^\varepsilon$ qubit worst case instance with an $(n-n^\varepsilon)$-sized random instance, and applies prior worst-to-average-case reductions \cite{movassagh2023hardness,Bouland2021,Kondo2021_robustness}.
See Appendix \ref{sec:dil-rcs} for details.

\subsubsection{Overcoming the convexity barrier: square method and magnification lemma}

While this first result exponentially improves on prior work, it is natural to ask how much closer we are to proving the Permanent-of-Gaussians Conjecture, or more generally to establishing hardness of sampling.
The above results are obtained by diluting the worst case instance size so as to lessen the error incurred by coefficient extraction.
However, the amount of error blowup relative to the worst case instance size has not improved.
At a deeper level, despite crossing the Jerrum-Sinclair-Vigoda barrier, the proof still does not imply hardness of sampling from Stockmeyer counting.
This is because Stockmeyer counting gives a $\BPP^\NP$ algorithm for approximating these squared permanents to $1/\poly(n)$ multiplicative error, but the worst-to-average-case reduction then blows up this error exponentially.
There is no compensating factor in the reduction to ``fight against'' this exponential loss. In other words, we have not yet crossed the convexity barrier. 

In our next set of results, we extend the coefficient extraction technique to cross the convexity barrier.
In particular we 
prove a new worst-to-average-case reduction for the permanent that can tolerate exponential losses from coefficient extraction, by developing two new techniques: the \emph{``square method''} and \emph{``magnification.''}

To do this, it is helpful to take a step back to examine what happens with \emph{dense} worst case matrices with our new coefficient extraction approach.
We apply two new modifications to coefficient extraction which improve the robustness of the dense case from $e^{-4n\log n -O(n)}$ \cite{krovicomm} to $e^{-1.5n\log n - O(n)}$.
While these modifications appear simple at first glance, we will see they introduce a term which we can use to combat the error blowup from coefficient extraction.
This dense result may at first look like a step backwards, but we will later show that this result is strong enough to imply a hardness of sampling theorem.

The first idea to improve robustness in the dense case, which we call the \emph{``square method,''} is to simply use the fact that $|\Per(A(t))|^2$ is the square of a polynomial to reduce the degree of coefficient extraction. 
Suppose our worst-case matrix $W$ is dense and define $A(t) \coloneq R+tW$ as before. 
While $|\Per(A(t))|^2$ is a degree $2n$ polynomial, trivially we have that
\[|\Per(A(t))|^2 = p(t)^2\] for some degree-$n$ polynomial $p(t)$.
In our reduction, after (approximately) computing $p(t)^2$ at many values of $t$ using our average-case algorithm, we can ask the $\NP$ oracle to give us the underlying degree $n$ polynomial $p(t)$ which squares to the correct value (up to the error tolerance in the average-case computation).
For real-value matrices, $p(t)$ is real, so is uniquely defined up to a sign.
Again the highest coefficient of this polynomial (now the coefficient of $t^n$) contains the value of $\Per(W)$ that we wish to compute.

One might a priori guess this simple change merely reduces the effective polynomial degree from $2n$ to $n$.
Surprisingly, it has more benefit than that!
In particular, suppose our average-case algorithm computes $p(t)^2$ to additive error $\pm \gamma$ at the points $t$ near $0$. How much error is induced on $p(t)$ itself? It turns out, $p(t)$ is estimated to \emph{less error} than $\gamma$.
Suppose our $\NP$ oracle gives us a polynomial $\tilde{p}(t) = p(t) + e(t)$ where $e(t)$ is some error polynomial.
Then trivially we have
\[ p(t)^2 \pm \gamma = (p(t)+e(t))^2 = p(t)^2 + 2 p(t)e(t) + e(t)^2\]
As our errors are vanishingly small in relative terms, the cross error term dominates, and we see
\[|e(t)| \leq \frac{\gamma}{p(t)}\]
at points $t$ near $0$.
In other words, we get to divide our error by the average-case value of the permanent, before we propagate the error through coefficient extraction. 
By assuming the Permanent Anticoncentration Conjecture \ref{conj:pacc-aa}, this value is $\sqrt{n!}$ to leading order, saving us an additional $\exp(\frac{n\log n}{2})$ beyond what we might have otherwise expected to gain in additive terms.
This observation gets more interesting if we view it in relative terms. 
This correction factor can be seen as ensuring the relative error on $p(t)$ is the same (up to a constant factor of 2) as the relative error on $p(t)^2$, as relative error is preserved (up to constants) under taking powers.

Observe that degree reduction via the square method kept our error constant in relative terms on our underlying polynomial. On the other hand, polynomial coefficient extraction is naturally sensitive to error in additive terms. 
Our next observation is that we can use this mismatch to \emph{reduce} the coefficient extraction error blowup in relative terms, by an exponential amount.
The basic idea is to now consider a worst case matrix with two components: first, a smaller and possibly negative-entry matrix $W'$ in the upper left hand corner of size $n^\varepsilon$, in direct sum with a larger matrix of all $1$'s of dimension $n-n^\varepsilon$ (see Figure \ref{fig:all-tricks}).

\begin{figure}[t!]
    \centering

    \definecolor{MediumGreen}{RGB}{0, 128, 0}
    
    \begin{tikzpicture}[scale=2]

        \node[anchor=east] at (0,1) {\huge $\Bigg| \Per \bigg($};  
        \hspace{0.15cm}

        \draw[thick] (0,0) rectangle (2,2);
        \fill[blue!20] (0,0) rectangle (2,2);
        \node at (1,1) {\textbf{\Large $R \sim \mcal N(0,1)^{n \times n}$}};

        \node at (2.5,1) {\Huge $+$};

        \node at (2.9,1) {\huge \textbf{ $t$}};

        \draw[thick] (3.5,0) rectangle (5.5,2);

        \fill[orange!30] (3.5,1.5) rectangle (4,2); 
        \draw[thick, orange!80] (3.5,1.5) rectangle (4,2);  

        \fill[MediumGreen!30] (4,0) rectangle (5.5,1.5);  
        \draw[thick, MediumGreen!80] (4,0) rectangle (5.5,1.5);

        \node at (3.75,0.75) {\textbf{\Large $0$}};
        \node at (4.75,1.75) {\textbf{\Large $0$}};

    \node[anchor=center] at (4.75, 0.75) {\large
        \begin{tabular}{ccccc}
            1 & 1 & 1 & 1 & 1 \\
            1 & 1 & 1 & 1 & 1 \\
            1 & 1 & 1 & 1 & 1 \\
            1 & 1 & 1 & 1 & 1 \\
            1 & 1 & 1 & 1 & 1 \\
        \end{tabular}
    };

        \draw[decorate, decoration={brace, amplitude=8pt}] (3.5,2) -- (4,2) node[midway,xshift=0.4cm, yshift=5pt] {};

        \node at (4,2.3) {\textbf{\Large $W' \in \{0,\pm1\}^{n^\varepsilon \times n^\varepsilon}$}};

        \draw[decorate, decoration={brace, amplitude=8pt, mirror}] (3.5,0) -- (5.5,0) node[midway,xshift=0.4cm, yshift=-5pt] {};

        \node at (4.5, -0.3) {\textbf{\Large $W$}};

        \hspace{0.15cm}  
        \node[anchor=west] at (5.5,1) {\huge $\bigg) \Bigg|$};  

    \end{tikzpicture}
    \caption{Polynomial $\vert\Per(R+tW)\vert,$ whose top coefficient is $\vert\Per W'\vert (n-n^\varepsilon)!.$ This is the ensemble under consideration in Thm.\ \ref{thm:hardness-of-probs} where we coefficient-extract the unsquared permanent via the square method and use worst case magnification by padding $W$ with a matrix of $1$s.}
    \label{fig:all-tricks}
\end{figure}
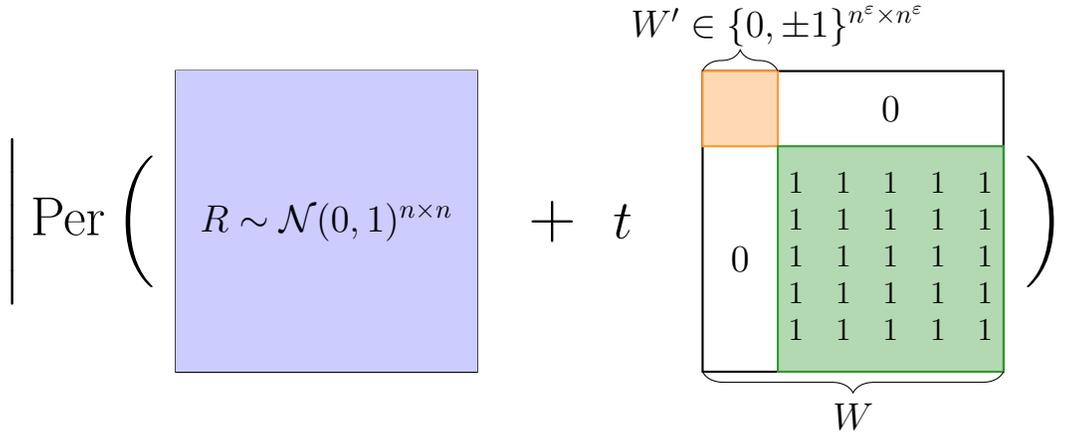

Interestingly, including this large-permanent submatrix in our worst case actually \emph{improves} our robustness in the worst-to-average-case reduction!
This is because for this scheme, the top coefficient of the polynomial $p(t)=\Per(A(t))$ is equal to $\Per(W')(n-n^\varepsilon)!$, where this \emph{magnification factor}, $(n-n^\varepsilon)!$, comes from the value of the permanent of the bottom right hand submatrix. 
Therefore, to compute $\Per(W')$ it suffices to estimate this top coefficient to additive error
$\frac{1}{3}(n-n^\varepsilon)!$
because $\Per(W')$ is integer-valued, so this error is removed by rounding to the nearest integer multiple of $(n-n^\varepsilon)!$.
In other words, the fact that this permanent of the all $1$s submatrix is big allows for more error tolerance in the reduction, overall improving the robustness.
We show this trick can be generalized to the more general formula:
\begin{restatable}[Magnification of robustness in worst-to-average-case reductions]{lemma}{magnification}
    It is $\sharpP$-hard to compute random Gaussian permanents on average to within relative error 
\[\gamma_{rel} \leq \frac{(n-n^\varepsilon)!}{\left|\Per R\right|}\cdot \Delta^n \cdot 2^{-O(n)}\]
for any constant $\varepsilon>0$, where $R\sim\mcal N(0,1)^{n\times n}$, 
and where $\Delta$ is the ``box size'' as in Sec.\ \ref{subsubsec:what-controls-robustness}, assuming a slight generalization of Permanent Anticoncentration, Conjecture \ref{conj:anticoncentration}.
    \label{lem:magnification}
\end{restatable}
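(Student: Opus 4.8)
The plan is to combine the ``square method'' with the ``magnification'' construction of Figure \ref{fig:all-tricks} and run the coefficient-extraction worst-to-average-case reduction, tracking how each source of error propagates in \emph{relative} terms. Concretely, fix a worst-case matrix $W' \in \{0,\pm1\}^{n^\varepsilon\times n^\varepsilon}$ and let $W = W' \oplus J_{n-n^\varepsilon}$ be the block-diagonal padding by the all-ones matrix, so that $p(t) \coloneq \Per(R + tW)$ is a degree-$n$ polynomial whose top coefficient is $\Per(W')\cdot(n-n^\varepsilon)!$. The first step is to verify this top-coefficient identity: the only permutations contributing a factor of $t^n$ to $\Per(R+tW)$ are those whose support is entirely inside the support of $W$, which factorizes into a permutation of the $W'$-block (contributing $\Per(W')$) times a permutation of the all-ones block (contributing $(n-n^\varepsilon)! = \Per(J_{n-n^\varepsilon})$).

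Second, I would run the reduction. Given an average-case algorithm that estimates $|\Per(A)|^2 = |p(t)|^2$ to additive error $\pm\gamma_{\mathrm{abs}}$ on a $\delta$-separated grid of $n+1$ points in $[0,\Delta]$, invoke the $\NP$ oracle to extract a real degree-$n$ polynomial $\tilde p(t)$ whose square matches the data to within the error tolerance; by the square-method analysis in the excerpt, the resulting additive error on $\tilde p(t)$ itself near $t=0$ is at most $\gamma_{\mathrm{abs}}/|p(t)| \approx \gamma_{\mathrm{abs}}/\sqrt{n!}$, using Permanent Anticoncentration (here its slight generalization, Conjecture \ref{conj:anticoncentration}) to lower-bound $|p(t)|$ on the grid by $\sqrt{n!}\cdot 2^{-O(n)}$. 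Equivalently, the \emph{relative} error $\gamma_{rel}$ on the average-case permanent transfers, up to a factor $2$, to a relative error $\approx\gamma_{rel}$ on $\tilde p(t)$ near $0$, i.e.\ an additive error $\lesssim \gamma_{rel}\cdot|p(t)| \lesssim \gamma_{rel}\sqrt{n!}\cdot 2^{O(n)}$ on the values of $p$. Then apply the Discrete Remez inequality (Lemma \ref{lem:discrete-remez}) with $L=1$: the additive error on the top coefficient of $\tilde p$ is the grid-error times $(e^2(\delta n)^{-1})^n = \Delta^{-n}\cdot 2^{O(n)}$. So the additive error on our estimate of $\Per(W')\cdot(n-n^\varepsilon)!$ is at most $\gamma_{rel}\cdot\sqrt{n!}\cdot\Delta^{-n}\cdot 2^{O(n)}$.

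Third, I would close the loop by rounding. Since $\Per(W')$ is an integer, we recover it exactly provided this additive error on $\Per(W')(n-n^\varepsilon)!$ is below $\tfrac13(n-n^\varepsilon)!$; rearranging,
\[
\gamma_{rel} \;\le\; \frac{(n-n^\varepsilon)!}{\sqrt{n!}}\cdot \Delta^n \cdot 2^{-O(n)}.
\]
Finally, I would massage this into the stated form by replacing $\sqrt{n!}$ with $|\Per R|$: under Permanent Anticoncentration $|\Per R| = \sqrt{n!}\cdot 2^{\pm O(n)}$ for a $1-o(1)$ fraction of $R$, so up to absorbing the discrepancy into the $2^{-O(n)}$ slack the bound reads $\gamma_{rel} \le \frac{(n-n^\varepsilon)!}{|\Per R|}\cdot\Delta^n\cdot 2^{-O(n)}$, as claimed. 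Computing $\Per(W')$ for worst-case $W'$ of polynomial size is $\#\mathsf{P}$-hard, which completes the reduction.

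I expect the main obstacle to be the anticoncentration step: justifying the pointwise lower bound $|p(t_j)| = |\Per(R+t_jW)| \gtrsim \sqrt{n!}\cdot 2^{-O(n)}$ simultaneously at all $n+1$ grid points. A single Gaussian permanent anticoncentrates only in the original Aaronson--Arkhipov sense (Conjecture \ref{conj:pacc-aa}), but here the relevant matrix is the \emph{shifted, partially deterministic} matrix $R + t_jW$ with $W$ having a rank-one all-ones block, and we need this for every $t_j$ in the box at once. This is exactly why the statement invokes the ``slight generalization'' Conjecture \ref{conj:anticoncentration} rather than the vanilla PACC; I would set up the lemma to quote that generalization as a black box, verify it is applicable to the distribution of $R + t_jW$ for $t_j = O(\Delta)$ (using that such shifts are close to i.i.d.\ Gaussian in total variation by the box-size analysis of Sec.\ \ref{subsubsec:what-controls-robustness}), and take a union bound over the $O(n)$ grid points, which costs only an additional polynomial factor absorbed into $2^{-O(n)}$.
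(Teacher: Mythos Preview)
Your proposal is correct and follows essentially the same route as the paper's proof: the $W'\oplus J_{n-n^\varepsilon}$ construction, square-method extraction of a degree-$n$ polynomial via the $\NP$ oracle, Conjecture~\ref{conj:anticoncentration} to lower-bound $|p(t_j)|$ on the grid, coefficient-extraction blowup $\Delta^{-n}2^{O(n)}$, and rounding via the $(n-n^\varepsilon)!$ magnification factor. Two small technical corrections that do not change your final bound: the top-coefficient error is controlled by Lemma~\ref{lem:square-trick-coex} (which chains the discrete Remez inequality with the Chebyshev top-coefficient bound of Lemma~\ref{lem:top-coeff}), not by the Remez inequality evaluated at $L=1$; and the sign disambiguation in the square method requires $|S|\ge 2n+1$ grid points rather than $n+1$.
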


In particular this worst-to-average-case reduction now has an exponential term---namely the ratio of the magnification factor to the average-case permanent---fighting against the exponential error blowup of polynomial coefficient extraction. 
For BosonSampling, this ratio is roughly $n!/\sqrt{n!} \approx \exp(\frac{n\log n}{2})$ which fights against a coefficient extraction error of $e^{-n\log n-O(n)}$,  resulting in a net relative error of $\approx\exp(-\frac{n\log n}{2})$ (to leading order) needed in the average case to show hardness of sampling. 
To show hardness of sampling in the average case, this means we ``merely'' need to reduce the exponential loss of coefficient extraction to a weaker exponential, or increase the value of the worst-case matrix (now all $1$s) by an exponential factor.
This is not an easy problem---these terms are interrelated, so say simply boosting the norm of the all $1$s matrix simultaneously improves the magnification-to-average-case ratio and worsens the coefficient extraction error, and does not show hardness of sampling.
However, we now finally have a term fighting \emph{against} coefficient extraction loss.
We note a similar lemma can be shown for RCS as well---in particular for a real version of RCS with random orthogonal gates (see Section \ref{sec:relative-error})---but does not yield any hardness of sampling results (see Discussion \ref{sec:discussion}).

\begin{figure}[t!]
    \centering
    \begin{tikzpicture}
        \draw[thick] (-4.5,0) -- (4.5,0);
        \foreach \x in {-3,-2,-1,0,1,2,3} {
            \draw[thick] (\x,0.2) -- (\x,-0.2);
            \node[below] at (\x,-0.2) {\small $\x$};
        }
        \node[above] at (0,0.5) {\small $\Per W' \in \mathbb{Z}$};
        \node[left] at (-4.5,0) {\small $\dots$};
        \node[right] at (4.5,0) {\small $\dots$};
        
        \draw[thick] (-7,-3) -- (7,-3);
        \foreach \x in {-3,-2,-1,0,1,2,3} {
            \draw[thick] (2*\x,-2.8) -- (2*\x,-3.2);
        }
        
        \node[below] at (-6,-3.2) {\small $-3(n-n^{\varepsilon})!$};
        \node[below] at (-4,-3.2) {\small $-2(n-n^{\varepsilon})!$};
        \node[below] at (-2,-3.2) {\small $-(n-n^{\varepsilon})!$};
        \node[below] at (0,-3.2) {\small $0$};
        \node[below] at (2,-3.2) {\small $(n-n^{\varepsilon})!$};
        \node[below] at (4,-3.2) {\small $2(n-n^{\varepsilon})!$};
        \node[below] at (6,-3.2) {\small $3(n-n^{\varepsilon})!$};
        
        \node[above] at (0,-2.5) {\small $\Per W = (n-n^{\varepsilon})! \Per W'$};
        \node[left] at (-7,-3) {\small $\dots$};
        \node[right] at (7,-3) {\small $\dots$};
        
        \draw[thick,->] (-5.5,0) to[out=180,in=-180] (-5.5,-2.5);
        
        \node at (-7.5,-1.25) {\includegraphics[width=2cm]{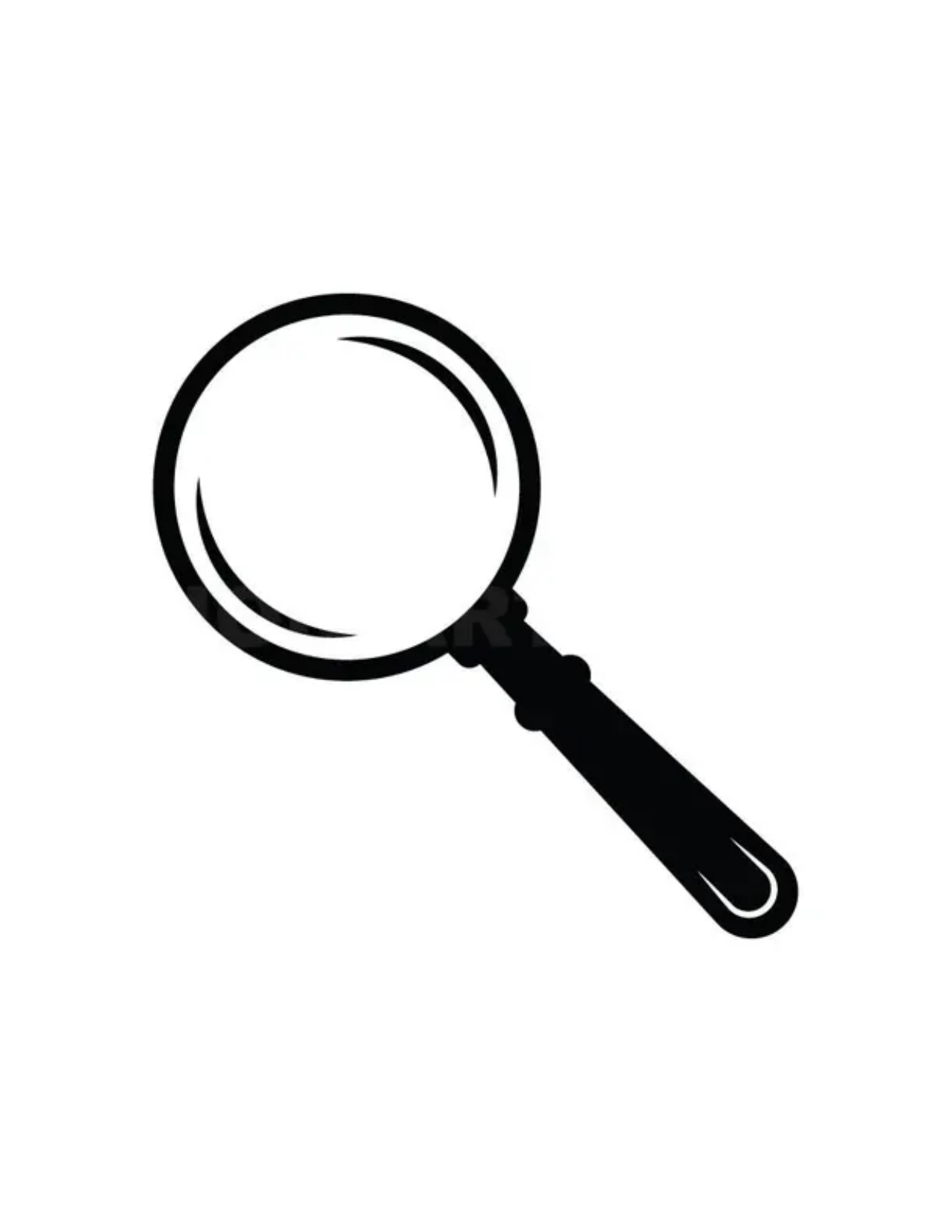}};
        
    \end{tikzpicture}
    \caption{Magnification: variation between the possible values of $\Per W',$ which is integer-valued, are \textit{magnified} by a factor of $(n-n^\varepsilon)!$ in $\Per W$ for any $\varepsilon>0$ we choose. We can instantiate this with e.g.\ the matrix $W$ in Fig.\ \ref{fig:all-tricks}. Since the gradations in $\Per W$ have been made so much wider, we can sustain more error while still computing $\Per W$ precisely (see Lemma \ref{lem:magnification}). 
    }
    \label{fig:mag}
\end{figure}

\subsubsection{Average-case hardness of sampling using random rare events lemmas}
Finally, we apply this new worst-to-average-case reduction to obtain the first nontrivial hardness of average-case sampling for BosonSampling.
This uses techniques specific to BosonSampling, which to the best of our knowledge do not carry over to other quantum advantage schemes.

To show this, we consider our new worst-to-average-case reduction, whose relative error robustness is given by Lemma \ref{lem:magnification}. To show an average-case hardness of sampling result via Stockmeyer, we need our relative error tolerance for $\#\mathsf{P}$-hardness to be inverse polynomial.
Our compensating ratio of the magnification factor to the average-case permanent is $\exp(\frac{n\log n}{2})$, so we can only afford this much error from coefficient extraction.
Unfortunately this is not enough of a loss budget to do a standard worst-to-average-case reduction.
This is because in these reductions, we compute values of $|\Per(A(t))|^2$ for values of $t$ which are small enough so that $A(t)$ is distributed close in total variation distance to Gaussian, to ensure our average-case algorithm correctly computes $A(t)$ with high probability.
To ensure closeness of total variation distance to constant error, $t$ must be $O(1/n)$---this calculation (due to Krovi \cite{krovicomm}) is optimal.
Recalling from Sec.\ \ref{subsubsec:what-controls-robustness} the discrete Remez inequality, Lemma \ref{lem:discrete-remez}, this yields an error blowup of $\sim n^n =e^{n\log n}$.
There is no hope of closing this gap with a standard total variation distance approach.

To get around this issue, our key idea is to go \emph{out of distribution}.
That is, what if we query points $A(t)$ which are \emph{far} from Gaussian distributed?
Clearly if our average-case algorithm could successfully compute the permanent of these matrices, then this would improve our robustness, as it would allow us to query points at much larger values of $t$, and hence reduce our error blowup.
For example, if we could successfully compute $|\Per(A(t))|^2$ for points $t=O(1/\sqrt{n})$, our coefficient extraction error would be halved in the exponent, and we could show hardness of average-case sampling!
However, the issue is these matrices $A(t)$ at large values of $t$ are far in total variation distance from Gaussian, so there is no trivial guarantee our algorithm works here.
In fact total variation distance arguments are useless here; the TV distance between $A(t)$ and Gaussian is of the form $1-\delta$ for a small value of $\delta$. Even if we assume our average-case algorithm works perfectly, a TV distance argument would only say it must work with probability at least $\delta$ on these points.
This is insufficient for our polynomial coefficient extraction techniques.

\begin{figure}[t!]
    \centering
    \begin{tikzpicture}[scale=1.85] 
        \def\shift{0.8} 
        \def\scalefactor{1.2}

        \draw[->] (-4,0) -- (4.5,0) node[right] {$t$};

        \draw[thick, black, domain=-4:4, samples=100] 
            plot (\x, {exp(-0.5 * (\x + \shift)^2) * \scalefactor});

        \draw[thick, black, domain=-4:4, samples=100] 
            plot (\x, {exp(-0.5 * (\x - \shift)^2) * \scalefactor});

        \draw[dashed] (1.5, 0) -- (1.5, 1.2);

        \begin{scope}
            \clip (-4,0) rectangle (1.5,1.2); 
            \fill[blue, opacity=0.2] 
                plot[domain=-4:1.5, samples=100] (\x, {exp(-0.5 * (\x + \shift)^2) * \scalefactor}) --
                (1.5,0) -- (-4,0) -- cycle;

            \fill[blue, opacity=0.2] 
                plot[domain=-4:1.5, samples=100] (\x, {exp(-0.5 * (\x - \shift)^2) * \scalefactor}) -- 
                (1.5,0) -- (-4,0) -- cycle;
        \end{scope}

        \begin{scope}
            \clip (1.5,0) rectangle (4,1.2); 
            \fill[purple, opacity=0.3] 
                plot[domain=1.5:4, samples=100] (\x, {exp(-0.5 * (\x + \shift)^2) * \scalefactor}) -- 
                (4,0) -- (1.5,0) -- cycle;
        \end{scope}

        \begin{scope}
            \clip (1.5,0) rectangle (4,1.2); 
            \fill[orange, opacity=0.3] 
                plot[domain=1.5:4, samples=100] (\x, {exp(-0.5 * (\x - \shift)^2) * \scalefactor}) -- 
                (4,0) -- (1.5,0) -- cycle;
        \end{scope}
        
    \end{tikzpicture}
    \caption{``Rare events'' lemma \ref{lem:probability-attenutation} shows that a function that computes permanents of $\mcal N(0,1)^{n\times n}$ matrices with $1-\exp(-O(n))$ probability also computes permanents of $\mcal N(t,1)^{n\times n}$ matrices with $1-1/\poly(n)$ probability for $t=O(1/\sqrt{n})$. That is, an algorithm that works very often over a Gaussian distribution will also work reasonably often on a shifted Gaussian distribution. The figure depicts that events deep in the tail of one Gaussian are still tail events for a shifted Gaussian, with successful events colored blue and failure events colored orange.} 
    \label{fig:shifted-gaussians}
\end{figure}
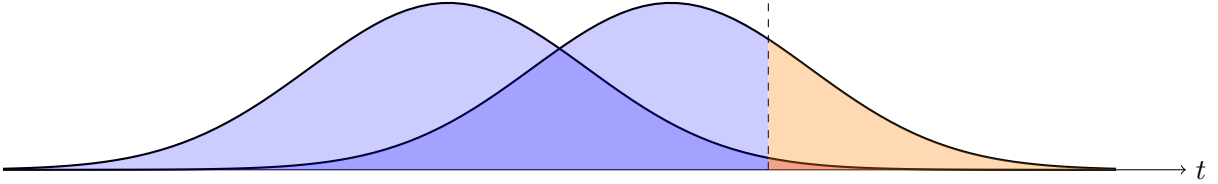

Instead, in our proof we go beyond total variation distance analysis to show that we can successfully query points $A(t)$ at high values of $t$, so long as our average-case algorithm works with very high probability.
The basic idea is this: suppose our average-case algorithm works near perfectly, say with probability $1-\delta$ over the choice of Gaussian matrix.
We want to show it also works if we query it on these points $A(t)$ which are far from Gaussian.
A basic observation is that these $A(t)$ are also Gaussian distributed, but with a shifted mean.
We prove a simple lemma, Lemma \ref{lem:probability-attenutation}, showing that rare events under one Gaussian distribution remain rare under another Gaussian, so long as their probability is less than $e^{-d^2}$ where $d$ is the distance between the means.
Intuitively this is because if an event is extremely far from the mean of a Gaussian $G_1$ (much further than the distance to the mean of $G_2$) it is also far from the mean of $G_2$ as well, and hence rare under $G_2$ (see Figure \ref{fig:shifted-gaussians}).
We then apply this lemma to the event that the average-case algorithm fails under the standard Gaussian.
If this is sufficiently rare for the average case, this is also rare for the distribution of $A(t)$, and hence the algorithm works with high probability to compute $A(t)$ as well.
There is a loss in this argument which forces $\delta$ to be exponentially small. 
However, the key point is that if our average-case algorithm works with extremely high probability, then it can also evaluate these points $A(t)$ at high values of $t$, and hence lessen the coefficient extraction error in our reduction.

We show this can be leveraged to show a nontrivial hardness of sampling result for an exact (i.e.\ multiplicative-error) average-case sampler, following the outline above---but the proof requires several additional technical innovations.
First, an average-case sampler that works with very high probability $1-\delta$ over the choice of BosonSampling experiment \emph{does not} immediately imply (by Stockmeyer counting) a $\BPP^\NP$ algorithm for computing Gaussian permanents with probability $1-\delta$. The issue is that submatrices of Haar random orthogonal matrices are not known to be exponentially close to Gaussian in TV distance, but rather have only been shown to be inverse polynomially close \cite{jiang-ma}. Thus setting the sampler success probability to $1-\delta$ where $\delta =2^{-O(n)}$ does not automatically yield a correspondingly good algorithm for computing Gaussian permanents.

To fix this we prove yet another ``rare events lemma,'' Proposition \ref{prp:small-small}, that allows us to transfer our algorithm for Haar submatrices to Gaussian matrices. The proof, which may be of independent interest, requires showing new results in random matrix theory, exploiting properties of the probability densities and spectra of i.i.d.\ Gaussian matrices and submatrices of Haar orthogonals.

Second, for our algorithm to work we require $\Per(A(t))$ to anticoncentrate.
This is not guaranteed by the standard Permanent Anticoncentration Conjecture \ref{conj:pacc-aa} as these matrices are out of distribution.
We instead formulate a more general conjecture that shifted mean Gaussian permanents anticoncentrate:

\begin{restatable}[Anticoncentration of gently perturbed Gaussian permanents]{conjecture}{anticoncentration}
\label{conj:anticoncentration}
    There exists a polynomial $f$ such that for all $n$ and $\epsilon>0,$
    \[
    \Prob_{R\sim\mcal N(0,1)^{n\times n} } \left[\left|\Per(R+tW)\right| < \frac{\sqrt{n!}}{f(n,1/\epsilon)}\right]<\epsilon,
    \]
    for arbitrary matrix $W$ with entries bounded by $1$ and $t=O(\frac{1}{\sqrt{n}}).$
\end{restatable}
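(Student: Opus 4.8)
The plan is to derive Conjecture~\ref{conj:anticoncentration} from the ordinary Permanent Anticoncentration Conjecture~\ref{conj:pacc-aa}, exploiting that $R+tW$ is distributed as the mean-shifted Gaussian ensemble $\mathcal{N}(tW,1)^{n\times n}$ and transporting the small-ball estimate across the shift via a change of measure (equivalently, via the rare-events machinery behind Lemma~\ref{lem:probability-attenutation}). Concretely, the Radon--Nikodym derivative of $\mathcal{N}(tW,1)^{n\times n}$ with respect to $\mathcal{N}(0,1)^{n\times n}$ at a point $S$ is $\exp\!\big(t\langle W,S\rangle-\tfrac{t^2}{2}\|W\|_F^2\big)$, where $\langle\cdot,\cdot\rangle$ and $\|\cdot\|_F$ are the entrywise inner product and norm; note $\|tW\|_F\le t n=O(\sqrt n)$ since $t=O(1/\sqrt n)$ and $|W_{ij}|\le 1$. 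Writing $E_g=\{S:|\Per(S)|<\sqrt{n!}/g\}$, the goal is to bound $\Prob_{\mathcal{N}(tW,1)}[E_g]=\Expec_{\mathcal{N}(0,1)}[\One_{E_g}(S)\,e^{t\langle W,S\rangle-\frac{t^2}{2}\|W\|_F^2}]$ in terms of $\Prob_{\mathcal{N}(0,1)}[E_g]$, which is small by Conjecture~\ref{conj:pacc-aa}.

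The catch --- and the reason this is stated as a conjecture rather than a corollary of~\ref{conj:pacc-aa} --- is that the two Gaussians are $\Theta(\sqrt n)$ apart in mean, hence at total variation distance $\approx 1$, so the measure change is not gentle: a generic event of probability $\eps$ under $\mathcal{N}(0,1)$ can have probability close to $1$ under the shift. The transfer survives only when $E_g$ is already $e^{-\Omega(n)}$-rare under $\mathcal{N}(0,1)$, which is precisely the regime in which Lemma~\ref{lem:probability-attenutation} (applied with $d=\|tW\|_F$) operates. Forcing $\Prob_{\mathcal{N}(0,1)}[E_g]\le e^{-\Omega(n)}$ via Conjecture~\ref{conj:pacc-aa} requires $g=f(n,e^{\Omega(n)})$, which for a polynomial $f$ is exponential, so this reduction delivers only anticoncentration at radius $\sqrt{n!}/e^{O(n)}$. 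A hybrid argument that breaks $tW$ into $m=\Theta(\sqrt{n/\log n})$ equal pieces, moving the mean $m$ times and paying a union bound, improves this to $\sqrt{n!}/e^{O(\sqrt{n\log n})}$ --- sub-exponential, but still super-polynomial. The structural root is that $tW$ with $t=\Theta(1/\sqrt n)$ is \emph{not} a small perturbation of $R$: its first-order effect on the permanent, $t\langle W,\nabla\Per(R)\rangle=t\sum_{i,j}W_{ij}\Per(R^{(i,j)})$ (where $R^{(i,j)}$ deletes row $i$ and column $j$), has typical magnitude $\Theta(\sqrt{n!})$ for dense $W$, comparable to $\Per(R)$ itself.

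To close the gap one must show that the bad event $E_g$ does not align with the shift direction $W$, i.e.\ that $|\Per(R)|$ stays anticoncentrated after conditioning on the linear functional $\langle W,R\rangle$. Conditioning on $\langle W,R\rangle=s$ writes $R=R_\perp+\tfrac{s}{\|W\|_F^2}W$ with $R_\perp$ Gaussian on the hyperplane $W^\perp$, and for a typical $s=O(\|W\|_F)$ this is $\Per\!\big(R_\perp+O(1/n)\,W\big)$, a strictly smaller perturbation; I would try to iterate this ``deflation'' across a short orthonormal system spanning most of $\|W\|_F$. In parallel, as an unconditional handle I would compute moments directly: $\Expec_{\mathcal{N}(0,1)}[|\Per(R+tW)|^2]=\Theta(n!)$ --- the diagonal permutation pairs contribute $n!(1+O(t^2 n))$ and the off-diagonal pairs $\Theta(n!)$, since a column on which two permutations disagree contributes only a mean-product of size $\le t^2=O(1/n)$ while the derangement count is tame --- together with the analogous bound $\Expec[|\Per(R+tW)|^4]=\Theta((n!)^2)$, which already yields \emph{constant}-probability anticoncentration by Paley--Zygmund.

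The main obstacle is the one that has blocked the whole PGC program: there is no known route from constant-probability anticoncentration (or from Carbery--Wright, whose small-ball radius is exponentially small in $n$) to a $1-1/\poly(n)$ bound at radius $\sqrt{n!}/\poly(n)$, and the deflation recursion above does not obviously terminate with only polynomial loss. Consequently, with current techniques one can prove only the weaker $\sqrt{n!}/e^{O(\sqrt{n\log n})}$ form of the statement unconditionally, and the stated polynomial form only \emph{conditionally} on Conjecture~\ref{conj:pacc-aa} --- and even then with a sub-exponential rather than polynomial $f$, which is exactly the additional strength that Conjecture~\ref{conj:anticoncentration} asserts over~\ref{conj:pacc-aa}.
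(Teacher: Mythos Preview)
The statement you are addressing is a \emph{conjecture}, not a theorem: the paper does not prove it and does not claim to. The paper's treatment consists of (i) motivating the conjecture as interpolating between the standard PACC at $t=0$ and the unconditionally proven anticoncentration of \cite{eldar-mehraban,ji2021approximating} at $t\geq 1/\operatorname{polylog}(n)$ (Figure~\ref{fig:anticoncentration}), and (ii) presenting numerical evidence in Appendix~\ref{sec:num} (box plots of $|\Per(R+tW)|$ for $n=10$ at several values of $t\in[0,1/\sqrt n]$). There is no proof in the paper for you to compare against.

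That said, your analysis is valuable as an explanation of \emph{why} this is stated as a separate conjecture rather than derived from Conjecture~\ref{conj:pacc-aa}. You correctly identify the obstruction: the shift $tW$ with $t=\Theta(1/\sqrt n)$ and dense $W$ has Frobenius norm $\Theta(\sqrt n)$, so the two Gaussian measures are at total variation distance $1-o(1)$, and the rare-events transfer of Lemma~\ref{lem:probability-attenutation} only applies to events of probability $e^{-\Omega(n)}$ under the centered law. Your conclusion---that PACC plus measure change yields at best a sub-exponential radius $\sqrt{n!}/e^{O(\sqrt{n\log n})}$ rather than the polynomial $\sqrt{n!}/f(n,1/\eps)$---is exactly the gap the paper leaves open. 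The moment/Paley--Zygmund route you sketch would give only weak (constant-probability) anticoncentration, which, as you note and as the paper implicitly acknowledges by stating this as a conjecture, does not suffice for the reductions in Theorems~\ref{thm:hardness-of-probs} and~\ref{thm:no-sampler}. In short: your proposal does not prove the conjecture, but neither does the paper; your write-up is a sound account of the barriers.
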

We provide numerical evidence in support of Conjecture \ref{conj:anticoncentration} in Appendix \ref{sec:num}. Moreover, we note a special case of this conjecture has already been proven for $\mcal N(1/\poly\log n,1)$ matrices by \cite{ji2021approximating}, improving on work of Eldar and Mehraban \cite{eldar-mehraban}---and $N(0,1)$ matrices are the subject of standard anticoncentration---so our conjecture is in some sense interpolating between these proven statements and conjectures to matrices with entries like $\mcal N(1/\sqrt{n},1)$.
See Figure \ref{fig:anticoncentration} for a schematic.

\subsection{Discussion and open problems}\label{sec:discussion}

In this work we have exponentially improved over the best-known hardness results for BosonSampling, proving a robust worst-to-average-case reduction and showing the first non-trivial average-case multiplicative-error sampling result for (orthogonal) BosonSampling.
It is natural to ask if our techniques can be pushed further to prove PGC and show hardness of BosonSampling in the general case.
We note that further reductions in our coefficient extraction error could possibly yield intermediate results in this direction, in particular improving our average-case success probability of the sampler to be closer to $1-1/\poly(n)$. 
In terms of pushing our results towards approximate average-case approximate sampling (i.e., from a distribution close in total variation distance), an important question is if our techniques relativize, as we know non-relativizing techniques will be required to show hardness of approximate sampling \cite{Aaronson2017foundations}.
Interestingly Marshall, Aaronson and Djunko \cite{marshall2024improved} recently introduced new techniques that do not relativize.
Of course the Permanent Anticoncentration Conjecture \ref{conj:pacc-aa} remains open as well, and is assumed in our work.

Another natural question is if we can show any hardness of sampling for RCS similar to Theorem \ref{thm:no-sampler}. Here the principal challenge is that the state-of-the-art of average-case hardness for RCS is substantially farther from the goal than for BosonSampling \cite{bouland_complexity_2019,movassagh2023hardness,Bouland2021,Kondo2021_robustness,krovi2022}.
While we show one can utilize the schemes of Lemma \ref{lem:magnification} for a real variant of RCS to obtain a magnification-to-average-case ratio which fights against extrapolation loss (see Sec.\ \ref{sec:relative-error}), this gain is at most $2^n$ for RCS, while existing worst-to-average-case reductions have much larger robustness losses.
We leave this as an open problem.

We note a number of related works have studied the complexity of quantum advantage schemes under various forms of noise in the experiment, e.g.,~\cite{aharonov1996limitations,kalai2014gaussian,gao2018efficient,Bouland2021,aharonov2023polynomial,deshpande2022tight, schuster2024polynomial,oh2024classical,dalzell2024random,nonunital,bulmer2022boundary,villalonga2021efficient} which can make the problems asymptotically easier in certain scenarios.
In contrast our work is studying the complexity of near-noiseless variants of BosonSampling or RCS.

Finally, it remains open if our proofs can be extended from real (i.e.\ orthogonal) BosonSampling to complex (i.e.\ unitary) BosonSampling. 
The part of our proof that breaks here is the statement that, if you have evaluations of the square of a polynomial $|p(t)|^2$, that you can infer the underlying polynomial up to phase. While this is trivial in the real case (the phase is $\pm 1$, which is trivially disambiguated in the proof), in the complex case it is open if this approximately defines $p(t)$ up to a complex phase, and this appears to be an open problem in complex analysis \cite{mathoverflow1}.
We explain this in more detail in Appendix \ref{sec:complex-squares}.

\section{Background}\label{sec:background}
In this section, we record some background used throughout the paper.

In this work we will work to show hardness of \emph{exact} (often called \emph{multiplicative})\footnote{We note that in the literature, it is common to use \textit{``exact''} and \textit{``multiplicative-error''} sampling synonymously, simply because the techniques that demonstrate exact hardness typically extend to multiplicative-error hardness automatically. We will also adopt this convention, using the two terms interchangeably.} sampling of \emph{average-case} BosonSampling.
In BosonSampling the input is a Haar-random $m\times m$ unitary matrix $U$, describing a linear optical inteferometer on $m$ modes, and a number $n$ of photons.
The goal is to output a sample of the probability distribution obtained by passing those $n$ photons through the interferometer $U$ and measuring in the photon number basis.
We will work towards showing a classical algorithm cannot perform this task on average over the choice of $U$.
We define an exact average-case sampler to be the following:

\begin{definition}[Multiplicative-error average-case sampler]\label{def:sampler}
A multiplicative-error average-case sampler for BosonSampling that succeeds with probability $1-\alpha$ is an efficient classical probabilistic algorithm that, given a random $m\times m$ matrix $U$ and  error $\eta\geq0$, 
outputs a sample $y$ from a distribution $\mcal P'_U$ in time polynomial in $n$ and $1/\eta$ such that, with probability at least $1-\alpha$ over the choice of $U$, 
\begin{equation}
(1-\eta) P_U(y) < P'_U(y) \leq  (1+\eta)P_U(y)
\end{equation}
for every possible output $y$,  where $\mathcal{P}_U$ is the output distribution of the BosonSampling experiment.
\end{definition}
We note that this notion of sampling to small multiplicative error has been previously studied in the literature, e.g. in \cite{terhal2004adptive,bremner2010iqp,Aaronson2013}, where it has been shown that \emph{worst-case} multiplicative-error sampling is hard assuming $\textsf{PH}$ does not collapse.
In this work we prove a similar statement for \emph{average-case} multiplicative-error sampling.

Aaronson and Arkhipov gave a well-known reduction from classical sampling to approximate computing of output probabilities that uses Stockmeyer's approximate counting algorithm \cite{stockmeyer1983complexity}, which runs in $\mathsf{BPP}^\mathsf{NP}$ \cite{Aaronson2013}. The idea of Stockmeyer's algorithm is to estimate the probability of any outcome by estimating the number of random strings that cause the sampler to output that outcome. This uses that a classical randomized algorithm can be treated as a deterministic algorithm that takes a random input.
They then use this to show an efficient classical sampler cannot exist.
The basic idea is that if approximately computing output probabilities is $\#\mathsf{P}$-hard, then it cannot lie in $\BPP^\NP$ by Toda's theorem \cite{Toda1991}.
Thus to show hardness of sampling, it suffices to conjecture that it is hard to compute the output probabilties of experiments.

Aaronson and Arkhipov showed that the Permanent-of-Gaussians Conjecture (PGC) suffices to prove hardness of average-case approximate sampling.
PGC states that the following problem is \textsf{\#P}-hard:
\begin{definition}[\textsf{GPE}$_\pm$]
    Given as input $R\sim\mcal N(0,1)^{n\times n}$ and error parameters $\epsilon, \delta > 0,$ estimate $\left|\Per R\right|^2$ to within additive error $\pm\epsilon\cdot n!$ with probability at least $1-\delta$ over $R,$ in $\poly(n, 1/\epsilon, 1/\delta)$ time.
\end{definition}
 This conjecture is natural in the context of BosonSampling as the output probabiltiies of these experiments correspond to matrix permanents of submatrices of the input $U$.
 For Haar-random unitaries of sufficient large dimension $m$, the submatrices are close to Gaussian, so this conjecture is
referring to the complexity of computing output probabilities of the experiment.

Aaronson and Arkhipov also assume the Permanent Anticoncentration Conjecture, which posits a lower bound on the typical value of these permanents. 

\begin{conjecture}[Permanent Anticoncentration Conjecture (PACC) \cite{Aaronson2013}]\label{conj:pacc-aa}
        There exists a polynomial $f$ such that for all $n$ and $\epsilon>0,$
    \[
    \Prob_{R\sim\mcal N(0,1)^{n\times n} } \left[\left|\Per R\right| < \frac{\sqrt{n!}}{f(n,1/\epsilon)}\right]<\epsilon.
    \]
\end{conjecture}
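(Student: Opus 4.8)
The plan is to anticoncentrate $|\Per R|$ around its typical scale $\sqrt{n!}$ by a Laplace-expansion recursion in $n$. Expanding along the last row, $\Per R=\sum_{j=1}^{n}R_{nj}\,\Per R^{(n,j)}$, where $R^{(n,j)}$ is the $(n-1)\times(n-1)$ minor obtained by deleting row $n$ and column $j$. Conditioned on the first $n-1$ rows, the coefficients $R_{nj}$ are i.i.d.\ $\mcal N(0,1)$, so $\Per R$ is then a centered Gaussian with variance $V=\sum_{j=1}^{n}|\Per R^{(n,j)}|^{2}$; hence $\Prob[\,|\Per R|<\delta\sqrt{V}\mid\text{rows }1,\dots,n-1\,]=O(\delta)$ for \emph{every} value of $V$, and therefore unconditionally. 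Since $\Expec[V]=n\cdot(n-1)!=n!$, this reduces PACC to a lower-tail bound for $V$: it suffices to show $\Prob[\,V<n!/g(n,1/\epsilon)\,]<\epsilon$ for some polynomial $g$, after which a union bound with $\delta\asymp\epsilon$ gives $\Prob[\,|\Per R|<\sqrt{n!}/f(n,1/\epsilon)\,]<\epsilon$ for a polynomial $f$.

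Before attacking $V$, it helps to record the cheap endpoints, to see exactly what must be won. The second moment is already on target: $\Expec[(\Per R)^{2}]=n!$ by Wick's theorem, since in $\sum_{\sigma,\tau}\Expec\prod_{i}R_{i\sigma(i)}R_{i\tau(i)}$ only the diagonal terms $\sigma=\tau$ survive. And the Carbery--Wright inequality, applied to the degree-$n$ polynomial $\Per R$ in its $n^{2}$ Gaussian arguments, gives $\Prob[\,|\Per R|<\epsilon\sqrt{n!}\,]=O(n\,\epsilon^{1/n})$ essentially for free, i.e.\ anticoncentration up to a factor $n^{-n}$ (and variants of the Tao--Vu random-permanent analysis push this to a sub-exponential $e^{o(n)}$ factor). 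The whole content of PACC is to replace that sub-exponential factor by $\poly(n)$; this is precisely the regime in which a generic polynomial-anticoncentration bound is provably too weak, so the special combinatorial structure of the permanent must be used.

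For the core step I would try to show $V$ \emph{concentrates} near its mean $n!$ by a second-moment estimate, $\Prob[\,V<\tfrac12 n!\,]\le 4\,\Variance(V)/(n!)^{2}$. Writing $V=\sum_{j}T_{j}$ with $T_{j}=|\Per R^{(n,j)}|^{2}$ and $\Expec[T_{j}]=(n-1)!$, this needs a bound $\Expec[T_{j}^{2}]\le\poly(n)\,((n-1)!)^{2}$ --- the fourth moment of an $(n-1)$-dimensional Gaussian permanent, controlled against its squared mean --- together with covariance bounds for the strongly overlapping pairs $R^{(n,i)},R^{(n,j)}$ (which share all $n-1$ rows and $n-2$ columns, so are far from independent). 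I would compute $\Expec[T_{j}^{2}]$ and $\Expec[T_{i}T_{j}]$ by Wick's rule, grouping the sums over quadruples of permutations according to the cycle structure of their pairwise overlaps, and aim to show the ``near-diagonal'' quadruples dominate and contribute $\poly(n)\,((n-1)!)^{2}$, the rest being geometrically suppressed in the overlap. If that succeeds with a small enough constant --- indeed if $\Variance(V)=o((n!)^{2})$ --- then Chebyshev plus the clean Gaussian bound already yields PACC for every \emph{constant} $\epsilon$ (with $f$ constant); the sub-constant regime is where all the difficulty then concentrates.

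That sub-constant regime is the main obstacle, and the reason PACC has stayed open. First, it is not known whether $\Expec[(\Per R)^{4}]$ is only $\poly(n)\cdot(n!)^{2}$: the empirically observed, roughly log-normal behavior of $(\Per R)^{2}/n!$, with log-variance on the $\Theta(\log n)$ scale, is consistent with the \emph{second} moment of $(\Per R)^{2}/n!$ being polynomial but forces every higher moment to blow up super-polynomially, so the moment method is delicate and stops paying off past the fourth. Second, even granting the fourth-moment bound, a second-moment or Paley--Zygmund argument alone gives only a $1/\poly(n)$ \emph{lower} bound on $\Prob[(\Per R)^{2}\gtrsim n!]$, not the $1-\epsilon$ lower tail that PACC requires; bridging this gap seems to need either a version of the Laplace-expansion recursion that closes \emph{without} losing a factor $1/\epsilon$ at each of the $n$ levels (the naive iteration loses $(1/\epsilon)^{n}$), or --- cleanest if it can be done --- a quantitative central limit theorem for $\log|\Per R|$ showing it has a sub-Gaussian left tail with variance proxy $O(\log n)$, which would immediately produce a polynomial $f$ via $\exp\!\big(\sqrt{(\log n)\log(1/\epsilon)}\big)\le\sqrt{n/\epsilon}$. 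Proving such a quantitative CLT for the Gaussian permanent --- or any substitute that beats the $e^{o(n)}$ barrier --- is the one genuinely open ingredient, and I expect essentially all the difficulty to live there.
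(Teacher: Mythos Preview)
The statement you are addressing is a \emph{conjecture}, not a theorem: the paper does not prove PACC, it merely records it as the standard anticoncentration assumption from Aaronson--Arkhipov and invokes it as a hypothesis in Theorems~\ref{thm:dilution-intro} and~\ref{thm:hardness-of-probs}. There is therefore no ``paper's own proof'' to compare against.

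Your proposal is not a proof but an honest research outline, and you correctly identify this yourself in the final paragraph: the Laplace-expansion recursion combined with a second-moment/Paley--Zygmund step only yields a $1/\poly(n)$ lower bound on the probability that $|\Per R|^2 \gtrsim n!$, not the $1-\epsilon$ bound PACC demands; and iterating the recursion naively loses a factor $1/\epsilon$ at each of $n$ levels, which is precisely the $e^{O(n)}$ barrier you would need to beat. The quantitative CLT for $\log|\Per R|$ that you propose as the clean route is exactly the open problem --- it is not known, and proving it would resolve PACC. So the genuine gap is not a flaw in your reasoning but the fact that the key ingredient you isolate (sub-Gaussian left tail for $\log|\Per R|$ with variance proxy $O(\log n)$, or any equivalent) remains unproven; your write-up is a fair summary of why PACC is hard, not a proof of it.
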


Assuming PACC, $\mathsf{GPE_\pm}$ and estimating Gaussian permanents to $1/\poly(n)$ relative error, a problem known as \textsf{GPE}$_{\times},$ are polynomial-time equivalent. Thus if Conjecture \ref{conj:pacc-aa} holds, then it suffices to show that \textsf{GPE}$_{\times}$ is \textsf{\#P}-hard.

Finally, we note that in standard BosonSampling $U$ is assumed to be a Haar random unitary matrix.
In this work all many of our results  (all except the coefficient extraction and dilution arguments of Theorem \ref{thm:dilution-intro}, which apply to unitaries) pertain only to the case that $U$ is a Haar random orthogonal matrix, whose $n\times n$ submatrices are distributed as $\mathcal{N}(0,1)^{n\times n},$ i.e.\ our matrix entries are real i.i.d.\ standard normals. This is because the square method, developed in Section \ref{sec:square-trick}, applies to real polynomials. In Appendix \ref{sec:complex-squares}, we explain why the generalization to complex polynomials relies on a difficult question in complex analysis. 
Another advantage of random $m\times m$ orthogonal linear optical transformations is that there is a proof that $n \times n$ submatrices are close in total variation distance to i.i.d.\ Gaussian matrices for $m=\Omega(n^2)$ \cite{jiang-ma}. It is widely conjectured in the BosonSampling literature that the same holds for random unitaries. Our focus on orthogonal matrices means that we sidestep this conjecture altogether.

\section{Techniques}
\subsection{Coefficient extraction}\label{sec:co-ex}

The overall scheme of past worst-to-average-case reductions for BosonSampling is an interpolation argument inspired by Lipton's self-reducibility of the permanent, which exploits its polynomial structure to show that average-case instances are as hard as in the worst case \cite{lipton1991}. In particular, by taking a convex combination in variable $t$ of an average-case instance and a worst-case instance, the permanent is a univariate polynomial in $t.$ Then, by estimating values of the polynomial for small $t$ by the average-case algorithm, one can extrapolate to $t=1,$ the permanents of which are \textsf{\#P}-hard.

In this way, prior work has 
used the polynomial $\Per((1-t)R + tW)$ where $R$ is a Gaussian random matrix and $W$ is a
worst-case matrix. 
On the other hand, the polynomial $\Per(R+tW)$ also records
information about $\Per(W)$ as the highest order term is $t^n \Per(W)$. 
We use this observation to provide an alternative way to perform a worst-to-average-case reduction for computation of the permanent. We can sample the values of this polynomial
up to $t=O(n^{-1})$ because translation does not change the probability distribution quickly (see Appendix \ref{sec:pinsker}).

Our main new technical ingredient is a way to recover the top coefficient of a polynomial from its values on an interval.

\begin{lemma}
\label{lem:top-coeff}
Let $p(x) = \sum_{j=0}^d p_j x^j$ be a polynomial of degree $d$ satisfying
\[
\sup_{x\in[-\ell,\ell]} |p(x)| \leq \alpha.
\]
Then,
\[
|p_d| \leq 2^{d+1} \ell^{-d} \alpha.
\]
\end{lemma}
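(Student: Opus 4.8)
The plan is to express the top coefficient $p_d$ as a fixed linear functional of the values of $p$ at $d+1$ well-chosen points in $[-\ell,\ell]$, and then bound the coefficients of that functional. The natural choice is Lagrange interpolation: pick nodes $x_0,\dots,x_d \in [-\ell,\ell]$, write $p(x) = \sum_{k=0}^d p(x_k)\,\ell_k(x)$ with $\ell_k(x) = \prod_{j\neq k}\frac{x-x_j}{x_k-x_j}$, and read off the leading coefficient: $p_d = \sum_{k=0}^d p(x_k) \prod_{j\neq k}(x_k-x_j)^{-1}$. Then $|p_d| \leq \alpha \sum_{k=0}^d \prod_{j\neq k}|x_k-x_j|^{-1}$, so everything reduces to choosing nodes that make the denominators large.

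The clean choice is the (scaled) Chebyshev points, or simply equally spaced points on $[-\ell,\ell]$; either works for the claimed bound. With $d+1$ equally spaced points $x_k = -\ell + 2\ell k/d$, the spacing is $2\ell/d$, and $\prod_{j\neq k}|x_k - x_j| = (2\ell/d)^d \, k!\,(d-k)!$, so $\sum_{k=0}^d \prod_{j\neq k}|x_k-x_j|^{-1} = (d/2\ell)^d \sum_{k=0}^d \binom{d}{k}/d! = (d/2\ell)^d \, 2^d/d!$. Using $d! \geq (d/e)^d$ gives a bound of the form $(e/\ell)^d \alpha$, which is even a little stronger than $2^{d+1}\ell^{-d}\alpha$; alternatively one can be slightly lossier in the node analysis and land exactly on the stated constant $2^{d+1}$. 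I would present whichever arithmetic is cleanest and simply note the stated bound follows. As a sanity check, this is essentially a special case of the discrete Remez inequality (Lemma \ref{lem:discrete-remez}) applied to the rescaled polynomial $\tilde p(u) = p(\ell u)$ on $[-1,1]$, reading the leading coefficient as a limit $p_d = \lim_{L\to\infty} \tilde p(L)/L^d$ after a shift to $[0,1]$; but a direct Lagrange computation is shorter and self-contained.

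The only real subtlety is that the bound must be \emph{dimension-free} in the choice of nodes — i.e., we need one explicit set of $d+1$ points in $[-\ell,\ell]$ that works, and the product-of-differences estimate has to be done carefully enough to extract the clean exponential constant rather than something like $d^d$. This is routine once the equally-spaced (or Chebyshev) nodes are fixed; there is no genuine obstacle, just a short binomial-sum computation and the estimate $d! \geq (d/e)^d$. I would also remark that the factor $2^{d+1}$ (versus $2^d$) is harmless slack that makes the final inequality hold for all $d \geq 0$ including the trivial edge cases.
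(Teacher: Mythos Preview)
Your Lagrange-interpolation approach is sound and yields a bound of the form $C^d\ell^{-d}\alpha$, but there is a numerical slip in the comparison to the stated constant. With equally spaced nodes you correctly obtain
\[
|p_d| \;\leq\; \alpha\,\frac{d^d}{\ell^d\,d!}\;\leq\;\Big(\frac{e}{\ell}\Big)^d\alpha,
\]
but $e^d$ is \emph{larger} than $2^{d+1}$ once $d\geq 3$ (since $(e/2)^d$ grows without bound), so this is weaker than the lemma's bound, not ``a little stronger.'' The remark that one can be ``slightly lossier'' to land on $2^{d+1}$ goes in the wrong direction. To actually recover the constant $2^{d+1}$ (in fact the sharp $2^{d-1}$) you need the Chebyshev extremal property: the monic polynomial of degree $d$ with smallest sup norm on $[-1,1]$ is $2^{1-d}T_d$, so $\alpha\geq\|p\|_\infty\geq|p_d|\,2^{1-d}$. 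Equally spaced nodes by themselves will not do it.

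For comparison, the paper's proof is genuinely different: it uses the \emph{orthogonality} of Chebyshev polynomials against the weight $(1-x^2)^{-1/2}$ on $[-1,1]$ rather than interpolation at finitely many nodes. After rescaling to $\ell=1$, one writes $p - 2^{-d}p_dT_d$ as a polynomial of degree $d-1$, hence orthogonal to $T_d$, which gives $2^{-d-1}\pi\,p_d=\int_{-1}^1 p(x)T_d(x)\,(1-x^2)^{-1/2}\,dx$; bounding the integral by $\pi\alpha$ yields $|p_d|\leq 2^{d+1}\alpha$ directly. Your discrete approach is more elementary and closer in spirit to Lemma~\ref{lem:discrete-remez}, while the paper's integral approach extracts the Chebyshev constant without any node bookkeeping. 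For the downstream applications the distinction between $e^d$ and $2^{d+1}$ is immaterial (both are $2^{O(d)}$), but as a proof of the lemma \emph{as stated} your argument needs the Chebyshev input, not equally spaced points.
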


One should think of $p(x)$ as the difference between the true permanent polynomial and the approximate polynomial provided by the \textsf{NP} oracle in the reduction. The following lemma then provides a bound on the error in the approximate polynomial's top coefficient, which encodes the worst-case permanent, i.e.\ $\Per(W)$ above. This induces an additive error bound on the worst-case permanent. 

\begin{proof}
By rescaling the inputs, it suffices to prove the result with $\ell=1$.

Let $T_n(x)$ be the $n$-th Chebyshev polynomial.  These polynomials satisfy the orthogonality
relation
\[
\int_{-1}^1 T_n(x) T_m(x) \frac{dx}{\sqrt{1-x^2}} =
\begin{cases}
0, &n\not=m \\
\pi, &n=m=0 \\
\frac{\pi}{2}, &n=m\not=0.
\end{cases}
\]
Since $\operatorname*{span}\{1,x,\cdots,x^n\}=\operatorname*{span}\{T_0,T_1,\cdots,T_n\}$,
it also follows that
\[
\int_{-1}^1 q(x) T_d(x) \frac{dx}{\sqrt{1-x^2}} = 0
\]
whenever $q$ is a polynomial of degree at most $d-1$.  Since the coefficient of $x^n$ in $T_n$ is $2^n$,
the polynomial $p - 2^{-d} p_d T_d$ is a polynomial of degree $d-1$ so that
\begin{align*}
\int_{-1}^1 (p(x) - 2^{-d}p_d T_d(x)) T_d(x) \frac{dx}{\sqrt{1-x^2}}  = 0.
\end{align*}
Rearranging and using the orthogonality relations above, this becomes
\[
2^{-d-1}\pi p_d = \int_{-1}^1 p(x) T_d(x) \frac{dx}{\sqrt{1-x^2}}.
\]
On the other hand, using the uniform bound on $p(x)$ we can bound the latter integral as follows:
\[
\big|\int_{-1}^1 p(x) T_d(x) \frac{dx}{\sqrt{1-x^2}}\big|
\leq \alpha \int_{-1}^1 \frac{dx}{\sqrt{1-x^2}} = \pi \alpha.
\]
\end{proof}

\subsection{Square method}\label{sec:square-trick}

Observe that the polynomial
$|\Per(R_t)|^2\coloneq\left|\Per(R+tW)\right|^2$ is not merely a polynomial of degree $2n$,  
but rather
the square of a polynomial of degree $n$.  It is natural to then ask whether this observation
can be used to reduce the effective degree of the extrapolation to $n$ instead of $2n$.

We suppose that we have some approximate values of a square polynomial $p^2$, and we use
an $\NP$ oracle to find some square $q^2$ that agrees with the approximate values.
Then one expects that either $p\approx +q$ or $p\approx -q$ on these values.  Notably in the case of complex polynomials, rather than a sign ambiguity there is a phase ambiguity. The consequences of this are explored in Appendix \ref{sec:complex-squares}. Throughout the main body of the text, we restrict our attention to real polynomials.

Before we prove Lemma~\ref{lem:square-trick-coex} we record an elementary fact.
\begin{lemma}
\label{lem:difference-of-squares}
Let $p,q\in\Real$ be real numbers satisfying
\[
|p^2 - q^2| < \delta.
\]
Then $||p|-|q||< |p|^{-1}\delta$.
\end{lemma}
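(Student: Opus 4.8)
The statement is an elementary algebraic manipulation, so the plan is short. The key identity is the difference-of-squares factorization $p^2 - q^2 = (|p|-|q|)(|p|+|q|)$, valid because $p^2 = |p|^2$ and $q^2 = |q|^2$ for real numbers. First I would observe that we may assume $p \neq 0$, since otherwise the claimed bound $|p|^{-1}\delta$ is vacuous (the right-hand side is infinite); in that case there is nothing to prove.

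Next, from the factorization I would write
\[
\big||p|-|q|\big| = \frac{|p^2 - q^2|}{|p|+|q|},
\]
which is legitimate since $|p|+|q| > 0$ (as $p\neq 0$). Then I would apply the hypothesis $|p^2-q^2| < \delta$ together with the trivial inequality $|p|+|q| \geq |p|$, which follows from $|q|\geq 0$, to conclude
\[
\big||p|-|q|\big| = \frac{|p^2-q^2|}{|p|+|q|} < \frac{\delta}{|p|+|q|} \leq \frac{\delta}{|p|} = |p|^{-1}\delta.
\]
This completes the argument.

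There is no real obstacle here; the only point requiring a word of care is the degenerate case $p = 0$, which must be excluded (or interpreted as trivially true) for the bound to make sense, and the observation that the denominator bound uses $|q|\geq 0$ rather than anything about $q$ itself. Intuitively, the lemma says that when $p$ is bounded away from zero, controlling the error on $p^2$ controls the error on $|p|$ after dividing by (roughly) the magnitude of $p$ — this is exactly the ``relative error is preserved under taking powers'' phenomenon exploited by the square method, and it is the reason an additive error $\gamma$ on $|\Per(R_t)|^2$ translates to an additive error of only $\gamma/|\Per(R_t)|$ on the underlying degree-$n$ polynomial.
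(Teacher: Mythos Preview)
Your proof is correct and follows essentially the same route as the paper's: factor $p^2-q^2$ as a difference of squares and bound the denominator $|p|+|q|$ below by $|p|$. The paper phrases it by first reducing without loss of generality to $p,q>0$, but the algebra is identical.
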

\begin{proof}
We can assume without loss of generality that $p$ and $q$ are positive.  Then
$|p+q|>|p|$, so
\[
|p-q| \leq |p|^{-1}|p-q||p+q| = |p|^{-1}|p^2-q^2| < |p|^{-1}\delta.
\]
\end{proof}

Here we introduce the square method in the setting of coefficient extraction. To do so, we need the following discrete Remez inequality. It is proved in Appendix \ref{sec:rbw}, as is the square method for extrapolation.
\begin{lemma}[Discrete Remez inequality]
\label{lem:discrete-remez-main-body}
Let $\{x_j\}_{j=0}^{d}\subset [-\ell,\ell]$ be a $\delta$-separated set of points, meaning that
$|x_i-x_j| \geq \delta$ for $i\not=j$.  Then if $p$ is a degree-$d$ polynomial
\[
\sup_{[-\ell,\ell]}|p(x)| \leq (2e^2(\delta d)^{-1}\ell)^{d} \max_{0\leq j\leq d} |p(x_j)|.
\]
\end{lemma}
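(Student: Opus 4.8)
The plan is to prove this via Lagrange interpolation through the $d+1$ nodes $x_0,\dots,x_d$, controlling the Lagrange basis polynomials using the $\delta$-separation hypothesis. Since $p$ has degree at most $d$, it is recovered exactly from its values at these nodes:
\[
p(x) = \sum_{j=0}^{d} p(x_j)\prod_{k\ne j}\frac{x-x_k}{x_j-x_k},
\]
so for every $x\in[-\ell,\ell]$ we immediately get
\[
|p(x)| \;\le\; \Big(\max_{0\le j\le d}|p(x_j)|\Big)\sum_{j=0}^{d}\prod_{k\ne j}\frac{|x-x_k|}{|x_j-x_k|}.
\]
It therefore suffices to bound the sum of Lagrange basis polynomials uniformly on $[-\ell,\ell]$. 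The numerators are trivial: $x$ and every $x_k$ lie in $[-\ell,\ell]$, so $|x-x_k|\le 2\ell$ and $\prod_{k\ne j}|x-x_k|\le (2\ell)^d$.

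The real content is the lower bound on the denominators $\prod_{k\ne j}|x_j-x_k|$. First I would sort the nodes; fixing $j$, suppose $a$ of the other nodes lie to the left of $x_j$ and $d-a$ to the right. Telescoping consecutive gaps and using $\delta$-separation, the $i$-th closest node to the left of $x_j$ is at distance at least $i\delta$, and likewise on the right, so
\[
\prod_{k\ne j}|x_j-x_k| \;\ge\; \big(a!\,\delta^{a}\big)\big((d-a)!\,\delta^{d-a}\big) \;=\; \delta^{d}\,a!\,(d-a)! \;\ge\; \delta^{d}\,\frac{d!}{2^{d}},
\]
where the last step is $\binom{d}{a}\le 2^d$. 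I expect this denominator estimate, i.e.\ the telescoping/separation argument combined with the elementary minimization $\min_a a!(d-a)!=\lfloor d/2\rfloor!\,\lceil d/2\rceil!\ge d!/2^d$, to be the only step that requires genuine (though still elementary) thought; the rest is bookkeeping.

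Combining the two bounds, each Lagrange basis polynomial is at most $(4\ell/\delta)^d/d!$ on $[-\ell,\ell]$, so the sum of the $d+1$ of them is at most $(d+1)(4\ell/\delta)^d/d!$. Finally I would insert Stirling's bound $d!\ge\sqrt{2\pi d}\,(d/e)^d$ and simplify using $(d+1)/\sqrt{2\pi d}\le\sqrt{d}\le (e/2)^d$, valid for all $d\ge 1$ (the case $d=0$ being trivial), to obtain
\[
\sum_{j=0}^{d}\prod_{k\ne j}\frac{|x-x_k|}{|x_j-x_k|} \;\le\; \sqrt{d}\left(\frac{4e\ell}{\delta d}\right)^{d} \;\le\; \left(\frac{2e^{2}\ell}{\delta d}\right)^{d},
\]
which is exactly the claimed constant. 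The same computation, evaluated at a point $x=L\ge 1$ instead of over $[-\ell,\ell]$ and with the numerator bound $|L-x_k|\le L$ for $x_k\in[0,1]$, yields the variant stated as Lemma~\ref{lem:discrete-remez}.
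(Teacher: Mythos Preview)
Your proof is correct and shares the same overall framework as the paper's: Lagrange interpolation through the $d+1$ nodes, a trivial numerator bound, and the main work going into lower-bounding the denominators $\prod_{k\ne j}|x_j-x_k|$. The difference is in how that denominator bound is obtained. The paper takes logarithms and uses a layer-cake formula,
\[
\sum_{k\ne j}\log|x_j-x_k|^{-1}=\int_0^1 s^{-1}\#\{k:|x_j-x_k|<s\}\,ds,
\]
then bounds the counting function via $\delta$-separation and integrates to get $\prod_{k\ne j}|x_j-x_k|^{-1}\le e^{2d}(\delta d)^{-d}$. Your argument is more elementary and combinatorial: sort the nodes, telescope to get $\prod_{k\ne j}|x_j-x_k|\ge \delta^d a!(d-a)!\ge \delta^d d!/2^d$, then apply Stirling. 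Your route avoids the integral machinery, gives a slightly sharper intermediate bound on the denominator, and makes the absorption of the $(d{+}1)$ prefactor from the sum of Lagrange terms explicit, whereas the paper's writeup glosses over that factor. Either method yields the stated constant $(2e^2\ell/(\delta d))^d$.
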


\begin{lemma}[The square method for coefficient extraction]
\label{lem:square-trick-coex}
Let $p$ and $q$ be real-valued polynomials of degree $d$ and
let $S$ be a $\delta$-separated collection of points in $[-\ell,\ell]$ with
$|S|\geq 2d+1$.
Suppose moreover that
\[
\sup_{x\in S} |p^2(x) - q^2(x)| \leq \gamma
\]
and $\inf_{x\in S} |q(x)| \geq K$.  Then
\begin{equation} \label{eq:sq-trick-coex}
||p_d| - |q_d|| \leq 2^{2d+1} e^{2d} (d\delta)^{-d} K^{-1}\gamma.
\end{equation}
\end{lemma}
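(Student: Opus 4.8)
The statement bounds the difference of the top coefficients of $p$ and $q$ in terms of a bound $\gamma$ on $|p^2-q^2|$ over a $\delta$-separated set $S$, together with a lower bound $K$ on $|q|$ on that set. The plan is to chain together the three ingredients we have just developed: the pointwise "difference of squares" bound (Lemma~\ref{lem:difference-of-squares}), the discrete Remez inequality (Lemma~\ref{lem:discrete-remez-main-body}), and the top-coefficient extraction bound (Lemma~\ref{lem:top-coeff}). Concretely, apply $r \coloneqq p - q$ (or whichever of $p\pm q$ is relevant) as the degree-$d$ polynomial whose top coefficient we want to control, noting $r_d = p_d - q_d$ up to a sign.

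\textbf{Step 1 (pointwise).} For each $x\in S$, since $|p^2(x)-q^2(x)| \le \gamma$ and $|q(x)| \ge K$, Lemma~\ref{lem:difference-of-squares} (applied with the roles of $p,q$ arranged so that we divide by the larger one, $|q(x)|$) gives $\big||p(x)|-|q(x)|\big| \le K^{-1}\gamma$. Because $p$ and $q$ are real, $|p(x)|-|q(x)| = \pm(p(x)-q(x))$ for an appropriate choice of overall sign of $q$; after fixing that sign (WLOG replace $q$ by $-q$ if needed, which does not change $|q_d|$), we obtain $|p(x)-q(x)| \le K^{-1}\gamma$ for all $x \in S$ — here one should be slightly careful that the sign could in principle differ at different points of $S$, but since $|q|\ge K>0$ on $S$ and the gap $|p-q|$ or $|p+q|$ is tiny compared to $K$, exactly one of $|p-q|, |p+q|$ is small at each point, and a connectivity/continuity-type argument (or just taking the smaller at each point and using that $S$ has $\ge 2d+1$ points while a nonzero degree-$d$ polynomial $p+q$ has $\le d$ roots) pins down a single consistent sign. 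This is the step I expect to require the most care in writing, though it is not deep.

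\textbf{Step 2 (Remez).} Set $e \coloneqq p - q$, a polynomial of degree at most $d$, with $\max_{x\in S}|e(x)| \le K^{-1}\gamma$ by Step 1. Since $S$ is $\delta$-separated in $[-\ell,\ell]$ and $|S| \ge 2d+1 \ge d+1$, we may select $d+1$ points of $S$ and apply Lemma~\ref{lem:discrete-remez-main-body} to conclude $\sup_{[-\ell,\ell]}|e(x)| \le (2e^2(\delta d)^{-1}\ell)^{d}\, K^{-1}\gamma$.

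\textbf{Step 3 (top coefficient).} Apply Lemma~\ref{lem:top-coeff} to $e$ with $\alpha = (2e^2(\delta d)^{-1}\ell)^{d} K^{-1}\gamma$, yielding
\[
|e_d| \le 2^{d+1}\ell^{-d}\alpha = 2^{d+1}\ell^{-d}\,(2e^2(\delta d)^{-1}\ell)^{d}\, K^{-1}\gamma = 2^{2d+1} e^{2d} (d\delta)^{-d} K^{-1}\gamma,
\]
where the $\ell^{-d}$ and $\ell^{d}$ cancel. Since $e_d = p_d - q_d$ and, after the sign normalization of Step 1, $|p_d - q_d| = \big||p_d| - |q_d|\big|$ (or at worst bounds it), this is exactly \eqref{eq:sq-trick-coex}. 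The main obstacle is purely bookkeeping — ensuring the single global sign choice in Step 1 is legitimate — after which Steps 2 and 3 are immediate substitutions.
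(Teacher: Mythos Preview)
Your three-step strategy (difference of squares $\to$ discrete Remez $\to$ top-coefficient bound) and the arithmetic in Steps~2--3 match the paper's proof exactly, including the cancellation of $\ell^{\pm d}$. The one genuine gap is in Step~1: you try to argue for a \emph{single global} sign so that $|p(x)-q(x)|\le K^{-1}\gamma$ holds on all of $S$, but neither of your proposed justifications works. Connectivity/continuity does not apply because $S$ is a discrete set of points; and the ``$p+q$ has $\le d$ roots'' observation is a non sequitur, since having $|p+q|\le K^{-1}\gamma$ at many points does not force $p+q$ to vanish there, so it gives no contradiction if the sign is $-1$ on more than $d$ points.

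The paper's fix is cleaner and avoids the issue entirely: no global sign is needed. For each $x\in S$ choose some $\sigma_x\in\{\pm1\}$ with $|p(x)-\sigma_x q(x)|\le K^{-1}\gamma$, and set $S^{\pm}=\{x\in S:\sigma_x=\pm1\}$. Since $|S|\ge 2d+1$, pigeonhole gives $|S^{+}|\ge d+1$ or $|S^{-}|\ge d+1$; WLOG the former. Now run your Steps~2--3 verbatim on the $\delta$-separated set $S^{+}$ with the degree-$d$ polynomial $p-q$ (or $p+q$ in the other case). The output is a bound on $|p_d-q_d|$ (respectively $|p_d+q_d|$), and either one dominates $\big||p_d|-|q_d|\big|$ by the reverse triangle inequality. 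This is exactly why the hypothesis is $|S|\ge 2d+1$ rather than $|S|\ge d+1$.
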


\begin{proof}[Proof of Lemma~\ref{lem:square-trick-coex}]
By Lemma~\ref{lem:difference-of-squares}, we can conclude that
\[
||p(x)|-|q(x)|| \leq K^{-1} |p(x)^2 - q(x)^2|
\]
for all $x\in S$.  In particular, for each $x\in S$ there exists a sign
$\sigma_x\in\{\pm 1\}$ such that
\[
|p(x)-\sigma_xq(x)| \leq K^{-1} |p(x)^2 - q(x)^2|.
\]
Let $S^+ = \{x\in[-\ell,\ell]\mid \sigma_x=1\}$ and $S^-=\{x\in[-\ell,\ell]\mid\sigma_x = -1\}$.
At least one of these sets must contain more than $d+1$ points,
so without loss of generality suppose that $|S^+|\geq d+1$.
Then $S^+$ is
also a $\delta$-separated set of points, so by Lemma~\ref{lem:discrete-remez-main-body},
\[
\sup_{x\in[-\ell,\ell]} |p(x)-q(x)|
\leq 2^d e^{2d} (d\delta)^{-d} \ell^d K^{-1}\gamma
\]
By Lemma~\ref{lem:top-coeff} we have
\[
|p_d - q_d| \leq 2^{2d+1} e^{2d} (d\delta)^{-d}  K^{-1}\gamma.
\]
\end{proof}

We develop the square method for Random Circuit Sampling in Appendix \ref{sec:square-extrapolation}, where we use extrapolation rather than coefficient extraction. There, we can extrapolate the values of $p$ itself rather
than $p^2,$ and use the discrete Remez inequality 
to bound the extrapolation blowup induced on $p^2$.  

\subsection{Worst case magnification}\label{sec:relative-error}

Building on the square method in Sec.\ \ref{sec:square-trick}, in this section we develop the method of worst case magnification. 

\magnification*

\begin{remark}[Overcoming the convexity barrier of \cite{Aaronson2013}]
    The ``convexity'' barrier identified in \cite{Aaronson2013} observes that worst-to-average-case reductions based on numerical tasks such as polynomial extrapolation or coefficient extraction cannot exactly estimate a worst-case (i.e.,\ arbitrary) permanent given only $1/\poly(n)$ relative error on random instances in the average-case, simply because the reduction suffers from an exponential loss without a factor to compensate. Lemma \ref{lem:magnification} surmounts the convexity barrier by demonstrating that one can magnify the reduction's error tolerance by an exponentially large factor $(n-n^\varepsilon)!/|\Per R |,$ fighting against the exponential loss $\Delta^n 2^{-O(n)}= O(2^{-n\log n})$. 
    Box size $\Delta$ is explicitly calculated in Appendix \ref{sec:pinsker} to be $O(1/n)$.
\end{remark}

\begin{proof}
Let us in particular take $W$ to be a block-diagonal matrix consisting of a (small) block $W'\in\{0,\pm 1\}^{n^\varepsilon\times n^\varepsilon}$ for some constant $\varepsilon>0$, in direct sum with a $(n-n^\varepsilon)\times(n-n^\varepsilon)$ block of the all $1$s matrix, as illustrated in Fig.\ \ref{fig:all-tricks}. Observe that $\Per W'\in\mathbb Z$ and that the all $1$s matrix has a permanent of $(n-n^\varepsilon)!$. Furthermore, observe that the leading-order coefficient $q_n=\Per W = \Per W' \cdot (n-n^\varepsilon)!.$  Therefore see that the values of $q_n=\Per W$ are \emph{magnified}
by quantity $(n-n^\varepsilon)!,$ namely that $q_n$ can take on values $\ldots -2(n-n^\varepsilon), -(n-n^\varepsilon), 0, (n-n^\varepsilon), 2(n-n^\varepsilon)\ldots$.\footnote{ We note independent work of \cite{bravyi2025classical} used a related concept known as ``granularity'' in a quantum advantage setting. While superficially similar, these ideas are different, as in our setting we are \emph{artificially} making a \emph{worst-case} more ``granular'' to magnify our error tolerance, whereas in their setting the granularity naturally arises in their \emph{average-case} distribution from the integrality of representation-theoretic quantities.
} This is illustrated in Fig.\ \ref{fig:mag}.

This motivates the following worst-to-average-case reduction in $\mathsf{BPP}^{\mathsf {NP}}$: ask the \textsf{NP} oracle for a degree $n$ polynomial $p$ such that $\sup_{x\in S} |p^2(x) - q^2(x)| \leq \gamma$ for a set $S$ of $2n+1$ evenly-spaced points in $[-1/n,1/n].$ This has an efficient certificate because by assumption, we can evaluate average-case permanents, specifically $\Per(R+tW)$ for $t\in\Delta=O(1/n),$ and simply check that for all $x\in S,$ $p^2$ and $q^2$ are $\pm \gamma$-close. Then we may simply output the leading-order (degree $n$) coefficient $p_n$ of $p.$

The \emph{key idea} is that by magnifying $\Per W' \in \mathbb Z$ by a factor of $(n-n^\varepsilon)!$, it suffices for $|p_n|$ to be $\pm\frac{1}{3}(n-n^\varepsilon)!$ close to $|q_n|$ to compute $\Per W'$ exactly.\footnotemark~This would imply that computing $|q(0)|^2 = \vert \Per R\vert^2$ to within $\pm \gamma$ is $\sharpP$-hard under $\mathsf{BPP}^{\mathsf {NP}}$, i.e.\ that the average case is as hard as the worst case.

To complete the proof, we upper bound $\gamma$ by recalling Lemma \ref{lem:square-trick-coex} from the previous page: for real-valued polynomials $p$ and $q$ of degree $d$ and for a $\delta$-separated collection of points $S$ in $[-\ell,\ell]$ for which $|S|\geq 2d+1,$ if 

\[
\sup_{x\in S} |p^2(x) - q^2(x)| \leq \gamma
\]

and $$\inf_{x\in S} |q(x)| \geq K,$$

then 

\begin{equation}\label{eq:sq-coex-restatement} 
||p_d| - |q_d|| \leq 2^{2d+1} e^{2d} (d\delta)^{-d} K^{-1}\gamma,
\end{equation}
where $p_d$ and $q_d$ denote the leading-order coefficients of $p$ and $q.$\\

To apply this lemma, we instantiate $p,$ $q,$ $d,$ $\delta,$ $S,$ and $K$ as follows.
\begin{itemize}
    \item Take polynomial $q(t)=\Per (R+tW),$ where $R\sim\mcal N(0,1)^{n\times n}$ and $W$ is a matrix whose permanent is \textsf{\#P}-hard. This is a degree $d=n$ polynomial.
    \item $S$ is a finite set of points $\{t_i\}$ 
    in $[-\Delta,\Delta]$ for which box size $\Delta=O(1/n)$ and $|S|\geq 2n+1.$  Therefore, $S$ is a collection of $\delta=O(1/n^2)$ points, and the expression $(d\delta)^{-d}$ in Eq.\ \ref{eq:sq-coex-restatement} simplifies to $O(\Delta)^{-n} = \Delta^{-n} \cdot 2^{-O(n)}$. 
    \item $p$ is a polynomial of the same degree as $q,$ whose squared values are $\pm \gamma$-close to $q^2$ for $x\in S.$ 
    \item  Conjecture \ref{conj:anticoncentration} implies $\vert q(x)\vert =\vert \Per(R+xW)\vert \geq K=\sqrt{n!}/\poly(n)$ for all $x\in S$ with probability $1-1/\poly(n)$ by applying a union bound.
\end{itemize}

Using Eq.\ \ref{eq:sq-coex-restatement} and the bulleted substitutions above gives

\begin{align} 
||p_n| - |q_n|| &\leq 2(2e)^{2n} O(\Delta)^{-n} (\sqrt{n!}/\poly(n))^{-1}\gamma \\&\leq \frac{(n-n^\varepsilon)!}{3}.
\end{align}

Re-arranging, 

\begin{align} 
\gamma &\leq \frac{(n-n^\varepsilon)!}{3} \cdot \frac{\sqrt{n!}}{\poly(n)} \cdot O(\Delta)^n\cdot \frac{(2e)^{-2n}}{2} 
\\&=(n-n^\varepsilon)!\cdot \frac{\sqrt{n!}}{\poly(n)} \cdot \Delta^{n} \cdot 2^{-O(n)}.   \label{eq:gamma-add}
\end{align}

Next we ask, how well does the value of $\vert p(0)\vert^2$ approximate that of $\vert q(0)\vert^2=\vert\Per R\vert^2$? To answer this question, we define 

\begin{align}
    \gamma_{rel} \coloneqq \frac{\vert\vert p(0)\vert^2-\vert q(0)\vert^2\vert}{\vert q(0)\vert^2} 
    &= \frac{\vert\vert p(0)\vert^2-\vert \Per R\vert^2\vert}{\vert \Per R\vert^2} 
    \\&\leq \frac{\gamma}{\vert\Per R\vert^2},
\end{align}
recalling that $\gamma$ is the maximum additive error between $p^2$ and $q^2$ for $x\in [-\Delta,\Delta].$ 

Using Eq.\ \ref{eq:gamma-add}, we can suggestively bound $\gamma_{rel}$:

\begin{align}
     \gamma_{rel} \leq \frac{\gamma}{\left|\Per R\right|^2}
     &\leq 
     (n-n^\varepsilon)!\cdot \frac{\sqrt{n!}/\poly(n)}{\vert\Per R\vert^2} \cdot \Delta^{n} \cdot 2^{-O(n)}
     \\ &=\frac{(n-n^\varepsilon)!}{\vert\Per R\vert} \cdot \Delta^{n} \cdot 2^{-O(n)},
\end{align}
where in the final line we invoke Permanent Anticoncentration Conjecture \ref{conj:pacc-aa}.

\end{proof}

\footnotetext{The astute reader will notice that in this reduction we obtain $\Per W$ exactly, rather than to within some amount of relative error as we did in Theorem \ref{thm:dilution-intro}. Consequently, this argument does not cross the Jerrum-Sinclair-Vigoda (JSV) barrier. However, a simple modification to the proof crosses the convexity and JSV barriers simultaneously: Take as $W$ a block-diagonal matrix comprising the following three matrices in direct sum: $W'\in\{0,\pm 1\}^{n^\varepsilon\times n^\varepsilon},$ the all $0$s matrix of size $n^\varepsilon\times n^\varepsilon,$ and the all $1$s matrix of size $(n-2n^\varepsilon)\times (n-2n^\varepsilon).$ Just as in the proof of Theorem \ref{thm:dilution-intro}, the introduction of a random minor into the leading-order coefficient of $\Per(R+tW)$ makes it so that the reduction obtains a relative error approximation to $\Per W$. 
}

\subsection{Rare events lemma I: going out of distribution} 

Even with the exponential gains in error tolerance made using the square method and magnification, they still do not yet show the hardness of average-case sampling. The limitation comes from total variation distance analysis in the standard worst-to-average-case reduction, where we evaluate permanents drawn from a distribution close in TVD to i.i.d.\ Gaussian (see Sec.\ \ref{subsubsec:what-controls-robustness}).

Our next technical innovation is to go out of distribution. 
Namely, we prove what we call a ``rare events lemma," which show that an algorithm that computes Gaussian permanents with sufficiently high probability can also compute permanents distributed \textit{far} in total variation distance from i.i.d.\ Gaussian reasonably well. As these results exploit the specific structure of the Gaussian measure, they pertain only to BosonSampling.

\begin{lemma}\label{lem:shifted-gaussian-integral}
Let $S\subset\Real^N$ be a measurable set and let 
\[
\delta := (2\pi)^{-N/2} \int_S e^{-\|x\|^2/2} \diff x.
\]
Then for all $v\in\Real^N$,
\[
(2\pi)^{-N/2} \int_S e^{-\|x-v\|^2/2} \diff x \leq
e^{\|v\|^2/2} \delta^{1/2}.
\]
\end{lemma}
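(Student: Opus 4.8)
The statement is a standard Gaussian change-of-measure estimate, and the natural route is to directly manipulate the integrand by completing the square, then bound what remains by Cauchy--Schwarz. The plan is as follows. First I would write out the ratio of the two Gaussian densities pointwise: for fixed $v$,
\[
e^{-\|x-v\|^2/2} = e^{-\|x\|^2/2}\, e^{\langle x,v\rangle - \|v\|^2/2}.
\]
Thus the shifted integral equals $(2\pi)^{-N/2}\int_S e^{-\|x\|^2/2} e^{\langle x,v\rangle} \diff x$ times $e^{-\|v\|^2/2}$. This reduces the problem to controlling $\int_S e^{-\|x\|^2/2} e^{\langle x,v\rangle}\diff x$ in terms of $\delta = (2\pi)^{-N/2}\int_S e^{-\|x\|^2/2}\diff x$.

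The key step is to apply the Cauchy--Schwarz inequality with respect to the standard Gaussian measure $\diff\mu(x) = (2\pi)^{-N/2} e^{-\|x\|^2/2}\diff x$, splitting the integrand over $S$ as $\One_S(x)\cdot e^{\langle x,v\rangle}$:
\[
(2\pi)^{-N/2}\int_S e^{-\|x\|^2/2} e^{\langle x,v\rangle}\diff x
= \int \One_S(x)\, e^{\langle x,v\rangle}\diff\mu(x)
\leq \left(\int \One_S(x)^2 \diff\mu(x)\right)^{1/2}\left(\int e^{2\langle x,v\rangle}\diff\mu(x)\right)^{1/2}.
\]
The first factor is exactly $\delta^{1/2}$. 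The second factor is the Gaussian moment generating function evaluated at $2v$: since $\langle x,v\rangle$ is a centered Gaussian with variance $\|v\|^2$ under $\mu$, we get $\int e^{2\langle x,v\rangle}\diff\mu(x) = e^{2\|v\|^2}$, so its square root is $e^{\|v\|^2}$. Multiplying back the prefactor $e^{-\|v\|^2/2}$ gives the bound $e^{-\|v\|^2/2}\cdot e^{\|v\|^2}\cdot \delta^{1/2} = e^{\|v\|^2/2}\delta^{1/2}$, as claimed.

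There is no real obstacle here — the computation of the Gaussian MGF and the completion of the square are both routine — so the main thing to be careful about is simply bookkeeping the normalization constants: making sure the $(2\pi)^{-N/2}$ factors are attributed correctly so that the first Cauchy--Schwarz factor comes out as $\delta^{1/2}$ rather than $\delta^{1/2}$ times a stray power of $2\pi$, and that the MGF factor is computed against the \emph{normalized} measure $\mu$. One alternative worth noting is that the bound is not tight — a sharper estimate is possible by a smarter choice of how to split the exponential weight in Cauchy--Schwarz (interpolating between the two densities), but since the lemma as stated suffices for the downstream rare-events argument, the crude split above is all that is needed.
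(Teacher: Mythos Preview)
Your proof is correct and is essentially the same as the paper's: both expand $e^{-\|x-v\|^2/2} = e^{-\|v\|^2/2} e^{-\|x\|^2/2} e^{\langle x,v\rangle}$ and then apply Cauchy--Schwarz, with the only cosmetic difference being that you phrase it as Cauchy--Schwarz against the Gaussian measure $\diff\mu$ while the paper writes the equivalent splitting $(e^{-\|x\|^2/4} e^{v\cdot x})(e^{-\|x\|^2/4}\chi_S)$ against Lebesgue measure. The resulting factors and final bound are identical.
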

\begin{proof}
Let $\chi_S$ be the indicator function for the set $S$.  We compute
\begin{equation}
\begin{split}
(2\pi)^{-N/2}
\int_S e^{-\|x-v\|^2/2} \diff x
&= (2\pi)^{-N/2}\int e^{-\|x-v\|^2/2} \chi_S(x) \diff x \\
&= (2\pi)^{-N/2} e^{-\|v\|^2/2} \int e^{-\|x\|^2/2} e^{v\cdot x} \chi_S(x) \diff x \\
&\leq e^{-\|v\|^2/2} \Big((2\pi)^{-N/2}\int e^{-\|x\|^2/2} e^{2v\cdot x} \diff x\Big)^{1/2} \\
&\qquad \qquad \Big((2\pi)^{-N/2} \int e^{-\|x\|^2/2} \chi_S(x)\diff x\Big)^{1/2} \\
&= e^{-\|v\|^2/2} \Big(e^{2\|v\|^2} (2\pi)^{-N/2} \int e^{-\|x-2v\|^2/2} \diff x\Big)^{1/2} \\ 
&\qquad\qquad \Big((2\pi)^{-N/2} \int e^{-\|x\|^2/2} \chi_S(x)\diff x\Big)^{1/2} \\
&= e^{\|v\|^2/2} \delta^{1/2}.
\end{split}
\end{equation}
In the inequality above we applied Cauchy-Schwartz by
writing
\[
e^{-\|x\|^2/2} e^{v\cdot x} \chi_S(x)
= (e^{-\|x\|^2/4} e^{v\cdot x})
(e^{-\|x\|^2/4} \chi_S(x)).
\]
\end{proof}

Next we prove our first rare events lemma, Lemma \ref{lem:probability-attenutation}, depicted in Fig.~\ref{fig:shifted-gaussians}.
\begin{lemma}[Rare events lemma I]\label{lem:probability-attenutation}
Take $A\sim \mathcal N(0,1)^{n \times n}.$ Let $g:\Real^{n\times n} \to\Real$ be a function such that 
\[
||\Per(A)|^2 - g(A)|\leq \eps
\]
holds with probability $1-\delta$.  Let $B$ be an arbitrary matrix with entries $|b_{ij}|\leq 1$.  Then 
\[
||\Per(A+tB)|^2 - g(A+tB)| \leq \eps
\]
holds with probability at least $1-\sqrt{e^{\|tB\|^2}\cdot\delta}$, where $\|\cdot\|$ is the Hilbert-Schmidt norm.
\end{lemma}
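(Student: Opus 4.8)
The plan is to reduce the statement directly to Lemma~\ref{lem:shifted-gaussian-integral}. First I would identify $\Real^{n\times n}$ with $\Real^N$ for $N=n^2$, under which the Hilbert--Schmidt norm on matrices becomes the Euclidean norm on $\Real^N$, and observe that if $A\sim\mcal N(0,1)^{n\times n}$ then $A+tB$ is distributed as $\mcal N(tB,1)^{n\times n}$, i.e.\ a standard Gaussian on $\Real^N$ shifted by the vector $v:=tB$, with density proportional to $e^{-\|x-tB\|^2/2}$ (here and throughout $\|\cdot\|$ is the Hilbert--Schmidt norm, which I would flag explicitly since this is the one bookkeeping point that needs care).

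Next I would define the ``bad set'' $S:=\{X\in\Real^{n\times n}\ :\ ||\Per(X)|^2-g(X)|>\eps\}$, which is measurable since $g$ is (and $\Per$ is a polynomial). The hypothesis is exactly the statement that $\Prob_{A\sim\mcal N(0,1)^{n\times n}}[A\in S]\leq\delta$, i.e.\ $(2\pi)^{-N/2}\int_S e^{-\|x\|^2/2}\diff x\leq\delta$. The event we must control is $\{A:\ A+tB\in S\}$, and by the translation $y=x+tB$ (which has unit Jacobian and sends $\|x\|^2$ to $\|y-tB\|^2$),
\[
\Prob_{A\sim\mcal N(0,1)^{n\times n}}[A+tB\in S]
= (2\pi)^{-N/2}\int_{\Real^N}\chi_S(x+tB)\,e^{-\|x\|^2/2}\diff x
= (2\pi)^{-N/2}\int_{\Real^N}\chi_S(y)\,e^{-\|y-tB\|^2/2}\diff y .
\]
This is precisely the left-hand side of Lemma~\ref{lem:shifted-gaussian-integral} applied to the set $S$ with shift vector $v=tB$.

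Applying Lemma~\ref{lem:shifted-gaussian-integral}, this quantity is at most $e^{\|tB\|^2/2}\,\delta_S^{1/2}$, where $\delta_S:=(2\pi)^{-N/2}\int_S e^{-\|x\|^2/2}\diff x\leq\delta$; since $x\mapsto\sqrt{x}$ is increasing and $e^{\|tB\|^2/2}>0$, the bound is at most $e^{\|tB\|^2/2}\sqrt{\delta}=\sqrt{e^{\|tB\|^2}\delta}$. Hence the complementary (good) event $||\Per(A+tB)|^2-g(A+tB)|\leq\eps$ holds with probability at least $1-\sqrt{e^{\|tB\|^2}\delta}$, as claimed. There is essentially no obstacle beyond Lemma~\ref{lem:shifted-gaussian-integral} itself: the substance of the argument (a Cauchy--Schwarz / change-of-measure estimate for shifted Gaussians) is already packaged there, and what remains is only the translation trick that turns a shifted-mean query into a translate of the fixed bad set.
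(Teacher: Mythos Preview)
Your proposal is correct and follows essentially the same approach as the paper: define the bad set $S$, identify $\Real^{n\times n}$ with $\Real^{n^2}$, and apply Lemma~\ref{lem:shifted-gaussian-integral} with shift vector $v=tB$. Your write-up is slightly more careful (explicitly doing the change of variables and noting the monotonicity of $\sqrt{\cdot}$ to pass from $\delta_S\le\delta$ to the stated bound), but the argument is the same.
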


\begin{proof}
Let $S\subset\Real^{n\times n}$ be the set 
\[
S := \{A \mid \Real^{n\times n} \mid 
||\Per(A)|^2 -g(A)| >\eps\}.
\]
We apply Lemma \ref{lem:shifted-gaussian-integral} with the set $S$
above on $\Real^N = \Real^{n\times n}$, where $\Prob(A\in S) = \delta$ and 
thus the lemma shows that 
$\Prob(A+tB\in S) \leq e^{\|tB\|^2/2} \delta^{1/2}$ as desired.
\end{proof}

We will leverage this lemma, in combination with the second rare events lemma in the next section, to prove the first nontrivial average-case harness of sampling theorem in Sec.\ \ref{sec:results-sampling}.

\subsection{Rare events lemma II: tail probabilities for orthogonal submatrices and i.i.d.\ Gaussians}

To prove our hardness of sampling Theorem \ref{thm:no-sampler}, further technical innovations are required beyond our first rare events Lemma \ref{lem:probability-attenutation}. A key issue that remains is that an average-case sampler that works with very high probability $1-\delta$ over the choice of BosonSampling experiment does \emph{not} immediately imply (by Stockmeyer counting) a $\BPP^\NP$ algorithm for computing Gaussian permanents with probability $1-\delta$. The issue is that submatrices of Haar random orthogonal matrices are not known to be exponentially close to Gaussian in TV distance, but rather have only been shown to be inverse polynomially close \cite{jiang-ma}. Thus setting the sampler success probability to $1-\delta$ where $\delta =2^{-O(n)}$ does not automatically yield a correspondingly good algorithm for computing Gaussian permanents.

To fix this, we prove yet another ``rare events'' lemma, Proposition \ref{prp:small-small}, that allows us to transfer our high probability algorithm for Haar submatrices to Gaussian matrices. The proof requires some highly nontrivial random matrix theory, exploiting properties of the probability densities and spectra of i.i.d.\ Gaussian matrices and submatrices of Haar orthogonals, and may be of independent interest.

Formally, we consider two models of $n\times n$ random matrices.  The first is a Gaussian matrix $X^n$ with
independent (real) entries of variance $n^{-1}$.  It has a probability density given
by
\[
p_G(X) = Z_G^{-1}(n) ( \prod_{i\in[n]} \exp(-n\lambda_i(X^TX)/2)
\]
where $\lambda_i(A)$ is the $i$-th eigenvalue of $A$.  The factor of $n$ comes from the normalization we apply, and $Z_G^{-1}(n)$ is
a normalization constant so 
that
\[
\int_{\Real^{n\times n}} p_G(X)\diff X = 1.
\]

The second model is that of a $n\times n$ submatrix of a Haar-random $m\times m$ orthogonal matrix.
We rescale by $\sqrt{m/n}$ so that the individual entries have variance $n^{-1}$.
Then for $m\geq 2n$ the probability density takes the form (see \cite{jiang-ma}, Lemma 2.1)
\[
p_S(X) = Z_S^{-1}(n,m) \prod_{i\in[n]} (1-n\lambda_i(X^TX)/m)^{(m-2n)/2} \One_{\lambda_i \leq m/n}.
\]

Our main result in this section is the following:

\begin{restatable}[Rare events lemma II]{proposition}{smallsmall}
\label{prp:small-small}
Let $E\subset\Real^{n\times n}$ be a measurable subset of
matrices, and suppose that
\[
\Prob_S(E) \leq \delta
\]
when $E$ is sampled as the $n\times n$ submatrix of a Haar-random $m\times m$ orthogonal matrix,
scaled by $\sqrt{m/n}$ so that each entry has variance $n^{-1}$.   Let $0<\alpha\leq1$ and
suppose that $n>C$ and $m>Cn^2$ for some absolute 
constant $C$.  Then
\[
\Prob_G(E) \leq 3\exp(-n^\alpha) + 10\exp( n^{\alpha/2})\delta,
\]
where $\Prob_G$ indicates that $E$ is sampled with independent Gaussian entries of
variance $n^{-1}$.
\end{restatable}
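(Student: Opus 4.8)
The plan is to compare the two probability densities $p_G$ and $p_S$ directly on the event $E$, using the fact that both are functions only of the eigenvalues $\lambda_1,\dots,\lambda_n$ of $X^TX$ (equivalently, the squared singular values of $X$). The key quantitative input is the pointwise ratio
\[
\frac{p_G(X)}{p_S(X)} = \frac{Z_S(n,m)}{Z_G(n)} \prod_{i\in[n]} \frac{\exp(-n\lambda_i/2)}{(1-n\lambda_i/m)^{(m-2n)/2} \One_{\lambda_i\leq m/n}},
\]
so that $\Prob_G(E) = \int_E p_G = \int_E \frac{p_G}{p_S} p_S$. The first step is therefore to bound the logarithm of the per-eigenvalue ratio. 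Writing $u_i = n\lambda_i/m \in [0,1]$, one has $-\log(1-u_i)^{(m-2n)/2} = -\frac{m-2n}{2}\log(1-u_i) = \frac{m-2n}{2}(u_i + u_i^2/2 + \cdots)$, while $\exp(-n\lambda_i/2)$ contributes $-\frac{m u_i}{2}$ to the log. Combining, the log-ratio per eigenvalue is $-\frac{m u_i}{2} + \frac{m-2n}{2}(u_i + O(u_i^2)) = -n u_i + \frac{m-2n}{2}O(u_i^2) + \dots$. Since $u_i = n\lambda_i/m$ and $m = \Omega(n^2)$, we have $u_i = O(\lambda_i/n)$, so the quadratic and higher terms are controlled as long as $\lambda_i$ is not too large — say $\lambda_i \leq n^{\alpha/2}$ or a similar polynomial threshold. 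On this "good" spectral region the product of ratios is bounded by $\exp(Cn^{\alpha/2})$ (or better), which is exactly the factor appearing in the claimed bound.

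The second step handles the "bad" region where some $\lambda_i$ is large. Here I would invoke a standard operator-norm tail bound for the Gaussian model: $\Prob_G(\|X\|_{op}^2 \geq t) \leq 2\exp(-cnt)$ for $t$ a large constant (this is the Bai–Yin / Davidson–Szarek estimate, using the $n^{-1}$ normalization so that $\|X\|_{op} \approx 2$ typically). Taking the threshold at $t \asymp n^{\alpha - 1}$ — wait, more carefully: we want $\Prob_G(\max_i \lambda_i \geq T) \leq 3\exp(-n^\alpha)$, which with the tail bound $2\exp(-cnT)$ forces $T \asymp n^{\alpha-1}$; but then on the complementary event $\lambda_i \leq n^{\alpha-1}$, which is even more favorable for the ratio bound. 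So the argument splits as $\Prob_G(E) \leq \Prob_G(E \cap \{\text{spectrum bounded}\}) + \Prob_G(\{\text{spectrum unbounded}\})$, the first term bounded by $\exp(Cn^{\alpha/2})\cdot\Prob_S(E) \leq 10\exp(n^{\alpha/2})\delta$ after absorbing constants, and the second by $3\exp(-n^\alpha)$. A subtlety: I also need that the normalization ratio $Z_S(n,m)/Z_G(n)$ is bounded (by a subexponential, ideally by a constant), which follows by integrating the pointwise ratio bound against $p_S$ over the whole space and noting the bad region has small $p_S$-measure too, or by a direct asymptotic computation of both partition functions.

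I expect the main obstacle to be the bookkeeping in the density-ratio estimate near the boundary: controlling $\prod_i (1-n\lambda_i/m)^{-(m-2n)/2}$ requires care because individual $\lambda_i$ can in principle approach $m/n$, at which point the $p_S$-density vanishes but the ratio blows up. The resolution is that the Gaussian event forcing $\lambda_i$ near $m/n = \Omega(n)$ is astronomically unlikely (probability $\leq \exp(-\Omega(n^2))$), far smaller than $\exp(-n^\alpha)$, so it is swallowed by the error term; but making this rigorous means one cannot simply bound the ratio pointwise on all of $E$, and instead must intersect with a spectral-confinement event first, then separately bound the Gaussian measure of its complement. A secondary technical point is getting the constants in the exponent to land at exactly $n^{\alpha/2}$ and $n^\alpha$ rather than merely $n^{\alpha}$ and $n$ — this uses that $m > Cn^2$ with $C$ large, so that $\frac{m-2n}{2}\cdot u_i^2 = \frac{m-2n}{2}\cdot\frac{n^2\lambda_i^2}{m^2} = O(\frac{n^2 \lambda_i^2}{m}) = O(\lambda_i^2)$ on the good region, keeping the quadratic correction comparable to (not larger than) the linear one.
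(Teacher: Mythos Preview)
Your overall strategy --- compare densities, split $E$ into a spectrally-confined good region and a bad region whose Gaussian probability is tiny, then bound $p_G/p_S$ pointwise on the good region --- is exactly what the paper does. But there is a genuine quantitative gap in your execution that prevents the argument from landing at $\exp(n^{\alpha/2})$.

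The issue is the size of the log-ratio on your good region. You write the per-eigenvalue log-ratio as $-nu_i + \tfrac{m-2n}{2}O(u_i^2)$ with $u_i = n\lambda_i/m$. Summing over $i$, the linear part is $-\tfrac{n^2}{m}\trace(X^TX)$ and the quadratic part is $\approx \tfrac{n^2}{2m}\trace((X^TX)^2)$. With only the constraint $\lambda_{\max}(X^TX)\le K$ (or even $\le n^{\alpha/2}$), each of these traces is of order $n$, so the log-ratio is of order $\tfrac{n^3}{m}=\Theta(n)$ for $m\asymp n^2$, not $O(n^{\alpha/2})$. Concretely, if every $\lambda_i$ equals some $K>2$ then $\trace[(X^TX)^2-2X^TX]=nK(K-2)$, which is $\Theta(n)$ and positive, giving $p_G/p_S = \exp(\Theta(n))$. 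So spectral confinement alone is too crude.

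What saves the day --- and what the paper exploits --- is that the relevant quantity is the \emph{centered} linear statistic $\trace[(X^TX)^2-2X^TX]$, which has Gaussian mean $O(1)$ and concentrates far better than the naive $O(n)$ bound: one has $\Prob_G\big(|\trace[(X^TX)^2-2X^TX]|>100\sqrt{t}\big)\le e^{-t}+e^{-n}$ via Lipschitz concentration (after truncating $t\mapsto t^2-2t$ to keep the Lipschitz constant bounded). Choosing $t=n^\alpha$ puts this statistic at $O(n^{\alpha/2})$ with probability $1-O(e^{-n^\alpha})$, and then $\tfrac{n^2}{m}\cdot O(n^{\alpha/2})=O(n^{\alpha/2})$ for $m>Cn^2$. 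So your good event must include \emph{both} $\lambda_{\max}\le 4$ and $|\trace[(X^TX)^2-2X^TX]|\le 100n^{\alpha/2}$; the second condition is the one actually driving the $n^{\alpha/2}$ exponent.

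A smaller point: your proposed route to bounding $Z_S/Z_G$ by ``integrating the pointwise ratio bound against $p_S$'' is circular, since the ratio already contains $Z_S/Z_G$. The paper instead finds a set $A$ with $\Prob_G(A),\Prob_S(A)\in[1/2,1]$ on which the \emph{unnormalized} ratio is within a factor of $e$, and deduces $1/10\le Z_G/Z_S\le 10$ from the identity $\Prob_G(A)/\Prob_S(A) = \int_A (p_G/p_S)\,p_S / \int_A p_S$.
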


Proposition \ref{prp:small-small} has a highly nontrivial proof that we give in Appendix \ref{sec:rare-events}, using analytic and random matrix theory techniques.

\section{Hardness of computing output probabilities, Theorem \ref{thm:dilution-intro}}\label{sec:wtac}

Our first result makes progress towards proving the Permanent-of-Gaussians Conjecture (PGC).
Theorem \ref{thm:dilution-intro} gives a new worst-to-average-case reduction for computing Gaussian permanents whose additive error tolerance exponentially improves on the state-of-the-art. For the first time, our error tolerance matches to leading order that of the Permanent-of-Gaussians Conjecture (PGC), $\exp\big({-n\log{n}-n-O(n^\delta)}\big)$ compared to the goal of $\exp\big({-n\log{n}-n-O(\log n)}\big)$. All that remains is ``merely'' to improve the $O(n^\delta)$ term in the exponent to $O(\log n)$.

\dilutionrestatable*

\begin{proof} 
Take any arbitrary constants $\delta > \varepsilon>0$ and for ease of notation, define $k\coloneq \lfloor n^\varepsilon \rfloor.$

Recall that the output probability of a BosonSampling experiment is 
$$p_R\coloneq\frac{\left|\Per R\right|^2}{m^n} = \frac{\left|\Per R\right|^2}{n^{2n}},$$
where $R\sim\mcal N(0,1)^{n\times n}$ and the number of modes $m=\Theta(n^2).$ Let $\mcal A$
be an algorithm that given as input $R$ approximates $p_R$ up to additive error $\gamma$, with success probability at least $1-\eta$ over the choice of $R$ for some constant $\eta<1/4.$ Additionally, consider a ``worst-case'' matrix $W_{dilute}$ consisting of an upper-left block $W'\in\{0,\pm1\}^{k\times k}$ with all other entries being 0.

We will show that then there exists a $\mathsf{BPP}^{\mathsf{NP}^{\mathcal A}}$ procedure that given as input \emph{any} matrix $W_{dilute},$ approximates $\left|\Per W'\right|$ up to 
small relative error for 
$\gamma=\exp(-n\log n -n - O(n^\delta)),$
with constant success probability $1-\eta'$ for $\eta'$ slightly $> \eta$. 
The theorem statement follows immediately from the \textsf{\#P}-hardness of computing even a multiplicative approximation to the permanent of a $\{0,\pm 1\}$ matrix. 

Define the polynomial 
\begin{equation}\label{eq:dilution-poly}
    \left|\Per(A(t))\right| \coloneq \left|\Per(A(0)+tW_{dilute})\right|,
\end{equation}
where $A(0)\sim\mcal N(0,1)^{n\times n}$ and and $W_{dilute}$ is as above. Then $\left|\Per(A(t))\right|$ is a degree $k$ polynomial in $t$ whose leading coefficient is  $\left|\Per W'\right|\left|\Per R'\right|,$ where $R'$ is the complementary minor to $W'.$ This polynomial is illustrated in Fig.\ \ref{fig:dilution}.

As computed in Lemma \ref{lem:tvd-pinsker}, the total variation distance between the distributions of $A(t)$ and $A(0)$ is $O(kt).$ This follows from the KL divergence between two translated Gaussians and an application of Pinsker's inequality.

Consider $O(k)$ equally spaced points $\{t_i\}$ in the interval $[0,\Delta]$ for $\Delta=O(1/k)$. For suitable choice of constants, we can ensure that for each $t_i$,
\[
\Pr\left[ \bigg| \mcal A (A(t_i)) - \frac{\left|\Per(A(t_i))\right|^2}{n^{2n}}\bigg| \geq \gamma\right]\leq \eta + O(k \Delta) \leq \eta'
\]
for some slightly larger constant $\eta'.$ 
Then the $\mathsf{BPP}^{\mathsf{NP}^{\mathcal A}}$ procedure is as follows: query the \textsf{NP} oracle for a degree $k$ polynomial $q$ such that $|q(t_i)|^2/n^{2n}$ is $\pm \gamma$-close to the value obtained by $\mcal A$ for at least half of the points $\{t_i\}$. 
This admits a certificate that can be efficiently verified by checking each point $\{t_i\}$ for agreement between $\mcal A$ and $|q(t_i)|^2/n^{2n}.$ Return as output 
$|q_k|/\left|\Per R'\right|.$ 

Finally, we will use Lemma \ref{lem:square-trick-coex} to guarantee that additive error $\gamma = \exp(-n\log n -n - O(n^\delta))$ gives a good relative error estimate of $\left|\Per W'\right|$. Call $S$ the subset of points $\{t_i\}$ at which $\mcal A$ and $|q(t_i)|^2/n^{2n}$ agree and observe that the points are $O(1/k^2)$-separated. Moreover by permanent anticoncentration, $\inf_{t\in S} |q(t)| \geq \sqrt{n!}/\poly(n)$
with probability at least $1-1/\poly(n).$ Recalling that the leading-order coefficient of Eq.\ \ref{eq:dilution-poly} is $\left|\Per W'\right| \left|\Per R'\right|$ and that $\left|\Per R'\right|\geq\sqrt{n!}e^{-O(k\log n)},$ Lemma \ref{lem:square-trick-coex} gives that $\vert(\left|\Per W'\right| - \frac{\left|q_k\right|}{\left|\Per R'\right|})\vert \ll \vert\Per W'\vert$ if $\gamma/\left|\Per R\right|^2 = e^{-O(k\log n)}=e^{-O(n^\delta)}.$

Overall, we have a $\mathsf{BPP}^{\mathsf{NP}^{\mathcal A}}$ procedure to multiplicatively estimate $\left|\Per W'\right|$ if $\gamma = \exp(-n\log n -n - O(n^\delta))$, which concludes the proof.
\end{proof}

This proof technique carries over to Random Circuit Sampling, which we show in Appendix \ref{sec:dil-rcs}. 

\section{Hardness of sampling, Theorem \ref{thm:no-sampler}}\label{sec:results-sampling}

The goal of this section is to prove the following theorem, which closes the robustness gap for the first time at the expense of winnowing the failure probability to which we can prove hardness from $1/\poly(n)$ to $1/\exp(O(n))$.

\nosampler*

To prove Theorem \ref{thm:no-sampler}, we will assume the following anticoncentration conjecture, illustrated in Fig.~\ref{fig:anticoncentration}. We provide numerical evidence for Conjecture \ref{conj:anticoncentration} in Appendix \ref{sec:num}.

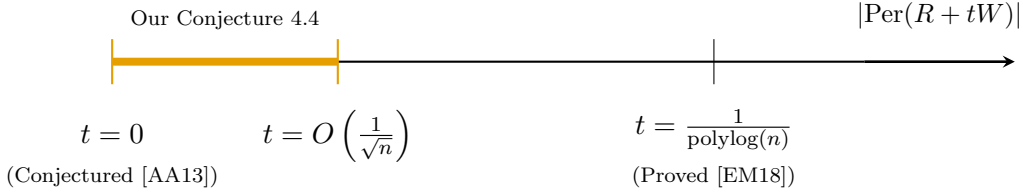
\begin{figure}[t!]
    \centering
    \begin{tikzpicture}

        \definecolor{accessibleorange}{RGB}{230, 159, 0}

        \draw[line width=3pt, accessibleorange] (0,0) -- (3,0);
        \draw[thick] (3,0) -- (10,0);
        
        \draw[thick,->, >=stealth, line width=1pt] (10,0) -- (12,0);

        \draw[thick, accessibleorange] (0, -0.3) -- (0, 0.3); 
        \draw[thick, accessibleorange] (3, -0.3) -- (3, 0.3); 
        \draw (8, -0.3) -- (8, 0.3);

        \node[below] at (0, -0.7) {$t = 0$};
        \node[below] at (3, -0.5) {$t = O\left(\frac{1}{\sqrt{n}}\right)$};
        \node[below] at (8, -0.5) {$t \geq \frac{1}{\text{polylog}(n)}$};

        \node[below] at (0, -1.25) {\scriptsize (Conjectured \cite{Aaronson2013})};
        \node[below] at (8, -1.25) {\scriptsize (Proved \cite{eldar-mehraban,ji2021approximating})};

        \node[above, font=\scriptsize] at (1.5, 0.3) {Our Conjecture \ref{conj:anticoncentration}};

        \node[above] at (11, 0.3) {\small $\left|\Per (R+tW)\right|$};

    \end{tikzpicture}
    \caption{Conjecture \ref{conj:anticoncentration} is that permanents of the form $\left|\Per (R+tW)\right|$ are $ \geq (n!/\poly(n))^{-1}$ for matrices $W$ whose entries are bounded by $1$ and for $t=O(\frac{1}{\sqrt n})$. This interpolates between $t=0,$ i.e.\ PACC \cite {Aaronson2013}, 
    and $t \geq \frac{1}{\text{polylog}(n)},$ where anticoncentration is proved unconditionally by \cite{ji2021approximating}, improving upon \cite{eldar-mehraban}. In other words, we conjecture that permanents along the thickened orange line are at least as anticoncentrated as $\left|\Per R\right|$ at $t=0.$}
    \label{fig:anticoncentration}
\end{figure}

\anticoncentration*

Intuitively, the statement is that permanents of nonzero-mean Gaussian matrices are at least as anticoncentrated as are zero-mean Gaussians. In fact, the only setting in which there exists a proof of anticoncentration for Gaussian permanents\footnote{Although there are proofs of so-called ``weak'' anticoncentration, these do not imply the stronger form of anticoncentration 
necessary for the reductions
made throughout the BosonSampling literature.}
is in the case of nonzero mean, in particular for $\mcal N(t,1)^{n\times n}$ matrices with $t$ at least $1/\poly\log n$ \cite{ji2021approximating, eldar-mehraban}. With this exception, all forms of anticoncentration for BosonSampling remain open to date, to the authors' knowledge. 

In order to prove Theorem \ref{thm:no-sampler}, we prove a robust worst-to-average-case reduction that synthesizes the techniques developed earlier: coefficient extraction, the square method, magnification, and the first rare events lemma, Lemma \ref{lem:probability-attenutation}. 

\begin{theorem}
\label{thm:hardness-of-probs}
It is \textsf{\#P}-hard to compute $\left|\Per R\right|^2$ for $R\sim\mcal N(0,1)^{n\times n}$ to $1/\poly(n)$ relative error, with probability at least $1-\exp(-O(n))$ over the choice of $R,$ assuming Conjecture \ref{conj:anticoncentration}.
    \end{theorem}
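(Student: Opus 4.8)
The plan is to prove Theorem~\ref{thm:hardness-of-probs} by combining the worst case magnification reduction of Lemma~\ref{lem:magnification} with the first rare events lemma, Lemma~\ref{lem:probability-attenutation}. The core idea is that an average-case algorithm succeeding with probability $1-\exp(-O(n))$ over $\mathcal N(0,1)^{n\times n}$ is robust enough to be ``ported out of distribution,'' allowing us to evaluate the permanent polynomial at points $t = O(1/\sqrt n)$ rather than merely $t = O(1/n)$, which in turn halves the coefficient-extraction error blowup in the exponent and makes the whole reduction go through at $1/\poly(n)$ relative error.

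\textbf{Step 1: Set up the magnification reduction.} Suppose for contradiction that there is an algorithm $\mathcal A$ computing $|\Per R|^2$ to $1/\poly(n)$ relative error with probability $1-\delta$, $\delta = \exp(-O(n))$, over $R\sim\mathcal N(0,1)^{n\times n}$. Fix a small constant $\varepsilon>0$ and take the worst-case matrix $W$ to be block-diagonal: a $\#\mathsf P$-hard block $W'\in\{0,\pm1\}^{n^\varepsilon\times n^\varepsilon}$ in direct sum with the all-$1$s matrix of size $(n-n^\varepsilon)$, exactly as in Fig.~\ref{fig:all-tricks}. Form $q(t) = \Per(R+tW)$, a degree-$n$ real polynomial whose leading coefficient is $q_n = \Per W'\cdot(n-n^\varepsilon)!$, so that recovering $|q_n|$ to additive error $\tfrac13(n-n^\varepsilon)!$ suffices to pin down the integer $\Per W'$ exactly (up to sign, which is removed by the square method).

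\textbf{Step 2: Sample the polynomial out of distribution.} Choose $S$ to be $2n+1$ evenly spaced points in $[-\Delta,\Delta]$ with $\Delta = O(1/\sqrt n)$, rather than $O(1/n)$. For each $t_i\in S$, the matrix $R + t_iW$ is Gaussian with a shifted mean and $\|t_iW\|^2 = O(t_i^2 n^2) = O(n)$ in Hilbert--Schmidt norm. By Lemma~\ref{lem:probability-attenutation}, $\mathcal A$ computes $|\Per(R+t_iW)|^2$ to its stated relative error with probability at least $1 - \sqrt{e^{O(n)}\delta}$, which is $1 - 1/\poly(n)$ (in fact much better) provided the constant in $\delta = \exp(-O(n))$ is large enough relative to the constant in $\|t_iW\|^2 = O(n)$. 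A union bound over the $O(n)$ points in $S$ keeps the total failure probability at $1/\poly(n)$; simultaneously, Conjecture~\ref{conj:anticoncentration} gives $|q(t_i)| = |\Per(R+t_iW)| \geq K := \sqrt{n!}/\poly(n)$ for all $t_i\in S$ with probability $1-1/\poly(n)$ (this is precisely the ``gently perturbed'' regime $t=O(1/\sqrt n)$ that the conjecture was formulated to cover).

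\textbf{Step 3: Coefficient extraction via the square method, and tallying the error.} We run the $\mathsf{BPP}^{\mathsf{NP}}$ procedure: ask the oracle for a degree-$n$ polynomial $p$ with $\sup_{x\in S}|p^2(x)-q^2(x)|\leq\gamma$, where $\gamma$ is the additive error (in the value $|\Per|^2$, after clearing the $n^{2n}$ denominator relating probabilities to squared permanents) guaranteed by $\mathcal A$; this has an efficiently checkable certificate. Apply Lemma~\ref{lem:square-trick-coex} with $d=n$, separation $\delta_{\mathrm{sep}} = O(1/(\sqrt n\cdot n)) = O(1/n^{3/2})$, and $K = \sqrt{n!}/\poly(n)$, to obtain
\[
\big||p_n|-|q_n|\big| \leq 2^{2n+1}e^{2n}(n\delta_{\mathrm{sep}})^{-n} K^{-1}\gamma
= \Delta^{-n}\cdot 2^{O(n)}\cdot \frac{\poly(n)}{\sqrt{n!}}\cdot\gamma.
\]
Since $\mathcal A$ achieves $1/\poly(n)$ relative error on $|\Per(R+t_iW)|^2 \approx n!/\poly(n)$ (by anticoncentration), $\gamma \leq n!/\poly(n)$. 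Plugging in $\Delta = O(1/\sqrt n)$ gives $\Delta^{-n} = n^{n/2}\cdot 2^{O(n)} = \sqrt{n!}\cdot 2^{O(n)}$ by Stirling, so $\big||p_n|-|q_n|\big| \leq \sqrt{n!}\cdot 2^{O(n)}\cdot \poly(n)/\sqrt{n!}\cdot n!/\poly(n) = n!\cdot 2^{O(n)}$ — this must be compared with the magnification budget $(n-n^\varepsilon)! = n!\cdot e^{-O(n^\varepsilon\log n)}$, and the comparison is \emph{tight up to the $2^{O(n)}$ slack}, which is exactly why we needed $\Delta = O(1/\sqrt n)$ and not $O(1/n)$ (the latter would cost an extra $n^{n/2} = e^{(n/2)\log n}$, dwarfing everything). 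After rounding $|p_n|$ to the nearest multiple of $(n-n^\varepsilon)!$ we recover $\Per W'$, contradicting $\#\mathsf P$-hardness. Propagating back through $p_R = |\Per R|^2/n^{2n}$, the relative error $\mathcal A$ needs on output probabilities is $1/\poly(n)$, completing the reduction.

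\textbf{The main obstacle} I expect is the accounting in Step 3: verifying that the $2^{O(n)}$ (equivalently $e^{O(n)}$) factors from Lemma~\ref{lem:square-trick-coex}, from Stirling in both $\Delta^{-n}$ and $\sqrt{n!}$, and from the magnification ratio $(n-n^\varepsilon)!/n! = e^{-O(n^\varepsilon\log n)}$ all genuinely cancel to leave a $1/\poly(n)$ relative-error requirement with room to spare, and — crucially — checking the interplay in Step 2 between the two competing $e^{O(n)}$-type quantities: the failure-probability amplification $\sqrt{e^{\|t_iW\|^2}\delta}$ demands $\delta$'s exponent constant exceed $\|t_iW\|^2$'s, while $\|t_iW\|^2 = \Theta(t_i^2 n^2)$ with $t_i$ up to $\Delta$ forces $\Delta \lesssim C/\sqrt n$ for a constant $C$ tied to that same exponent. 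One has to choose the constants in the right order ($\delta$'s exponent first, then $\Delta$, then the number of sample points) so the chain closes; this is the delicate part, though none of it is conceptually hard once Lemmas~\ref{lem:magnification} and~\ref{lem:probability-attenutation} are in hand.
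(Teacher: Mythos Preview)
Your proposal is correct and follows essentially the same approach as the paper: the block-diagonal worst case $W = W'\oplus \mathbf{1}_{(n-n^\varepsilon)}$, the use of Lemma~\ref{lem:probability-attenutation} to justify querying at $t=O(1/\sqrt n)$, invoking Conjecture~\ref{conj:anticoncentration} for the lower bound $K$, and the square-method coefficient extraction via Lemma~\ref{lem:square-trick-coex} are all exactly what the paper does. The only cosmetic differences are that the paper works on $[0,\Delta]$ rather than $[-\Delta,\Delta]$, concludes with a $1/\poly(n)$ \emph{multiplicative} estimate of $|\Per W|$ rather than exact recovery by rounding, and---most relevantly to your flagged obstacle---pins down the constants explicitly as $t^\ast = 4e^{2.5}\,n^{1/k}/\sqrt n$ and $\beta = \exp(-16e^5 n - O(\log n))$; the extra $n^{1/k}=1+o(1)$ factor is precisely what turns your ``$n!\cdot 2^{O(n)}$'' into $n!\cdot e^{-n^{1-\varepsilon}\log n}$, comfortably below the $(n-n^\varepsilon)!$ budget, so your instinct that the chain closes once the constants are ordered correctly is right.
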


\begin{proof}[Proof of Thm.~\ref{thm:hardness-of-probs}]
Take an arbitrary constant $\eps>0$ and for ease of notation, define $k\coloneq \lfloor n^\eps \rfloor.$

Let $\mcal A $ be an algorithm that given as input $R\sim\mcal N(0,1)^{n\times n},$ approximates $\left|\Per R\right|^2$ to within $1/\poly(n)$ relative error, with probability at least $1-\exp(-O(n))$ over the choice of $R.$ Additionally, consider a ``worst-case'' block-diagonal matrix $W$ with an upper-left block $W'\in\{0,\pm 1\}^{k\times k}$, and a lower-right $(n-k)\times (n-k)$ block of the all $1$s matrix.

We will show that then there exists a $\mathsf{BPP}^\mathsf{NP^{\mcal A}}$ procedure that given as input \emph{any} such matrix $W,$ approximates $\left|\Per W'\right|$  to within small relative error, 
with success probability at least $\frac{2}{3}.$
The theorem statement follows immediately from the \textsf{\#P}-hardness of computing a multiplicative approximation to the permanent of a $\{0,\pm 1\}$ matrix.

Define the polynomial 
\begin{align}
    \left|\Per A(t)\right| \coloneq \left|\Per (A(0)+tW)\right|,
\end{align}
where $A(0)\sim \mcal N(0,1)^{n\times n}$ and $W$ is as above. Then $\left|\Per A(t) \right|$ is a degree $n$ polynomial in $t$ whose leading coefficient is  $\left|\Per W\right| = (n-k)! \left|\Per W'\right|.$
This polynomial is illustrated in Fig.\ \ref{fig:all-tricks}.

By Lemma \ref{lem:probability-attenutation}, if $\mcal A$ computes a $\pm\gamma$-approximation to $\left|\Per R\right|^2$ with probability at least $1-\beta$, then it computes a $\pm\gamma$-approximation to $\left|\Per A(t)\right|^2$ with probability at least $1-\sqrt{\beta\cdot e^{t^2 n^2}}.$ As in the theorem statement, we take $\beta=\exp(-O(n))$ so that $\mcal A$ has at least $1-1/\poly(n)$ probability to correctly compute $\left|\Per A(t^\ast)\right|^2$ where $t^\ast = O(1/\sqrt{n}).$ 
In particular, we will take $t^\ast=4e^{2.5}\cdot n^{k^{-1}}/\sqrt{n} =4e^{2.5} (1+o(1))/\sqrt{n}$, 
and $\beta = 
\exp(-16e^5 n - O(\log n)).$

Consider $O(n)$ equally spaced $\{t_i\}$ in the interval $[0,\Delta]$ for $\Delta=O(1/\sqrt n).$ By a union bound, all the points are correct to within $\pm\gamma$ with probability at least $1-1/\poly(n).$
Then the $\mathsf{BPP}^{\mathsf{NP}^{\mathcal A}}$ procedure is as follows: query the \textsf{NP} oracle for a degree $n$ polynomial $q$ such that $|q(t_i)|^2$ is $\pm \gamma$-close to the value obtained by $\mcal A$ for at least half of the points $\{t_i\}$. 
This admits a certificate that can be efficiently verified by checking each point $\{t_i\}$ for agreement between $\mcal A$ and $|q(t_i)|^2.$ Return as output 
$|q_n|.$ 

Finally, Lemma \ref{lem:square-trick-coex} guarantees that $|q_n|$ is a good multiplicative estimator of $\left|\Per W\right|.$ 
As $\mcal A$ obtains a $1/\poly(n)$ relative error approximation to $\left|\Per R\right|^2$, we have $\gamma=n!/\poly(n).$ Assuming Conjecture \ref{conj:anticoncentration}, $|\Per A(t_i)|\geq \sqrt{n!}/\poly(n)$ on the set of points at which the \textsf{NP} oracle and $\mcal A$ agree. By construction, $\left|\Per W\right| = (n-k)! \left|\Per W'\right|
= n!\exp(-k\log n+O(k\log k)).$ Recalling from above that $t^\ast=4e^{2.5}\cdot n^{k^{-1}}/\sqrt{n} =4e^{2.5} (1+o(1))/\sqrt{n}$ and substituting all these values into Eq.\ \ref{eq:sq-trick-coex} of Lemma \ref{lem:square-trick-coex}, we find at last that $||p_n| - \left|\Per W\right|| \leq  \frac{1}{\poly(n)}\vert\Per W\vert.$
Overall, we have a $\mathsf{BPP}^{\mathsf{NP^{\mcal A}}}$ procedure to multiplicatively estimate $\left|\Per W\right|,$ which concludes the proof.
\end{proof}

We have now developed the machinery to prove the main result of this section, Theorem \ref{thm:no-sampler}. Our robust worst-to-average-case reduction in Theorem \ref{thm:hardness-of-probs}, combined with the second rare events lemma, Lemma \ref{prp:small-small}, allows us to prove the first nontrivial hardness of \emph{sampling} result for average-case BosonSampling. 
\nosampler*

\begin{proof}[Proof of Theorem \ref{thm:no-sampler}]
Suppose such a sampler exists. Then, given as input a Haar-random orthogonal matrix, to within $1/\poly(n)$ relative error one can compute the squared permanent of the submatrix corresponding to a given output probability in $\mathsf{BPP}^\mathsf{NP}$ via Stockmeyer's approximate counting algorithm \cite{stockmeyer1983complexity}. Next we invoke Proposition \ref{prp:small-small} (proved in Appendix \ref{sec:rare-events}) which says that ``rare event'' $E$ sampled as the $n\times n$ submatrix of an $m\times m$ Haar-random orthogonal matrix that occurs with probability $\Prob_S(E)\leq\delta$,  occurs with probability $\Prob_G(E)\leq \delta\cdot \exp(O( \sqrt{n})) + O(\exp(-n)) $ if $E$ is instead sampled from the i.i.d.\ Gaussian measure. Consequently $\Prob_S(E)\leq\exp(-O(n))$ implies \begin{align*}
\Prob_G(E)&\leq \exp(-O(n))\cdot \exp(O( \sqrt{n})) + O(\exp(-n)) \\&\leq \exp(-O(n)).    
\end{align*} This suffices to show that the sampler of the theorem statement likewise has $1-\exp(-O(n))$ success probability to correctly compute the squared permanents a matrix drawn from the $n\times n$ i.i.d.\ Gaussian matrices to within $1/\poly(n)$ relative error. By Theorem \ref{thm:hardness-of-probs}, doing so is \textsf{\#P}-hard. Finally by Toda's theorem, this collapses \textsf{PH}.

\end{proof}

Finally, we show that the failure probability in Theorems \ref{thm:no-sampler} and \ref{thm:hardness-of-probs} exponentially improves upon the ``trivial'' algorithm that computes $\Per W$ directly. 
The intuition is that an algorithm to compute $\Per((1-t)R+t W)$ for $t$ very close to $1$ would need failure probability at most $\exp(-\tilde{O}(n^3))$, as it takes $\tilde{O}(n^3)$ bits to specify $W,$ an $n\times n$ matrix of reals specified to $\tilde{O}(n)$ bits of precision. By comparison, Theorem \ref{thm:hardness-of-probs} pertains to a sampler that fails with probability at most $\exp(-O(n))$. 

We formalize this intuition as follows: Lemma \ref{lem:nontriviality} shows that if matrices $A$ and $A+\delta B$ with entries bounded by 1 are sufficiently close, i.e.\ for sufficiently small $\delta,$ then their permanents are also close. As we show $\delta$ to be $\exp(-\tilde{O}(n^3))$, an algorithm that computes the permanents of all but this tiny fraction of matrices correctly is guaranteed to approximate the permanents of the remaining matrices, as well.

\begin{lemma}\label{lem:nontriviality}
Suppose that $A,B$ are matrices with entries $|a_{ij}|, |b_{ij}|\leq 1$ and $\delta < \frac{1}{100} n^{-1} (n!)^{-1}$.
Then 
\[
|\Per(A) - \Per(A+\delta B)| \leq 1.
\]
\end{lemma}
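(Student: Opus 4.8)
The plan is to expand the permanent directly and bound the discrepancy term by term. Recall $\Per(M) = \sum_{\sigma\in S_n}\prod_{i=1}^n M_{i,\sigma(i)}$. Writing $M = A+\delta B$ and expanding each product $\prod_{i}(A_{i,\sigma(i)}+\delta B_{i,\sigma(i)})$ as a sum over subsets $T\subseteq[n]$ of the monomials $\delta^{|T|}\prod_{i\in T}B_{i,\sigma(i)}\prod_{i\notin T}A_{i,\sigma(i)}$, the term with $T=\varnothing$ reproduces $\prod_i A_{i,\sigma(i)}$, which cancels against $\Per(A)$. Hence
\[
\Per(A+\delta B)-\Per(A)
= \sum_{\sigma\in S_n}\ \sum_{\varnothing\neq T\subseteq[n]} \delta^{|T|}\prod_{i\in T}B_{i,\sigma(i)}\prod_{i\notin T}A_{i,\sigma(i)}.
\]

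First I would use $|a_{ij}|,|b_{ij}|\le 1$ to bound each monomial by $\delta^{|T|}$, so that
\[
\bigl|\Per(A+\delta B)-\Per(A)\bigr|
\ \le\ \sum_{\sigma\in S_n}\ \sum_{j=1}^{n}\binom{n}{j}\delta^{j}
\ =\ n!\,\bigl((1+\delta)^n-1\bigr).
\]
Next I would invoke the elementary inequality $(1+\delta)^n - 1 \le n\delta\,(1+\delta)^{n-1}$, and note that the hypothesis forces $\delta \le \tfrac1n$, whence $(1+\delta)^{n-1}\le e$. Therefore the right-hand side is at most $n!\cdot e\,n\,\delta$. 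Plugging in $\delta < \tfrac{1}{100}\,n^{-1}(n!)^{-1}$ gives $n!\cdot e\,n\,\delta < \tfrac{e}{100} < 1$, which is the claim.

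There is no real obstacle here: the only thing to be careful about is bookkeeping of the constant — one must check that the crude bound $(1+\delta)^n-1\le e n\delta$ (valid for $\delta\le 1/n$) together with the stated $\tfrac{1}{100}$ leaves enough slack, and it does since $e<100$. (If one preferred to avoid $e$ entirely, the bound $(1+\delta)^n-1\le 2n\delta$ for $\delta \le 1/n$ works just as well, giving $\le \tfrac{2}{100}<1$.)
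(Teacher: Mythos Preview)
Your proof is correct and follows essentially the same route as the paper: expand the permanent, bound the difference by $n!\bigl((1+\delta)^n-1\bigr)$, and then check this is below $1$ under the hypothesis on $\delta$. Your final estimate $(1+\delta)^n-1\le e\,n\,\delta$ is in fact a bit cleaner than the paper's, which detours through $\exp(n\delta)$, but the argument is the same.
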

\begin{proof}
\begin{align*}
|\Per(A+\delta B)-\Per(A)|
&=|\sum_\pi [\prod_{i=1}^n
a_{i\pi(i)}
- \prod_{i=1}^n (a_{i\pi(i)}+\delta b_{i\pi(i)})]| \\
&\leq 
\sum_\pi |\prod_{i=1}^n (1+\delta b_{i\pi(i)}) - 1| \\
\end{align*}
Each term in the above sum is bounded by 
$(1+\delta)^N-1$.  Thus, as there are $n!$ terms we compute
\begin{align*}
|\Per(A+\delta B)-\Per(A)|
&\leq n! ((1+\delta)^N-1) \\ 
&\leq n! (|\exp(N\delta)-1|
+ |\exp(N\delta) - (1+\delta)^N|) \\
&< 1.
\end{align*}
\end{proof}
The key point is that if $g$  is an approximation to the permanent that is wrong on $\frac12$ of the cube of width $n^{-1}(n!)^{-1})$ centered at some matrix $A_0$, then  it is in particular wrong on a set of volume $(n^{-1}(n!)^{-1})^{n^2}$, which is to say $\exp^{-Cn^3\log n}$.  That means there is a trivial answer to the question only for error probabilities like $\exp(-\tilde O(n^3))$ rather than $\exp(-O(n))$.

\section*{Acknowledgements}\label{sec:ack}

We thank Scott Aaronson, Daniel Grier, Hari Krovi, and Umesh Vazirani for insightful discussions.
A.B.~ and I.D.~were supported in part by the AFOSR under grants FA9550-21-1-0392 and FA9550-24-1-0089.
A.B.~was supported in part by the DOE QuantISED grant DE-SC0020360 and by the U.S. DOE Office of Science under Award Number DE-SC0020377.
I.D.~was supported in part by the Lieberman Fellowship.
B.F.~acknowledges support from the National Science Foundation under Grant CCF-2044923 (CAREER), by the U.S. Department of Energy, Office of Science, National Quantum Information Science Research Centers (Q-NEXT) and by the DOE QuantISED grant DE-SC0020360.  
F.H.~was supported by the John and Fannie Hertz fellowship and NSF award DMS-2303094.
This work was done in part while the authors were visiting the Simons Institute for the Theory of Computing, supported by DOE QSA grant \#FP00010905 and NSF QLCI Grant No. 2016245.

\newpage

\section*{Appendices}

\appendix

\section{How far can you shift and scale i.i.d.\ Gaussian matrices?}\label{sec:pinsker}

In this section, we quantify how much an i.i.d.\ Gaussian matrix $R$ is perturbed under ``shift'' and ``scale,'' namely dilation by $(1-t)$ and translation by $tW$ for $t\in[0,1]$ and worst-case matrix $W.$ In other words, for what values of $t$ is the distribution over $(1-t)R+tW$ a constant total variation distance from that of unperturbed distribution over $R$? 

The proof proceeds by an explicit calculation of the KL divergence between two Gaussians followed by Pinsker's inequality. This is observed in \cite{krovi2022}, with similar calculations appearing in \cite{jiang-ma} and \cite{Chabaud2022quantuminspired}. For completeness, we give the proof here. 

It follows immediately from the proof that distributions that are only shifted, not scaled, likewise give $O(nt)$ total variation distance---this is the case for coefficient extraction. For Gaussians under shifts only, \cite{Aaronson2013} (Lemma 48) also calculates a total variation distance of $O(nt)$ but by a different method.

\begin{lemma}[Autocorrelation of Gaussian distribution] \label{lem:tvd-pinsker}
     $\|\mathcal{D}_{(1-t)R+tW}-\mathcal D_{R}\|_\text{TVD}\leq O(nt).$
\end{lemma}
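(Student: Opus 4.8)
The plan is to observe that both distributions are products of independent univariate Gaussians and to bound the total variation distance by Pinsker's inequality applied to the (additive) Kullback--Leibler divergence. Write $N=n^2$ for the number of entries. Since $R$ has i.i.d.\ $\mcal N(0,1)$ entries, the matrix $(1-t)R+tW$ has independent entries as well, with the $(i,j)$ entry distributed as $\mcal N\big(t w_{ij},(1-t)^2\big)$; thus $\mcal D_{(1-t)R+tW}=\bigotimes_{i,j}\mcal N(tw_{ij},(1-t)^2)$ while $\mcal D_R=\bigotimes_{i,j}\mcal N(0,1)$. Because relative entropy tensorizes over product measures,
\[
D_{\mathrm{KL}}\!\left(\mcal D_R \,\big\|\, \mcal D_{(1-t)R+tW}\right)
=\sum_{i,j} D_{\mathrm{KL}}\!\left(\mcal N(0,1)\,\big\|\,\mcal N(t w_{ij},(1-t)^2)\right),
\]
so it suffices to control each one-dimensional term uniformly over $|w_{ij}|\le 1$.

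For the one-dimensional bound I would use the closed form
\[
D_{\mathrm{KL}}\!\left(\mcal N(\mu_0,\sigma_0^2)\,\big\|\,\mcal N(\mu_1,\sigma_1^2)\right)
=\log\frac{\sigma_1}{\sigma_0}+\frac{\sigma_0^2+(\mu_0-\mu_1)^2}{2\sigma_1^2}-\frac12,
\]
specialized to $\mu_0=0$, $\sigma_0=1$, $\mu_1=t w_{ij}$, $\sigma_1=1-t$. A short Taylor expansion in $t$ (valid for, say, $t\le\tfrac12$; the bound is in any case only informative when $t=O(1/n)$) shows that the linear-in-$t$ contributions coming from $\log(1-t)$ and from $\tfrac{1}{2(1-t)^2}$ cancel exactly, leaving
\[
D_{\mathrm{KL}}\!\left(\mcal N(0,1)\,\big\|\,\mcal N(t w_{ij},(1-t)^2)\right)
=\Big(1+\tfrac12 w_{ij}^2\Big)t^2+O(t^3)=O(t^2),
\]
using $|w_{ij}|\le1$. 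This cancellation is the one point requiring care: it is precisely what upgrades the per-coordinate estimate from a naive $O(t)$ to $O(t^2)$, and hence the final bound from a weaker $O(n\sqrt{t})$ to $O(nt)$.

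Summing over the $N=n^2$ entries gives $D_{\mathrm{KL}}(\mcal D_R\,\|\,\mcal D_{(1-t)R+tW})=O(n^2 t^2)$, and Pinsker's inequality $\|P-Q\|_{\mathrm{TV}}\le\sqrt{\tfrac12 D_{\mathrm{KL}}(P\,\|\,Q)}$ then yields $\|\mcal D_{(1-t)R+tW}-\mcal D_R\|_\text{TVD}\le\sqrt{O(n^2 t^2)}=O(nt)$, as claimed. The shift-only statement recorded after the lemma (used for coefficient extraction, where one considers $A(t)=R+tW$ in place of $(1-t)R+tW$) follows by the identical argument: there the per-coordinate divergence is simply $D_{\mathrm{KL}}(\mcal N(0,1)\,\|\,\mcal N(t w_{ij},1))=\tfrac12 t^2 w_{ij}^2=O(t^2)$, so the same tensorization and Pinsker steps deliver $O(nt)$. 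Beyond verifying this cancellation, there is essentially no obstacle; the computation is routine.
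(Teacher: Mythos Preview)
Your proof is correct and follows essentially the same approach as the paper: tensorize the KL divergence over the $n^2$ independent Gaussian coordinates, observe that each one-dimensional divergence is $O(t^2)$ (with the linear terms cancelling), sum, and apply Pinsker. The only cosmetic difference is that you compute $D_{\mathrm{KL}}(\mcal D_R\,\|\,\mcal D_{(1-t)R+tW})$ whereas the paper uses the reverse order $D_{\mathrm{KL}}(\mcal D_{(1-t)R+tW}\,\|\,\mcal D_R)$; since total variation is symmetric and Pinsker applies to either direction, this is immaterial, and both computations yield the same $O(t^2)$ per entry.
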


\begin{proof}
We obtain an upper bound on total variation distance via Pinsker's inequality: 
    \begin{align}\label{eq:pinsker}
    \sqrt{2} \|\mathcal{D}_{(1-t)R+tW}-\mathcal D_{R}\|_\text{TVD}\leq \sqrt{D_{\text{KL}}(\mathcal{D}_{(1-t)R+tW},\mathcal D_{R})},
\end{align}
where on the right we have the KL divergence. By definition $R_{ij}\sim\mcal N(0,1)$, so $((1-t)R+tW)_{ij}\sim \mathcal N(t w_{ij},(1-t)^2)$. 
The KL divergence between two Gaussians is
\begin{align}
    D_{\text{KL}}(\mathcal{N}(\mu_0,\sigma_0),~\mathcal{N}(\mu_1,\sigma_1)) = \frac{(\mu_0-\mu_1)^2+\sigma_0^2}{2\sigma_1^2}+\log\frac{\sigma_1}{\sigma_0}-\frac{1}{2}.
\end{align}
So \begin{align}
    D_{\text{KL}}(\mcal N(t w_{ij},(1-t)^2),~\mathcal{N}(0,1))&=
    O(t^2),
\end{align}
as $w_{ij}=O(1).$ Note that KL divergence is not symmetric so the order above matters. 

Recalling that the KL divergence is additive for independent distributions, the RHS of Eq.~\ref{eq:pinsker} is 
\begin{align}
    \sqrt{D_{\text{KL}}(\mathcal{D}_{(1-t)R+tW},\mathcal D_{R})} = O(nt).
\end{align}

\end{proof}

Notably, in sharp contrast to the simple bound above, the analogous bounds for the shift-and-scale behavior of distributions in the ``low-mode'' or saturated limit regime of BosonSampling are highly nontrivial and are detailed in \cite{low-modes}.

\section{Corollaries for Random Circuit Sampling}\label{sec:rcs}

In this section we describe corollaries of our techniques for Random Circuit Sampling. First, we prove the discrete Remez inequality and give a much simpler proof of Robust Berlekamp-Welch, introduced in \cite{Bouland2021}.

\subsection*{Discrete Remez Inequality and Robust Berlekamp-Welch}\label{sec:rbw}

\noindent Powering both extrapolation and coefficient extraction is the discrete Remez inequality, proved in this section.

\discreteremez*

\begin{proof}
Using Lagrange interpolation, we can write
\[
p(x) = \sum_{j=0}^d p(x_j) \frac{\prod_{k\not=j}(x-x_k)}{\prod_{k\not=j} (x_j-x_k)}.
\]
To see that this identity holds, observe that it holds at any $x_j$ and that both sides are
polynomials of degree $d$.  Substituting $x=L$ and observing $|L-x_k|\leq L$, we obtain the bound
\[
|p(L)| \leq  L^d \max_{0\leq j\leq d}|p(x_j)|  \max_{j} \prod_{k\not=j} |x_j-x_k|^{-1}.
\]
It remains to show that
\begin{equation}
\label{eq:denominator-bd}
\max_{j} \prod_{k\not=j} |x_j-x_k|^{-1} \leq e^{2d} (\delta d)^{-d},
\end{equation}
which by taking logarithms is equivalent to
\[
\max_j \sum_{k\not= j} \log |x_j-x_k|^{-1} \leq 2d +  d\log (d\delta)^{-1}.
\]
We use the layer-cake formula to estimate the sum, writing
\begin{align*}
\sum_{k\not=j} \log |x_j-x_k|^{-1}
&= \sum_{k\not= j} \int_0^{\log |x_j-x_k|^{-1}} \diff t \\
&= \int_0^\infty \#\{k \mid \log |x_j-x_k|^{-1} > t \}\diff t \\
&= \int_0^1 s^{-1}\#\{k \mid |x_j-x_k| < s \}\diff s.
\end{align*}
The second step follows from Fubini's theorem, and the last step from the change of variables
$s=e^{-t}$.
The $\delta$-separated hypothesis on $x_j$ implies
\[
\#\{k \mid |x_j-x_k| < s \}
\leq \begin{cases}
0, & s<\delta \\
2\delta^{-1}s, & \delta \leq s \leq d\delta\\
d, & s > d\delta.
\end{cases}
\]
Therefore
\[
\sum_{k\not=j} \log |x_j-x_k|^{-1}
\leq \int_\delta^{d\delta} 2\delta^{-1} \diff s
+ \int_{d\delta}^1 ds^{-1}\diff s
= 2d + d\log(d\delta)^{-1},
\]
which concludes our proof of~\eqref{eq:denominator-bd}.
\end{proof}

As a consequence of Lemma~\ref{lem:discrete-remez}, we obtain a simpler proof of
Robust Berlekamp-Welch, which was initially developed in \cite{Bouland2021}. 
\begin{theorem}[Robust Berlekamp-Welch bound]
\label{thm:rbw}
Let $D=\{(x_i,y_i)\}_{i=1}^{M}$ be a set of $2(d+1) < M < 100d$ data points with $x_i$ evenly spaced
on the interval $[0,\Delta]$.  Suppose that $P_1$ and $P_2$ are degree-$d$ polynomials
which satisfy
\begin{equation}
   \#\{j \mid |P_a(x_j) - y_j| \geq \delta\} < M/4 \label{eq:small-fraction-wrong} 
\end{equation}

for $a=1,2$.  Then
\begin{equation}
|P_1(1) - P_2(1)| \leq (C \Delta^{-1})^d \delta. \label{eq:rbw-bound}
\end{equation}
\end{theorem}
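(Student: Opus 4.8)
The plan is to prove Theorem~\ref{thm:rbw} (Robust Berlekamp--Welch) by a pigeonhole argument followed by an application of the discrete Remez inequality, Lemma~\ref{lem:discrete-remez}. First I would observe that since each of $P_1$ and $P_2$ disagrees with the data on fewer than $M/4$ of the $M$ abscissae by \eqref{eq:small-fraction-wrong}, the set of points $x_j$ at which \emph{both} polynomials are within $\delta$ of $y_j$ has size more than $M - M/4 - M/4 = M/2 > d+1$. Call this set $S$ and note $|S| \geq d+2$; on every $x_j \in S$ the triangle inequality gives $|P_1(x_j) - P_2(x_j)| \leq |P_1(x_j) - y_j| + |y_j - P_2(x_j)| < 2\delta$.

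Next I would apply the discrete Remez inequality to the polynomial $p \coloneq P_1 - P_2$, which has degree at most $d$. The abscissae in $S$ lie in $[0,\Delta]$ and, since the $x_i$ are evenly spaced on $[0,\Delta]$ with $M < 100d$ of them, consecutive points are separated by $\Delta/(M-1) \geq \Delta/(100d)$, so $S$ is $\delta'$-separated with $\delta' \coloneq \Delta/(100d)$. We only need $d+1$ of the points in $S$ to invoke Lemma~\ref{lem:discrete-remez} (discarding extras if necessary), and we take $L = 1$. The lemma, stated for points in $[0,1]$, yields
\[
|p(1)| \leq \left(e^2 (\delta' d)^{-1}\right)^{d} \max_{x_j \in S} |p(x_j)| \leq \left(e^2 \cdot \frac{100d}{\Delta d}\right)^{d} \cdot 2\delta = \left(\frac{100 e^2}{\Delta}\right)^{d} 2\delta.
\]
Absorbing the constant $200e^2$ into $C$ gives $|P_1(1) - P_2(1)| = |p(1)| \leq (C\Delta^{-1})^d \delta$, which is exactly \eqref{eq:rbw-bound}. (One minor bookkeeping point: Lemma~\ref{lem:discrete-remez} is stated for $\{x_j\}_{j=0}^d \subset [0,1]$, so I would note explicitly that $[0,\Delta] \subseteq [0,1]$ since $\Delta < 1$, and that a $\delta'$-separated set of $d+1$ points suffices.)

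The only genuine subtlety — and the step I would be most careful about — is the pigeonhole count: one must confirm that the hypothesis $2(d+1) < M$ guarantees $|S| > d+1$, i.e.\ that after removing up to $M/4$ bad points for each of $P_1, P_2$ there are still strictly more than $d+1$ good points left. Since $M > 2(d+1)$ we have $M/2 > d+1$, so $|S| \geq M - M/2 = M/2 > d+1$; this is exactly where the lower bound on $M$ in the hypothesis is used. Everything else is a routine substitution of the separation parameter into Lemma~\ref{lem:discrete-remez} and a constant chase, so no further obstacles are anticipated.
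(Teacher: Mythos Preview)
Your proposal is correct and follows essentially the same approach as the paper's proof: a pigeonhole argument showing $P_1$ and $P_2$ are simultaneously $\delta$-close to the data on more than $M/2>d+1$ points, followed by the discrete Remez inequality (Lemma~\ref{lem:discrete-remez}) applied to $P_1-P_2$ with separation $O(\Delta/d)$. The paper's proof is a terse three sentences stating exactly this outline; you have simply filled in the bookkeeping (the triangle inequality, the explicit separation $\Delta/(100d)$ from $M<100d$, and the constant chase) that the paper leaves implicit.
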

\begin{proof}
The set on which $P_1$ and $P_2$ agree has at least $M/2 > (d+1)$ points.  These points
are $O(\Delta d^{-1})$-separated.  The conclusion follows from an application of
Lemma~\ref{lem:discrete-remez}.
\end{proof}

The key observation is that Theorem \ref{thm:rbw} can be turned into an algorithm in $\mathsf{BPP}^{\mathsf{NP}}$ that carries out extrapolation.
In the reduction, $P_2$ is supplied by an \textsf{NP} oracle, where Eq.\ \ref{eq:small-fraction-wrong} is the efficiently-verifiable predicate.
Thus the algorithmic interpretation of Theorem \ref{thm:rbw} is that in $\mathsf{P}^{\mathsf{NP}}$, one can estimate at $t=1$ a polynomial given faraway points close to $t=0,$ even when a constant fraction of the points are utterly corrupted. This is admissible because our use of Stockmeyer approximate counting already necessitates a $\mathsf{BPP}^\mathsf{NP}$ reduction. The point is that 
because the worst-case is $\textsf{\#P}$-hard, a reduction at any finite level of \textsf{PH} induces its collapse.

\subsection*{Square method for extrapolation}\label{sec:square-extrapolation}

\begin{lemma}[The square method for extrapolation]
\label{lem:square-trick-extrapolation}
Let $p$ and $q$ be real-valued polynomials of degree $d$, and let
$S\subset [0,1]$ be a $\delta$-separated set of points with $|S|=2d+1$.  Then,
\begin{equation}
\label{eq:square-trick-extrap-bound}
|p^2(1) - q^2(1)| \leq E |p(1)| + E^2
\end{equation}
where
\[
E = (e^2 (d\delta)^{-1})^d \max_{x\in S} \frac{|p^2(x)-q^2(x)|}{|p(x)|}.
\]
\end{lemma}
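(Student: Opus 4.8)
The plan is to mimic the structure of the proof of Lemma~\ref{lem:square-trick-coex}, but to extrapolate the polynomial $p$ itself to $t=1$ rather than read off its top coefficient, and then to square back up at the end. First I would observe that at each point $x\in S$ we have the algebraic identity $p(x)-q(x) = \frac{p^2(x)-q^2(x)}{p(x)+q(x)}$ whenever $p(x)+q(x)\neq 0$; more robustly, for each $x\in S$ there is a sign $\sigma_x\in\{\pm 1\}$ with $|p(x)-\sigma_x q(x)| \le \frac{|p^2(x)-q^2(x)|}{|p(x)|}$, exactly as in Lemma~\ref{lem:difference-of-squares} (take $q\mapsto \sigma_x q$ so that $p$ and $\sigma_x q$ have the same sign, and use $|p+\sigma_x q|\ge |p|$). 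So the ``difference'' polynomial we want to control on $S$ is $p-q$ after flipping the sign of $q$ consistently.

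Next I would do the pigeonhole step: partition $S=S^+\cup S^-$ according to $\sigma_x$, and note one of them, say $S^+$, has at least $d+1$ points. On $S^+$ the polynomial $r\coloneq p-q$ (degree $\le d$) satisfies $|r(x)| \le \frac{|p^2(x)-q^2(x)|}{|p(x)|}$ for all $x\in S^+$. Since $S^+$ is a $\delta$-separated set of at least $d+1$ points in $[0,1]$, the discrete Remez inequality (Lemma~\ref{lem:discrete-remez}, with $L=1$ and $\ell=1$) gives $|r(1)| \le (e^2(d\delta)^{-1})^d \max_{x\in S^+}|r(x)| \le (e^2(d\delta)^{-1})^d \max_{x\in S}\frac{|p^2(x)-q^2(x)|}{|p(x)|} = E$. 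Thus $|p(1) - \sigma q(1)| \le E$ for the majority sign $\sigma$, i.e.\ $q(1)^2 = (\sigma q(1))^2$ and $|\,|p(1)| - |\sigma q(1)|\,| \le |p(1)-\sigma q(1)| \le E$, so $|\,|q(1)| - |p(1)|\,| \le E$.

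Finally I would square up: writing $a=|p(1)|$, $b=|q(1)|$ with $|a-b|\le E$, we get $|p^2(1)-q^2(1)| = |a^2-b^2| = |a-b|(a+b) \le E(2a + |a-b|) \le 2E|p(1)| + E^2$. Hmm — this gives $2E|p(1)|+E^2$ rather than the claimed $E|p(1)|+E^2$, so I would either absorb the factor $2$ into the constant by stating $E$ with a leading $2$ (note the claimed bound already suppresses constants: the paper's Remez constant is $(2e^2(d\delta)^{-1}\ell)^d$ in Lemma~\ref{lem:discrete-remez-main-body}, so there is room), or more cleanly use $a+b \le 2b + E \le \cdots$; in any case the stated form holds up to adjusting the implicit constant inside $E$. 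The one genuine subtlety — the step I expect to need the most care — is the sign bookkeeping: $\sigma$ depends on the (unknown) true polynomials, but it is a single global sign on the majority set $S^+$, and $q^2(1)$ is insensitive to it, so no ambiguity actually propagates to the final inequality. Everything else is a direct reuse of Lemmas~\ref{lem:difference-of-squares} and~\ref{lem:discrete-remez}.
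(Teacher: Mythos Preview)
Your approach is essentially identical to the paper's: apply Lemma~\ref{lem:difference-of-squares} pointwise on $S$ to get a sign $\sigma_x$ with $|p(x)-\sigma_x q(x)|\le |p^2(x)-q^2(x)|/|p(x)|$, pigeonhole on the sign to find a $\delta$-separated subset of size $\ge d+1$, apply the discrete Remez inequality (Lemma~\ref{lem:discrete-remez}) to the degree-$d$ polynomial $p-\sigma q$ to get $|p(1)-\sigma q(1)|\le E$, and then factor $|p^2(1)-q^2(1)|=|p(1)-\sigma q(1)|\,|p(1)+\sigma q(1)|$.

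Your observation about the factor of $2$ is accurate, and in fact the paper's own proof has the same slip: it asserts $|p(1)+q(1)|\le |p(1)|+|p(1)-q(1)|$, which is not valid in general (take $p(1)=1$, $q(1)=0.9$). The honest bound from $|p(1)+\sigma q(1)|\le 2|p(1)|+|p(1)-\sigma q(1)|$ yields $2E|p(1)|+E^2$, exactly as you found; this harmless constant is absorbed in all downstream uses.
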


\begin{proof}[Proof of Lemma~\ref{lem:square-trick-extrapolation}]

By Lemma~\ref{lem:difference-of-squares}, we can conclude that
\[
||p(x)|-|q(x)|| \leq |p(x)|^{-1} |p(x)^2 - q(x)^2|
\]
for each data point $j$.  In particular, for each $x\in S$ there exists a sign
$\sigma_x\in\{\pm 1\}$ such that
\[
|p(x)-\sigma_xq(x)| \leq |p(x)|^{-1} |p(x)^2 - q(x)^2|.
\]
Let $S^+ = \{x\mid \sigma_x=1\}$ and $S^-=\{x\mid\sigma_x = -1\}$.  Since
$|S^+\cup S^-|=2d+1$, it follows that either $|S^+|\geq d+1$ or $|S^-|\geq d+1$.  Without
loss of generality suppose that $|S^+|\geq d+1$.  Then $S^+$ is
also a $\delta$-separated set of points, so by Lemma~\ref{lem:discrete-remez},
\[
|p(1) - q(1)|
\leq
(e^2 d\delta)^{-d} \max_{x\in S^+} \frac{|p^2(x) - q^2(x)|}{|p(x)|}
\leq E
\]
To obtain~\eqref{eq:square-trick-extrap-bound} we use the triangle inequality
to bound $|q(1)|\leq |p(1)|+|p(1)-q(1)|$ and write
\[
|p^2(1)-q^2(1)|
= |p(1)-q(1)||p(1)+q(1)|
\leq |p(1)-q(1)|(|p(1)| + |p(1)-q(1)|).
\]
\end{proof}

\subsection*{Dilution for Random Circuit Sampling: Corollary \ref{cor:rcs-dilution}}\label{sec:dil-rcs}

In this section, we show that the argument for Thm.\ \ref{thm:dilution-intro} can be adapted to Random Circuit Sampling, as well. 
The proof follows readily from the dilution argument illustrated in Fig.\ \ref{fig:rcs-dilution} combined with well-established machinery from \cite{movassagh_quantum_2020, Kondo2021_robustness, Bouland2021, movassagh2023hardness}.

\rcsdilution*

\begin{proof}[Proof sketch]

As in the proof of Theorem \ref{thm:dilution-intro}, we note that output probabilities of a circuit have a polynomial structure, in this case coming from the Feynman path integral. Depicted in Fig.\ \ref{fig:dilution}, we take a random circuit supported on $n$ qubits, calling the circuit supported on the first $n^\varepsilon$ qubits $R_A$ and on the latter $n-n^\varepsilon$ qubits $R_B.$ 

Then, we perturb only the gates supported on the first $n^\varepsilon$ qubits, circuit $R_A,$ to a worst-case circuit $W_A$ by the Cayley transform parametrized by $\theta$ \cite{movassagh_quantum_2020,movassagh2023hardness}. Notably, two conditions hold: output probabilities of the $\theta$-perturbed random circuit family are a low-degree rational function in $\theta,$ in particular with degree $(O(n^\varepsilon),O(n^\varepsilon)),$ and moreover the total variation distance between the initial and $\theta$-perturbed distributions is $O(k\theta).$ 

By Theorem \ref{thm:rbw}, an algorithm $\mcal A$ to compute output probabilities from the unperturbed circuit up to additive error $\gamma$ can be in $\mathsf{BPP}^{\mathsf{NP}^{\mcal A}}$ converted into Robust Berlekamp-Welch extrapolation that computes output probabilities of \emph{any} circuit, e.g.,\ a Fourier Sampling circuit, up to additive error $\gamma\cdot 2^{n+O(n^\delta)}.$ This is $\textsf{\#P}$-hard, completing the proof.
\end{proof}

\begin{remark}[\textit{Depth barrier}]
Referring to Lemma \ref{lem:square-trick-extrapolation}, our extrapolation bounds make use of lower bounds on the polynomial close to $\theta=0.$ In this way we invoke anticoncentration, proved for random circuits at log depth \cite{logANTICONCENTRATION}. In doing so, Corollary \ref{cor:rcs-dilution} is depth-sensitive, i.e.\ requires sufficiently deep circuits, and thus overcomes the depth barrier described in the Introduction.
\end{remark}

\begin{remark}[\textit{Born-rule barrier}]
    Recall the ``Born-rule'' barrier identified by Krovi \cite{krovi2022}, namely that the additive error needed to prove the hardness of average-case sampling ($~2^{-n}$) is already larger than the additive error known to be hard in the worst case ($2^{-2n}$, which is derived from the Born rule by squaring the output amplitude of a Quantum Fourier Sampling circuit). Without dilution, it is seemingly impossible to prove a worst-to-average case reduction in which the additive error in the average case \emph{is larger} than the additive error we need to obtain in the worst case. Corollary \ref{cor:rcs-dilution} overcomes the Born-rule barrier by simply 
    ``diluting'' the worst-case instance to be polynomially smaller than the average-case instance.
\end{remark}

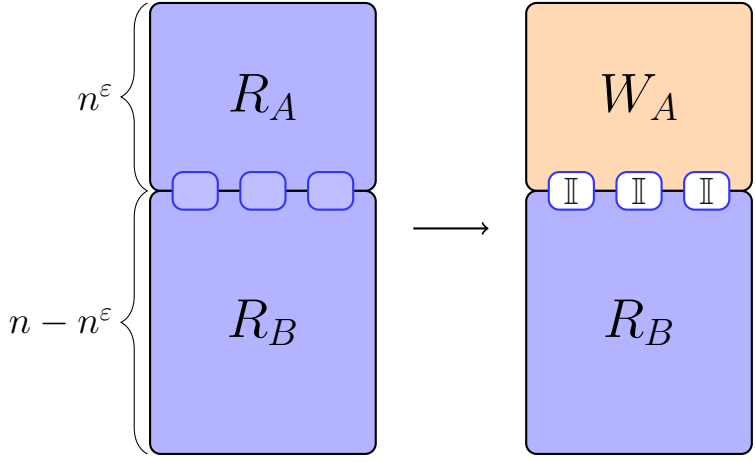
\begin{figure}[ht]
\hspace{2cm} 
\begin{tikzpicture} 

\draw[thick, rounded corners, fill=blue!30] (0,3.5) rectangle (3,6); 
\draw[thick, rounded corners, fill=blue!30] (0,0) rectangle (3,3.5);

\node at (1.5,4.75) {\huge $R_A$};
\node at (1.5,1.75) {\huge $R_B$};

\draw[rounded corners, fill=blue!25, draw=blue!80, thick] (0.3,3.25) rectangle (0.9,3.75);

\draw[rounded corners, fill=blue!25, draw=blue!80, thick] (1.2,3.25) rectangle (1.8,3.75);

\draw[rounded corners, fill=blue!25, draw=blue!80, thick] (2.1,3.25) rectangle (2.7,3.75);

\draw[decorate,decoration={brace,amplitude=10pt,mirror},xshift=-1pt] (0,6) -- (0,3.5) node[midway,left,xshift=-9pt] {\Large $n^\varepsilon$};
\draw[decorate,decoration={brace,amplitude=10pt,mirror},xshift=-1pt] (0,3.5) -- (0,0) node[midway,left,xshift=-9pt] {\Large $n - n^\varepsilon$};

\draw[thick, ->] (3.5,3) -- (4.5,3);

\draw[thick, rounded corners, fill=orange!30] (5,3.5) rectangle (8,6); 
\draw[thick, rounded corners, fill=blue!30] (5,0) rectangle (8,3.5);

\node at (6.5,4.75) {\huge $W_A$};
\node at (6.5,1.75) {\huge $R_B$};

\draw[rounded corners, fill=white, 
draw=blue!80, thick] (5.3,3.25) rectangle (5.9,3.75);
\node[text=gray!20!black] at (5.6,3.5) {\Large $\mathbb{I}$};
\draw[rounded corners, fill=white, draw=blue!80, thick] (6.2,3.25) rectangle (6.8,3.75);
\node[text=gray!20!black] at (6.5,3.5) {\Large $\mathbb{I}$};
\draw[rounded corners, fill=white, draw=blue!80, thick] (7.1,3.25) rectangle (7.7,3.75);
\node[text=gray!20!black] at (7.4,3.5) {\Large $\mathbb{I}$};

\end{tikzpicture}

    \caption{In Corollary \ref{cor:rcs-dilution}, we take a random circuit supported on $n$ qubits and perturb only the circuit supported on the first $n^\varepsilon$ qubits, $R_A,$ to a worst-case circuit $W_A$ by the Cayley transform. The ``bridge'' gates on the interface between $R_A$ and $R_B$ transform into Identity gates so that the output probability on the righthand circuit factorizes.}
    \label{fig:rcs-dilution}
\end{figure}

\section{Rare events for orthogonal submatrices and i.i.d.\ Gaussians}\label{sec:rare-events}
We consider two models of $n\times n$ random matrices.  The first is a Gaussian matrix $X^n$ with
independent (real) entries of variance $n^{-1}$.  It has a probability density given
by
\[
p_G(X) = Z_G^{-1}(n) ( \prod_{i\in[n]} \exp(-n\lambda_i(X^TX)/2)
\]
where $\lambda_i(A)$ is the $i$-th eigenvalue of $A$.  The factor of $n$ comes from the normalization we apply, and $Z_G^{-1}(n)$ is
a normalization constant so 
that
\[
\int_{\Real^{n\times n}} p_G(X)\diff X = 1.
\]

The second model is that of a $n\times n$ submatrix of a Haar-random $m\times m$ orthogonal matrix.
We rescale by $\sqrt{m/n}$ so that the individual entries have variance $n^{-1}$.
Then for $m\geq 2n$ the probability density takes the form (see \cite{jiang-ma}, Lemma 2.1)
\[
p_S(X) = Z_S^{-1}(n,m) \prod_{i\in[n]} (1-n\lambda_i(X^TX)/m)^{(m-2n)/2} \One_{\lambda_i \leq m/n}.
\]

Our main result in this section is the following:
\smallsmall*

The proposition will follow from three main facts.  The first relates the normalization constants
$c_G$ to $c_S$:
\begin{lemma}
\label{lem:Z-compare}
There exists a constant $C$ such that for $n>C$ and $m>Cn^2$,
the normalization constants $Z_G(n)$ and $Z_S(n,m)$ satisfy
\[
\frac{1}{10} \leq Z_G(n)/Z_S(n,m) \leq 10.
\]
\end{lemma}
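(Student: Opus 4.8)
The plan is to evaluate both normalization constants in closed form and then compare them with Stirling's formula. The constant $Z_G$ is elementary: since $\prod_{i\in[n]}\exp(-n\lambda_i(X^{T}X)/2)=\exp(-\tfrac n2\Tr(X^{T}X))=\exp(-\tfrac n2\|X\|_F^2)$, it is a product of $n^2$ one-dimensional Gaussian integrals and equals $(2\pi/n)^{n^2/2}$. For $Z_S$, the integrand depends only on the matrix $M=X^{T}X$, so the change of variables $X\mapsto M$ for square $X$ (which contributes a Jacobian proportional to $\det(M)^{-1/2}$ with an $n$-dependent, $m$-independent constant) turns $Z_S(n,m)$ into a matrix-variate Beta integral over the Jacobi-ensemble support $\{0\prec M\prec (m/n)I\}$ — so the truncation $\One_{\lambda_i\le m/n}$ is built into the domain automatically. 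Evaluating this by the matrix Beta integral expresses it through the multivariate Gamma function $\Gamma_n(a)=\pi^{n(n-1)/4}\prod_{i=1}^n\Gamma(a-\tfrac{i-1}{2})$, and pinning down the remaining $n$-dependent prefactor by the sanity check that $p_S\to p_G$ entrywise as $m\to\infty$ (so $Z_S/Z_G\to1$) yields the closed form
\[
\frac{Z_S(n,m)}{Z_G(n)}=\Big(\frac m2\Big)^{n^2/2}\prod_{i=1}^{n}\frac{\Gamma\!\big(\tfrac{m-n+1-i}{2}\big)}{\Gamma\!\big(\tfrac{m+1-i}{2}\big)},
\]
up to a uniform shift of the $\Gamma$-arguments by $\tfrac12$ depending on whether one keeps the exponent $\tfrac{m-2n}{2}$ or uses $\tfrac{m-2n-1}{2}$.

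Next I would take logarithms and compare. Write $w_i=\tfrac{m+1-i}{2}\ge \tfrac{m-n}{2}\gtrsim m$, so that $\log\frac{Z_S}{Z_G}=\tfrac{n^2}2\log\tfrac m2+\sum_{i=1}^n\big[\log\Gamma(w_i-\tfrac n2)-\log\Gamma(w_i)\big]$. For each $i$ use the Stirling expansion $\log\Gamma(w-\tfrac n2)-\log\Gamma(w)=-\tfrac n2\log w+\tfrac{n^2}{8w}+O\!\big(\tfrac{n^3}{w^2}+\tfrac nw\big)$, and expand $\sum_i\log w_i=\log\frac{m!}{(m-n)!}-n\log 2$ by Stirling as well. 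The $O(n^2\log m)$ leading pieces cancel against the prefactor $\tfrac{n^2}2\log\tfrac m2$, and one is left with a sum of sub-leading corrections; the goal is to bound this residual sum by an absolute constant whenever $n>C$ and $m>Cn^2$, which then gives $Z_G/Z_S\in[\tfrac1{10},10]$ for suitable $C$. (Alternatively one can phrase the entire comparison through Barnes $G$-function asymptotics for the product $\prod_i\Gamma(\cdot)$, which amounts to the same bookkeeping.)

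The main obstacle is exactly this last bound. The second-order Stirling corrections $\tfrac{n^2}{8w_i}$ are each of size $\sim n^2/m$ and there are $n$ of them, and a further contribution of the same nominal size $n^3/m$ comes from the boundary terms in the Stirling expansion of $\log\frac{m!}{(m-n)!}$; establishing that in the regime $m>Cn^2$ all of these contributions assemble into an $O(1)$ quantity rather than something growing with $n$ — which requires carrying the expansions to the right order and carefully accounting for how the various $n^2/m$-sized pieces interact — is where essentially all the difficulty lies, and the one place the argument can go wrong. The Gaussian integral, the matrix Beta formula, the value of the change-of-variables constant, and the handling of the truncation are all routine by comparison.
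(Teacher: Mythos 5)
Your route---write $Z_G$ and $Z_S$ in closed form via the matrix-Beta integral and compare by Stirling---is genuinely different from the paper's. The paper never touches the closed forms: it defines a typical set $A$ on which $\lambda_{\max}(X^TX)\le 5$ and $|\trace[(X^TX)^2-2X^TX]|\le 100$, shows via Lemmas~\ref{lem:typical-haar} and~\ref{lem:gauss-conc} that both $\Prob_G(A)$ and $\Prob_S(A)$ lie in $[1/2,1]$, and then uses the pointwise density bound of Lemma~\ref{lem:pGS-compare} on $A$ to sandwich $Z_G/Z_S$ between constants, with no explicit computation at all. Your Gamma-function identity for $Z_S/Z_G$ is correct, and the $1/2$-shift you flag is just the difference between the paper's exponent $(m-2n)/2$ and the usual MANOVA exponent $(m-2n-1)/2$; it only affects the answer at order $n^2/m$.

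The step you rightly single out as where the difficulty concentrates is, however, where the argument fails, not merely where it is delicate. Carrying the Stirling expansion through, the $n^3/m$-sized pieces you identify \emph{reinforce} rather than cancel: the $\sum_i n^2/(8w_i)$ corrections give $+n^3/(4m)$, and the residue from $\tfrac{n^2}{2}\log m-\tfrac{n}{2}\log\tfrac{m!}{(m-n)!}$ gives another $+n^3/(4m)$, so that
\[
\log\frac{Z_S(n,m)}{Z_G(n)} \;=\; \frac{n^3}{2m} \;+\; O\!\left(\frac{n^2}{m}+\frac{n^4}{m^2}\right).
\]
This is easy to spot-check exactly: at $n=4$, $m=200$ your closed form gives $Z_S/Z_G \approx 1.200$, so $\log\approx 0.182$, matching $n^3/(2m)+n^2/(4m)=0.18$. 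For $m\approx Cn^2$ the right-hand side is $\approx n/(2C)$, unbounded in $n$, so the conclusion $Z_G/Z_S\in[1/10,10]$ does \emph{not} hold with $m>Cn^2$; it requires $m\gtrsim n^3$. There is therefore no way to close your gap---you are aiming at a false target. The discrepancy with the paper's own proof traces to a slip in Lemma~\ref{lem:pGS-compare}: it quotes the correct $\log(1+t)=t-\tfrac12 t^2+O(t^3)$ but the next line substitutes $-n\lambda_i/m - n^2\lambda_i^2/m^2$ without the $\tfrac12$. With it restored, the fluctuating part of $\log(p_G/p_S)-\log(Z_S/Z_G)$ becomes $\tfrac{n^2}{4m}\trace[(X^TX)^2-4X^TX]+\cdots$, whose mean is $\approx -n^3/(2m)$ and exactly offsets the normalization shift. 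This is why the pointwise density ratio $p_G/p_S$ is nonetheless $O(1)$ on typical matrices and the downstream Proposition~\ref{prp:small-small} can likely be rescued by absorbing the shift, but Lemma~\ref{lem:Z-compare} as stated cannot stand, and neither can a proof of it along your (or the paper's) lines.
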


The second fact relates the densities $p_G$ and $p_S$ directly, up to the normalization $c_G/c_S$:
\begin{lemma}
\label{lem:pGS-compare}
Suppose that $\lambda_{max}(X^TX) \leq K \leq \frac{m}{10n}$.  Then
\[
|\log(\frac{p_G(X)}{p_S(X)}) - \log(Z_S(n,m)/Z_G(n)| \leq
K^2 \frac{n^3}{m^2} +  K^3 \frac{n^4}{m^3} +
\frac{n^2}{m} |\trace[(X^TX)^2 - 2X^TX]|.
\]
\end{lemma}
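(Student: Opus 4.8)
The plan is to turn the lemma into a pointwise estimate on the eigenvalues $\lambda_i := \lambda_i(X^TX)$ and then Taylor-expand. Since $\lambda_i \le \lambda_{max}(X^TX) \le K \le \frac{m}{10n} < \frac{m}{n}$, every indicator $\One_{\lambda_i \le m/n}$ in $p_S$ is active and $p_S(X) > 0$, so I may divide the densities and take logarithms:
\[
\log\frac{p_G(X)}{p_S(X)} - \log\frac{Z_S(n,m)}{Z_G(n)} \;=\; \sum_{i=1}^{n} f(\lambda_i), \qquad f(\lambda) := -\frac{n\lambda}{2} - \frac{m-2n}{2}\log\Big(1 - \frac{n\lambda}{m}\Big).
\]
Thus the lemma is equivalent to $\big|\sum_i f(\lambda_i)\big| \le K^2\frac{n^3}{m^2} + K^3\frac{n^4}{m^3} + \frac{n^2}{m}|\trace[(X^TX)^2 - 2X^TX]|$, which I would prove by controlling each $f(\lambda_i)$ separately and summing.

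Writing $x := n\lambda/m \in [0,\tfrac{1}{10}]$ and using $-\log(1-x) = x + \tfrac{x^2}{2} + r(x)$ with $|r(x)| \le \tfrac{x^3}{3(1-x)} \le \tfrac{x^3}{2}$, I would substitute into $f$ and group by order in $x$. The linear term $\tfrac{m-2n}{2}x$ cancels the bare $-\tfrac{n\lambda}{2}$ up to a residue of order $\tfrac{n^2\lambda}{m}$; the quadratic term $\tfrac{m-2n}{2}\cdot\tfrac{x^2}{2}$ contributes a piece of order $\tfrac{n^2\lambda^2}{m}$ up to a residue of order $\tfrac{n^3\lambda^2}{m^2}$; and the tail $\tfrac{m-2n}{2}r(x)$ has size $O(\tfrac{n^3\lambda^3}{m^2})$. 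Summing over $i$ and substituting $\sum_i\lambda_i = \trace(X^TX)$, $\sum_i\lambda_i^2 = \trace((X^TX)^2)$, the linear residue together with the leading quadratic contribution assemble into a constant multiple of $\trace[(X^TX)^2 - 2X^TX]$, weighted by $\tfrac{n^2}{m}$ — this is the third term of the bound. For the remaining (error) terms I would use $\lambda_i \le K$ and $\sum_i \lambda_i^j \le nK^j$ to bound $\tfrac{n^3}{m^2}\sum_i\lambda_i^2 \lesssim \tfrac{n^3}{m^2}\cdot nK^2$ and $\tfrac{n^3}{m^2}\sum_i\lambda_i^3 \lesssim \tfrac{n^3}{m^2}\cdot nK^3$, then absorb one further factor of $\tfrac{n}{m}$ (legitimate since $m > Cn^2 \gg n$) to land on the buckets $K^2\tfrac{n^3}{m^2}$ and $K^3\tfrac{n^4}{m^3}$; a triangle inequality on the decomposition finishes the estimate.

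I expect the only real work to be this book-keeping — tracking which residue of the expansion lands in which error bucket, and verifying that the surviving first- and second-order contributions combine into the trace polynomial of the statement (this step also determines the numerical constant and is where the precise exponent $\tfrac{m-2n}{2}$ and the matching of normalizations in Lemma~\ref{lem:Z-compare} get used). Two points warrant care. First, $f$ is not sign-definite, so I would keep the explicit split $f = (\text{trace part}) + (\text{error})$ and bound only the error in absolute value rather than attempting to sign-control $f$ itself. Second, the discrepancy between the exponent $\tfrac{m-2n}{2}$ and $\tfrac{m}{2}$ is precisely what generates the $\tfrac{n^3}{m^2}$-order errors, so it must be propagated through the expansion and not discarded. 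No deeper obstacle is anticipated; the hypotheses $\lambda_{max} \le K \le \tfrac{m}{10n}$ and $m > Cn^2$ enter only through $x \le \tfrac{1}{10}$ (so the log series converges geometrically) and $\tfrac{n}{m} < \tfrac{1}{Cn}$ (so the error buckets close up).
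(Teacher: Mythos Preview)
Your approach is exactly the paper's: rewrite the log-ratio as $\sum_i f(\lambda_i)$, Taylor-expand $-\log(1-x)$ to second order with an $O(x^3)$ remainder, combine the surviving linear and quadratic contributions into the trace expression, and bound the residual quadratic correction and cubic tail using $\lambda_i\le K$. The paper carries out precisely this computation.

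One caution on your closing step: the phrase ``absorb one further factor of $n/m$'' runs the inequality the wrong way. From $\tfrac{n^3}{m^2}\sum_i\lambda_i^2 \le K^2\tfrac{n^4}{m^2}$ you cannot reach $K^2\tfrac{n^3}{m^2}$ (that would require \emph{dividing} by $n$, not multiplying by something less than $1$), and likewise $K^3\tfrac{n^4}{m^2}\not\le K^3\tfrac{n^4}{m^3}$. Your honestly computed $n^4/m^2$ exponents are in fact what the expansion actually yields; the paper's own bookkeeping silently drops a power of $n$ (resp.\ $m$) at the same spot. This does not affect the downstream use in Proposition~\ref{prp:small-small}, since there $K=O(1)$ and $m>Cn^2$ force either set of error terms to be $O(1)$, so the discrepancy is cosmetic rather than structural.
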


The third fact we need is that the right hand side above is often small for $X$ sampled from
the independent Gaussian distribution.
\begin{lemma}
\label{lem:gauss-conc}
let $X$ be a Gaussian $n\times n$ matrix with independent entries of variance $n^{-1}$.  Then:
\begin{equation}
\label{eq:max-conc}
\Prob( \lambda_{max}(X^TX) > 3+t ) \leq \exp(-nt^2/2)
\end{equation}
Moreover,
\begin{equation}
\label{eq:trac-conc}
\Prob( |\trace[(X^TX)^2 - 2 X^TX]| > 100\sqrt{t}) \leq \exp(-t) + \exp(-n)
\end{equation}
\end{lemma}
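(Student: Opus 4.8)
The plan is to prove the two estimates separately, using \eqref{eq:max-conc} as an ingredient for \eqref{eq:trac-conc}.

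\textbf{The bound \eqref{eq:max-conc}.} This is just the standard concentration of the spectral norm of a Gaussian matrix. Writing $\lambda_{max}(X^TX)=\|X\|_{\mathrm{op}}^2$, I would use that $X\mapsto\|X\|_{\mathrm{op}}$ is $1$-Lipschitz with respect to the Hilbert--Schmidt norm and that $\Expec\|X\|_{\mathrm{op}}\le 2$ (Gordon's/Chevet's inequality, since the entries of $X$ have variance $n^{-1}$). Applying the Gaussian concentration inequality to $\mathrm{vec}(X)\sim\mathcal N(0,n^{-1}I_{n^2})$ gives $\Prob(\|X\|_{\mathrm{op}}>2+u)\le e^{-nu^2/2}$, and squaring produces a tail bound of the stated shape; the precise value of the additive constant ($3$ in the statement) and the exact form of the rate are immaterial downstream, where all that is used is that $\lambda_{max}(X^TX)=O(1)$ outside an event of probability $e^{-\Omega(n)}$. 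I record the good event $\mathcal G:=\{\|X\|_{\mathrm{op}}\le 3\}$, for which $\Prob(\mathcal G^c)\le e^{-cn}$.

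\textbf{The bound \eqref{eq:trac-conc}.} The starting point is the algebraic identity
\[
\trace\big[(X^TX)^2-2X^TX\big] \;=\; \|X^TX-I\|_F^2 - n,
\]
obtained by expanding $\|X^TX-I\|_F^2=\trace[(X^TX)^2]-2\trace[X^TX]+n$. An elementary second-moment computation (equivalently, the standard low-order Wishart moments) gives $\Expec\,\trace(X^TX)=n$ and $\Expec\,\trace[(X^TX)^2]=2n+1$, hence $\Expec\|X^TX-I\|_F^2=n+1$: the quantity to be controlled has mean exactly $1$, and it suffices to show $\psi(X):=\|X^TX-I\|_F$ concentrates around $\sqrt n$. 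On $\mathcal G$ the function $\psi$ is $6$-Lipschitz in Hilbert--Schmidt norm, since $|\psi(X_1)-\psi(X_2)|\le\|X_1^TX_1-X_2^TX_2\|_F\le(\|X_1\|_{\mathrm{op}}+\|X_2\|_{\mathrm{op}})\|X_1-X_2\|_F$. I extend $\psi|_{\mathcal G}$ to a globally $6$-Lipschitz $\widetilde\psi$ on $\Real^{n\times n}$ by inf-convolution; Gaussian concentration for $\widetilde\psi$ then yields $\Prob(|\widetilde\psi-\Expec\widetilde\psi|>u)\le 2e^{-nu^2/72}$. Routine bookkeeping ($\Expec\widetilde\psi^2=n+O(1)$ and $\mathrm{Var}\,\widetilde\psi=O(1/n)$ from the sub-Gaussian tail, together with $\Prob(\mathcal G^c)\le e^{-cn}$) shows $(\Expec\widetilde\psi)^2-n=O(1)$ and lets me replace $\widetilde\psi$ by $\psi$ at the cost of an additive $e^{-cn}$, so that $\Prob(|\psi-\sqrt n|>u)\le 2e^{-n(u-O(1/\sqrt n))^2/72}+e^{-cn}$.

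Finally I convert this into a bound on $T:=\psi^2-n$. Writing $\psi=\sqrt n+\rho$ one has $T=(\text{an }O(1)\text{ term})+2\sqrt n\,\rho+\rho^2$, so on $\{|\rho|\le\sqrt n\}$ we get $|T|\le C(\sqrt n\,|\rho|+1)$; hence $|T|>100\sqrt t$ forces $|\rho|>c_0\sqrt{t/n}$ or $|\rho|>\sqrt n$, and feeding these into the displayed tail bound gives $\Prob(|T|>100\sqrt t)\le 2e^{-ct}+e^{-c'n}$. The constant $100$ is chosen generously enough relative to the absolute constants that appear that this collapses to the clean form $e^{-t}+e^{-n}$ for every $t>0$. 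The one genuinely delicate point — the main obstacle — is getting $\psi$ to concentrate at the right ($O(1)$) scale: the naive Lipschitz constant of $X\mapsto\|X^TX-I\|_F$ over all of $\Real^{n\times n}$ is of order $\sqrt n$, which would cost a factor of $n$ in the exponent and give only sub-exponential tails; truncating to $\mathcal G$ (licensed by \eqref{eq:max-conc}) is exactly what restores an $O(1)$-Lipschitz constant, with the $e^{-n}$ term in the conclusion accounting for the truncation. Alternatively one could expand $\trace[(X^TX)^2]$ into its Wiener-chaos components and apply a Gaussian polynomial concentration inequality, but the truncated-Lipschitz route is cleaner and keeps all constants explicit.
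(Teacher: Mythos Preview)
Your proposal is correct and follows the same high-level strategy as the paper: use Gaussian Lipschitz concentration for $\|X\|_{\mathrm{op}}$ to get~\eqref{eq:max-conc}, then restrict to the good event $\{\lambda_{\max}(X^TX)=O(1)\}$ to obtain an $O(1)$ Lipschitz constant for the trace functional and apply concentration again, with the $e^{-n}$ term absorbing the bad event.

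The implementations differ in two technical respects. First, where you restrict $\psi$ to the ball $\{\|X\|_{\mathrm{op}}\le 3\}$ and then inf-convolve to produce a global Lipschitz extension, the paper instead truncates the scalar function $t\mapsto t^2-2t$ to a bounded $g$ and works with $\phi(X)=\trace\,g(X^TX)$; because $g'$ vanishes for $t>4$, the gradient bound $\sum_i \lambda_i g'(\lambda_i)^2\le Cn$ holds \emph{globally}, so $\phi$ is automatically $O(1)$-Lipschitz without any extension step. Second, you pass through the square root $\psi=\|X^TX-I\|_F$, concentrate that, and then square back (which requires the bookkeeping you flag as ``the one genuinely delicate point''), whereas the paper applies concentration directly to $\phi$ and never needs the $\psi^2-n$ identity or the $\sqrt n\,\rho+\rho^2$ conversion. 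The paper's spectral truncation is a bit slicker here since it sidesteps both the Kirszbraun/inf-convolution step and the squaring-back argument; your route is more general-purpose and makes the centering $\Expec\,T=1$ explicit, which the paper leaves implicit.
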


Before we proceed to the proofs of these lemmas we show how to combine them to deduce
Proposition~\ref{prp:small-small}.
\begin{proof}[Proof of Proposition~\ref{prp:small-small} using Lemmas~\ref{lem:Z-compare}-\ref{lem:gauss-conc}]
We write
\[
E \subset (E \cap \Omega_{\rm good}) \cup \Omega_{\rm bad},
\]
where we set $\Omega_{\rm bad} = \Omega_{\rm good}^c$ and $\Omega_{\rm good}$ is the set of matrices
satisfying
\[
\Omega_{\rm good} := \{\lambda_{max}(X) \leq 4 \} \cap \{\trace[ (X^TX)^2 - 2X^TX] \leq 100 n^{\alpha/2}\}.
\]
Then by Lemma~\ref{lem:gauss-conc} and the fact that $\alpha \leq 1$ we have
\[
\Prob_G(\Omega_{\rm bad}) \leq 2\exp(-n) + \exp(-n^\alpha) \leq 3\exp(-n^\alpha),
\]
so using a union bound we have
\[
\Prob_G(E) \leq \Prob_G(E\cap \Omega_{\rm good}) + 3\exp(-n^\alpha).
\]

Next we estimate $\Prob_G(E\cap \Omega_{\rm good})$ by reweighting the probability measure:
\begin{align*}
\Prob_G(E\cap \Omega_{\rm good})
&= \int_{E\cap \Omega_{\rm good}} p_G(X)\diff X \\
&= \int_{E\cap \Omega_{\rm good}} \frac{p_G(X)}{p_S(X)} p_S(X) \diff X.
\end{align*}
For $X\in \Omega_{\rm good}$ we have by Lemma~\ref{lem:pGS-compare} the inequality
\begin{align*}
\frac{p_G(X)}{p_S(X)} &\leq \frac{c_G}{c_S} \exp(16\frac{n^3}{m^2} + 64\frac{n^4}{m^3}
+ 100 \frac{n^2}{m}  n^{\alpha/2}).
\end{align*}
By Lemma~\ref{lem:Z-compare} we have $\frac{c_G}{c_S} \leq 10$, and then using that $m>Cn^2$ and
$n>C$ is large we have that for sufficiently large $n$,
\begin{align*}
\frac{p_G(X)}{p_S(X)} &\leq 10\exp(n^{\alpha/2}).
\end{align*}
Therefore we can conclude
\[
\Prob_G(E\cap \Omega_{\rm good})
\leq 10\exp(n^{\alpha/2}) \int_E p_S(X)\diff X
\leq 10\exp(n^{\alpha/2}) \delta,
\]
as desired.
\end{proof}

We now go through the proofs of the lemmas, in reverse order.

\subsection*{Proof of Lemma~\ref{lem:gauss-conc}}

The key ingredient in the proof of Lemma~\ref{lem:gauss-conc} is the following classical concentration inequality.
\begin{lemma}
\label{lem:lipschitz}
Let $f:\Real^d\to\Real$ be a Lipschitz-continuous function, that is one satisfying
\[
|f(x) - f(y)| \leq L \|x-y\|,
\]
where the norm used above is the Euclidean one.
Let $X$ be a vector of independent
standard Gaussians, and set $\bar{f} = \Expec f(X)$.  Then
\[
\Prob(|f(X) - \Expec f(X)| \geq t) \leq \exp(-t^2/(2L^2))
\]
\end{lemma}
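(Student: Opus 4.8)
This is the classical Gaussian concentration-of-measure inequality, and the plan is to recall its standard proof via the Gaussian log-Sobolev inequality together with the Herbst argument. After subtracting the mean we may assume $\bar f = \Expec f(X) = 0$, and it suffices to bound $\Prob(f(X) \geq t)$: applying the same bound to $-f$ and taking a union bound then controls $\Prob(|f(X)| \geq t)$ up to a harmless factor of $2$. The first reduction is to the case that $f$ is smooth. Convolving $f$ with the density $\varphi_\rho$ of $\mathcal N(0, \rho^2 I_d)$ produces a $C^\infty$ function $f_\rho = f * \varphi_\rho$ that is still $L$-Lipschitz — differentiating under the integral sign and using $|f(x)-f(y)| \leq L\|x-y\|$ gives $\|\nabla f_\rho\|_\infty \leq L$ — converges to $f$ locally uniformly as $\rho \to 0$, and satisfies $\Expec f_\rho(X) \to \Expec f(X)$ by dominated convergence. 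Hence the general statement follows from the smooth one by letting $\rho \to 0$.

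Assuming now $f \in C^1$ with $\sup_x \|\nabla f(x)\| \leq L$, set $H(\lambda) = \Expec e^{\lambda f(X)}$ and apply the Gaussian log-Sobolev inequality $\mathrm{Ent}_\gamma(g^2) \leq 2\,\Expec_\gamma\|\nabla g\|^2$ (valid for the standard Gaussian $\gamma$ on $\Real^d$, obtained by tensorizing the one-dimensional case) with $g = e^{\lambda f/2}$. Using $\|\nabla g\|^2 = \tfrac{\lambda^2}{4}\|\nabla f\|^2 e^{\lambda f} \leq \tfrac{\lambda^2 L^2}{4} e^{\lambda f}$, this yields the differential inequality $\lambda H'(\lambda) - H(\lambda)\log H(\lambda) \leq \tfrac{\lambda^2 L^2}{2} H(\lambda)$, i.e.\ $\tfrac{d}{d\lambda}\big(\lambda^{-1}\log H(\lambda)\big) \leq L^2/2$. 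Since $\lambda^{-1}\log H(\lambda) \to H'(0)/H(0) = \Expec f(X) = 0$ as $\lambda \downarrow 0$, integrating gives $\log H(\lambda) \leq L^2\lambda^2/2$ for all $\lambda \geq 0$. Chernoff's bound then gives $\Prob(f(X) \geq t) \leq e^{-\lambda t} H(\lambda) \leq \exp(-\lambda t + L^2\lambda^2/2)$, and optimizing at $\lambda = t/L^2$ produces the claimed $\exp(-t^2/(2L^2))$. (Alternatively, one could avoid citing the log-Sobolev inequality and instead bound the moment generating function by the Maurey--Pisier interpolation along $\theta \mapsto (\sin\theta)X + (\cos\theta)Y$ with $Y$ an independent copy of $X$, at the cost of a slightly worse absolute constant.)

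I expect the only genuine subtlety to be the mollification reduction — verifying that passing to differentiable $f$ loses nothing, so that $\nabla f$ is honestly defined and the product/chain rules invoked in the Herbst computation are legitimate — together with the appeal to the log-Sobolev inequality itself, which we would either take as a black box or derive by tensorizing its one-dimensional version. Everything else is a routine scalar differential-inequality manipulation followed by optimizing the Chernoff exponent.
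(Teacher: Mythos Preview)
Your proof plan is correct and follows the standard Herbst argument via the Gaussian log-Sobolev inequality. The paper itself does not prove this lemma at all: it is simply stated as ``the following classical concentration inequality'' and then used as a black box in the proof of Lemma~\ref{lem:gauss-conc}. So there is nothing to compare against, and your write-up would in fact supply a proof the paper omits.

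One small remark: as you yourself note, the Herbst argument gives the one-sided bound $\Prob(f(X)-\Expec f(X)\geq t)\leq \exp(-t^2/(2L^2))$, and passing to the two-sided version via a union bound costs a factor of $2$. The lemma as stated in the paper drops this factor; since the paper only ever applies it one-sidedly (e.g.\ to $\sqrt{\lambda_{\max}(X^TX)}$), this is immaterial, but strictly speaking your proof yields $2\exp(-t^2/(2L^2))$ for the absolute-value version.
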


Now we can prove the proposition.
\begin{proof}[Proof of Lemma~\ref{lem:gauss-conc} using Lemma~\ref{lem:lipschitz}]
First we observe that $\lambda_{max}(X^TX)$ satisfies
\[
\sqrt{\lambda_{max}(X^TX)} = \sup_{\|u\|=\|v\|=1} u^TXv.
\]
We can think of $\sqrt{\lambda_{max}(X^TX)}$ as a function of $n^2$ independent Gaussian inputs,
and the Lipschitz constant is equal to the maximum Lipschitz constant of the functions
$u^TXv$.  This latter Lipschitz constant is given by
\[
\Big(\sum_{i,j=1}^n n^{-1} |u_i v_j|^2 \Big)^{1/2} = n^{-1/2} \|u\|\|v\| = n^{-1/2}.
\]
Thus $L\leq n^{-1/2}$.  Moreover for large enough $n$, $\Expec \lambda_{max}(X^TX) \leq 3$
(in fact, $\lim_{n\to\infty} \Expec \lambda_{max}(X^TX) = 2$).
Therefore
\[
\Prob(\sqrt{\lambda_{max}(X^TX)} \geq 3+t) \leq \exp(-nt^2/2).
\]
This concludes the proof of~\eqref{eq:max-conc}.

Now we prove~\eqref{eq:trac-conc}.  Let $g(t)$ be the function
\[
g(t)
= \begin{cases}
t^2 - 2t, & |t|\leq 4 \\
8, & t>4 \\
24, & t < -4.
\end{cases}
\]
Then for $X$ such that $\lambda_{max} (X^TX) \leq 4$ (which by the above occurs with probability
at least $1-\exp(-n/2)$,
\[
\phi(X) = \trace( (X^TX)^2 - 2(X^TX) ) = \trace(g(X^TX)) = \sum_i g(\lambda_i(X^TX)).
\]
Now let $E_{ij}$ be the matrix with a $1$ in the $(i,j)$ coordinate and $0$'s elsewhere.
Then the Lipschitz constant of $\phi(X)$ (as a function of the independent Gaussian matrix entries)
is
\[
L = n^{-1/2}\Big( \sum_{ij} \Big(\frac{d}{dt} \phi(X+tE_{ij})|_{t=0}\Big)^2\Big)^{1/2}
\]
But the derivative in this trace is given by
\[
\frac{d}{dt} \phi(X+tE_{ij})|_{t=0} = \trace(g'(X^TX) (X^TE_{ij} + E_{ji} X))
= (g'(X^TX)X^T)_{ij} + (Xg'(X^TX))_{ji}.
\]
Thus
\[
L \leq 2 n^{-1/2} (\trace[X^TX g'(X^TX)^2])^{1/2}.
\]
Since $g'$ itself is bounded by $24$, this means $\|g'(X^TX)\|\leq 24$, so
\[
L \leq 48 n^{-1/2} (\trace[X^TX])^{1/2}.
\]
If $\lambda_{max}(X^TX) \leq 4$, then $\trace[X^TX]\leq 4n$, so this becomes
\[
L \leq 96 \leq 100.
\]
And now~\eqref{eq:trac-conc} follows.
\end{proof}

\subsection*{Proof of Lemma~\ref{lem:pGS-compare}}
The proof of Lemma~\ref{lem:pGS-compare} is a relatively simple calculation.

For $X$ satisfying $\lambda_{max}(X) \leq m/n$,
\begin{align*}
\log(\frac{p_G(X)}{p_S(X)})
= \log(Z_S(n,m)/Z_G(n)) + \frac12 \sum_{i\in[n]} [(2n-m) \log(1-n\lambda_i/m) - n\lambda_i]
\end{align*}
To simplify this further we use the Taylor approximation
\[
|\log(1+t) - (t - \frac12 t^2)| \leq |t|^3,
\]
valid for $|t| \leq \frac{1}{10}$, which holds for $t=n\lambda_i/m$ when $\lambda_i\leq 2$
and $m>20n$.
Letting $E :=
|\log(\frac{p_G(X)}{p_S(X)}) - \log(c_G/c_S)|$, we can rearrange and cancel terms to obtain
\begin{align*}
|E| &=
\Big|\sum_{i\in[n]} [(2n-m) (- n\lambda_i/m - n^2\lambda_i^2/m^2 + O(n^3\lambda_i^3/m^3)) - n\lambda_i]\Big| \\
&\leq
\Big| \sum_{i\in [n]}n^2 \lambda_i^2 / m - 2n^2 \lambda_i/m\Big|
+ \sum_{i\in[n]} |2n^2 \lambda_i^2 /m^2|
+ \sum_{i\in [n]} |n^3 \lambda_i^3 /m^3| \\
&\leq
\frac{n^2}{m}|\trace[ (X^TX)^2 - 2X^TX] |
+ 8 \frac{n^3}{m^2} + 8 \frac{n^4}{m^3} \\
&\leq \sqrt{n} + O(1).
\end{align*}
To get to the last line we used that $\sum_{i\in[n]}\lambda_i^k = \trace[(X^TX)^k]$
and the following inequalities which hold for ``good'' matrices:
\begin{align*}
\lambda_i &\leq 2 \\
|\trace[ (X^TX)^2 - 2X^TX]| &\leq \sqrt{n}.
\end{align*}

\subsection*{Proof of Lemma~\ref{lem:Z-compare}}
First we establish some facts about submatrices of Haar-random orthogonal matrices.
First we need a calculation for the moments of such matrices.
\begin{lemma}[Lemma 2.5 of Jiang-Ma]
\label{lem:moments}
Letting $X$ be an $n\times n$ submatrix of an $m\times m$ orthogonal matrix,
scaled so that the entries have variance $n^{-1}$, we have
\begin{align*}
\Expec \trace[X^TX] &= n \\
\Expec \trace[(X^TX)^2] &= \frac{m}{m+2}[2n+1 - \frac{(n-1)^2}{(m-1)}].
\end{align*}
\end{lemma}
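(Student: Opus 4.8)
The plan is to expand each trace as a sum over matrix entries and reduce the computation to the low-order joint moments of the entries of a Haar-random orthogonal matrix. Write $O$ for the $m\times m$ Haar orthogonal matrix, let $P:\Real^{m}\to\Real^{n}$ denote truncation to the first $n$ coordinates, and set $X=\sqrt{m/n}\,POP^{T}$, so that each entry of $O$ has second moment $m^{-1}$ and each entry of $X$ has second moment $n^{-1}$. The first identity is then immediate: $\trace[X^{T}X]=\sum_{i,a\in[n]}X_{ia}^{2}$ has expectation $n^{2}\cdot n^{-1}=n$.

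For the second identity I would write $\Expec\trace[(X^{T}X)^{2}]=\Expec\trace[(XX^{T})^{2}]=\sum_{i,j\in[n]}\Expec[(XX^{T})_{ij}^{2}]$ and split the sum into its $n$ diagonal and $n(n-1)$ off-diagonal terms. Let $r_{1},\dots,r_{n}$ be the first $n$ rows of $O$, a uniformly random orthonormal $n$-frame in $\Real^{m}$; then $(XX^{T})_{ij}=(m/n)\langle Pr_{i},Pr_{j}\rangle$. For the diagonal terms, $(XX^{T})_{ii}=(m/n)\|Pr_{i}\|^{2}$, and $\|Pr_{i}\|^{2}\sim\mathrm{Beta}(n/2,(m-n)/2)$ --- write $r_{i}=g/\|g\|$ for a standard Gaussian vector $g$ and recognize $\|Pr_{i}\|^{2}$ as a ratio of independent chi-squares. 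Its second moment is $\tfrac{(n/2)(n/2+1)}{(m/2)(m/2+1)}=\tfrac{n(n+2)}{m(m+2)}$, so the diagonal contribution is $n\cdot(m/n)^{2}\cdot\tfrac{n(n+2)}{m(m+2)}=\tfrac{m(n+2)}{m+2}$.

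The off-diagonal terms need $\Expec[\langle Pu,Pv\rangle^{2}]$ for a uniformly random orthonormal pair $(u,v)=(r_{i},r_{j})$ with $i\neq j$; this is the one genuinely substantive step. I would handle it via invariant theory for $O(m)$: the tensor $T_{abcd}:=\Expec[u_{a}v_{b}u_{c}v_{d}]$ is $O(m)$-invariant and symmetric under the exchange $(a,c)\leftrightarrow(b,d)$, hence, by the first fundamental theorem of invariant theory for the orthogonal group, it must have the form $T_{abcd}=C_{1}\delta_{ac}\delta_{bd}+C_{2}(\delta_{ab}\delta_{cd}+\delta_{ad}\delta_{bc})$ for scalars $C_{1},C_{2}$ depending only on $m$. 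Contracting against $\delta_{bd}$ and using $\|v\|^{2}=1$ together with $\Expec[u_{a}u_{c}]=\delta_{ac}/m$ gives $mC_{1}+2C_{2}=1/m$; contracting against $\delta_{ab}$ and using $u\cdot v=0$ gives $C_{1}+(m+1)C_{2}=0$. Solving, $C_{2}=-\tfrac{1}{m(m-1)(m+2)}$ and $C_{1}=\tfrac{m+1}{m(m-1)(m+2)}$ (these are the degree-two orthogonal Weingarten values, which one could alternatively just cite). Since $\langle Pu,Pv\rangle^{2}=\sum_{a,c\in[n]}u_{a}v_{a}u_{c}v_{c}$, this yields $\Expec[\langle Pu,Pv\rangle^{2}]=\sum_{a,c\in[n]}T_{aacc}=C_{1}n+C_{2}n(n+1)=\tfrac{n(m-n)}{m(m-1)(m+2)}$, so the off-diagonal contribution is $n(n-1)\cdot(m/n)^{2}\cdot\tfrac{n(m-n)}{m(m-1)(m+2)}=\tfrac{(n-1)m(m-n)}{(m-1)(m+2)}$.

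Finally I would add the two contributions and factor out $\tfrac{m}{m+2}$ to obtain $\tfrac{m}{m+2}\big[(n+2)+\tfrac{(n-1)(m-n)}{m-1}\big]$; writing $m-n=(m-1)-(n-1)$ simplifies the bracket to $2n+1-\tfrac{(n-1)^{2}}{m-1}$, which is exactly the claimed value. The main obstacle is the off-diagonal fourth-moment computation --- specifically, being careful that the $O(m)$-invariant tensor $T_{abcd}$ really is spanned by precisely those three Kronecker-delta products with the stated symmetry, and keeping the contractions straight; everything else is routine bookkeeping and can also be obtained by quoting the Weingarten formula for $O(m)$ directly.
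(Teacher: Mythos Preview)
Your proof is correct. The paper does not actually prove this lemma: it is stated as ``Lemma 2.5 of Jiang--Ma'' and simply quoted from \cite{jiang-ma}, with no argument given. Your computation via the Beta law for $\|Pr_i\|^2$ and the degree-two orthogonal Weingarten values for the off-diagonal fourth moment is a clean self-contained derivation that the paper does not supply. One tiny remark: the symmetry you invoke to force the two $C_2$ terms to share a coefficient is really $a\leftrightarrow c$ (from $u_a u_c=u_c u_a$) and $b\leftrightarrow d$, not the swap of the pair $(a,c)$ with the pair $(b,d)$; your phrasing is slightly ambiguous, but the ansatz you write down and the two contractions you perform are correct and yield the right $C_1,C_2$, so the final answer is exactly as claimed.
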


We combine this with the following concentration inequality on the orthogonal group.
The inequality below follows from the fact that the orthogonal group $SO(m)$ has Ricci curvature
$\frac{m-2}{4}$ which by the Bakry-Emery argument
(see~\cite{bakry2014analysis}, Theorem 2.1) shows that it has a log-Sobolev inequality with 
constant $\frac{4}{m-2}$,
and therefore Gaussian concentration for Lipschitz functions.
\begin{lemma}
\label{lem:orth-lip}
Let $f:\Real^{m^2} \to \Real$ be a function taking as input $m\times m$ matrices, and suppose that
$f$ has Lipschitz constant $L$, that is
\[
|f(X) - f(Y)| \leq L\|X-Y\|,
\]
where the norm used is the Hilbert-Schmidt norm, $\|X\|^2 = \sum_{ij} |x_{ij}|^2= \trace[X^TX]$.
Let $\bar{f} = \Expec f(X)$ where the expectation is over $X\in SO(m)$ sampled uniformly from the
Haar measure.  Then also over this probability measure we have
\[
\Prob( |f-\bar{f}| \geq t) \leq \exp(-mt^2/(8L^2)).
\]
\end{lemma}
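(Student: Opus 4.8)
The plan is to obtain this as the standard consequence of positive Ricci curvature for the uniform measure on a compact Riemannian manifold, i.e.\ the off-the-shelf chain ``$\mathrm{Ric}>0$ $\Rightarrow$ logarithmic Sobolev inequality (Bakry--\'Emery) $\Rightarrow$ sub-Gaussian concentration for Lipschitz functions (Herbst)''. All three links are black boxes; the only thing that genuinely requires attention is reconciling the extrinsic Hilbert--Schmidt metric appearing in the hypothesis with the intrinsic Riemannian metric in which curvature and the log-Sobolev constant are naturally expressed, and then propagating the constant so that the final exponent is proportional to $m/L^2$.

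First I would fix the right geometry. Equip $SO(m)$ with the bi-invariant Riemannian metric whose restriction to the Lie algebra $\mathfrak{so}(m)$ is the Hilbert--Schmidt inner product $\langle A,B\rangle=\trace(A^TB)$. Two facts about this metric are needed. (i) It is Einstein with $\mathrm{Ric}=\tfrac{m-2}{4}$: for a compact group with bi-invariant metric the Ricci form is $-\tfrac14$ times the Killing form, and the Killing form of $\mathfrak{so}(m)$ is $(m-2)$ times the trace form (a standard Lie-theoretic computation; the relevant background is in \cite{bakry2014analysis}). (ii) Its length element is exactly $\|\cdot\|_{HS}$, since a geodesic through the identity is $t\mapsto\exp(tA)$ of constant speed $\|A\|_{HS}$, and hence the chord never exceeds the arc,
\[
\|X-Y\|_{HS}\leq d_g(X,Y)\qquad\text{for all }X,Y\in SO(m),
\]
because $\|\exp(A)-I\|_{HS}\leq\int_0^1\|\tfrac{d}{dt}\exp(tA)\|_{HS}\,\diff t=\|A\|_{HS}$ and the general case follows by left-invariance. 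In particular the restriction to $SO(m)$ of any $f$ that is $L$-Lipschitz for $\|\cdot\|_{HS}$ is $L$-Lipschitz for the geodesic distance $d_g$, so $g:=f/L$ has Riemannian gradient $|\nabla g|\leq 1$ almost everywhere on $SO(m)$.

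Next I would invoke the Bakry--\'Emery criterion (\cite{bakry2014analysis}, Thm.~2.1 and the surrounding curvature--dimension discussion): a compact Riemannian manifold with $\mathrm{Ric}\geq\rho>0$ has its normalized volume satisfying a log-Sobolev inequality with constant at most $1/\rho$, i.e.\ $\mathrm{Ent}_\mu(h^2)\leq\tfrac{2}{\rho}\Expec|\nabla h|^2$. Since the normalized volume of the bi-invariant metric is Haar measure on $SO(m)$, Haar measure satisfies a log-Sobolev inequality with constant $c\leq\tfrac{4}{m-2}$. The Herbst argument then upgrades this to a sub-Gaussian tail: applying the inequality to $e^{\lambda g/2}$ and integrating the resulting differential inequality for $\lambda\mapsto\lambda^{-1}\log\Expec e^{\lambda g}$ gives $\Expec e^{\lambda(g-\bar g)}\leq e^{c\lambda^2/2}$ for all $\lambda\in\Real$, and optimizing the Chernoff bound yields $\Prob(|g-\bar g|\geq s)\leq 2\exp(-s^2/(2c))$. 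Substituting $c\leq\tfrac{4}{m-2}$, $g=f/L$, $\bar g=\bar f/L$, $s=t/L$ produces $\Prob(|f-\bar f|\geq t)\leq 2\exp(-(m-2)t^2/(8L^2))$, which has the claimed form $\exp(-mt^2/(8L^2))$ for the range of $m$ under consideration (the factor $2$ and the $\tfrac{m-2}{m}$ correction being absorbed into the absolute constant, or else one uses the marginally sharper log-Sobolev constant available for $SO(m)$).

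The hard part is not any single step but the constant bookkeeping: one must check that the ambient Lipschitz hypothesis transfers to the intrinsic metric (immediate once the chord--arc inequality is in place), correctly pin down the Ricci lower bound for $\mathfrak{so}(m)$ \emph{with the metric normalization actually in use} (the spot where an off-by-a-constant slip is easiest), and then carry that constant faithfully through Bakry--\'Emery and Herbst so the exponent scales as $m/L^2$ rather than merely $1/L^2$. Should the constant tracking prove cumbersome, an alternative is to push forward Gaussian concentration on $\mathfrak{so}(m)$ through a Lipschitz retraction onto $SO(m)$, or simply to cite a packaged concentration-of-measure statement for the classical compact groups; but the curvature route above is the cleanest and matches the sketch already indicated in the surrounding text.
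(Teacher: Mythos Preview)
Your proposal is correct and matches the paper's approach exactly: the paper simply states (without a formal proof) that the inequality follows from $SO(m)$ having Ricci curvature $\tfrac{m-2}{4}$, which via Bakry--\'Emery (\cite{bakry2014analysis}, Theorem~2.1) gives a log-Sobolev inequality with constant $\tfrac{4}{m-2}$ and hence Gaussian concentration for Lipschitz functions. Your write-up is in fact more detailed than the paper's one-line sketch, correctly supplying the chord--arc inequality needed to pass from the extrinsic Hilbert--Schmidt Lipschitz hypothesis to the intrinsic geodesic one.
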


Next we need to know some facts about submatrices of typical Haar-random matrices.
\begin{lemma}
\label{lem:typical-haar}
Let $X=\sqrt{m/n} Y$ where $Y$ is an $n\times n$ submatrix of a Haar-random orthogonal $m\times m$
matrix.  There exist $n_0$ such that for $n>n_0$ and $m>n^2$,
with probability at least $\frac12$, $X$ satisfies both $\lambda_{max}(X^TX) \leq 5$
and $\trace[(X^TX)^2 - 2X^TX] \leq 100$.
\end{lemma}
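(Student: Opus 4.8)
The plan is to bound the two bad events $\{\lambda_{max}(X^TX) > 5\}$ and $\{\trace[(X^TX)^2 - 2X^TX] > 100\}$ separately, using the moment formulas of Lemma~\ref{lem:moments} to pin down the first moments of the relevant quantities and the orthogonal-group concentration inequality of Lemma~\ref{lem:orth-lip} to control their fluctuations, and then to union-bound so that for $n$ past some $n_0$ the total is comfortably below $1/2$. Throughout, $X = \sqrt{m/n}\,Y$ with $Y$ the top-left $n\times n$ block of a Haar-random orthogonal $O$; the key point making Lemma~\ref{lem:orth-lip} applicable is that $O\mapsto Y$ is a coordinate projection, hence a contraction in Hilbert--Schmidt norm, so a function of $X$ with $X$-Lipschitz constant $L_0$ becomes a function of $O$ with Lipschitz constant $\sqrt{m/n}\,L_0$.

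For the operator-norm event I would write $\sqrt{\lambda_{max}(X^TX)} = \sup_{\|u\|=\|v\|=1} u^T X v$; for fixed $u,v$ the map $O\mapsto u^T X v$ is linear with gradient of Hilbert--Schmidt norm $\sqrt{m/n}\,\|u\|\|v\| = \sqrt{m/n}$, so $O\mapsto\sqrt{\lambda_{max}(X^TX)}$ is $\sqrt{m/n}$-Lipschitz and Lemma~\ref{lem:orth-lip} gives $\Prob\big(\sqrt{\lambda_{max}(X^TX)} > \Expec\sqrt{\lambda_{max}(X^TX)} + t\big) \le \exp(-nt^2/8)$. Since the largest singular value of a rescaled $n\times n$ truncation of a Haar orthogonal with $m\gg n$ converges to the Marchenko--Pastur edge (a classical no-outlier fact), $\Expec\sqrt{\lambda_{max}(X^TX)} = 2 + o(1)$; taking $t$ a fixed positive constant with $2+t < \sqrt5$ yields $\Prob(\lambda_{max}(X^TX) > 5) \le e^{-cn} \le 1/100$ once $n>n_0$.

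For the trace-functional event, set $\phi(X) := \trace[(X^TX)^2 - 2X^TX] = \sum_i (\lambda_i^2 - 2\lambda_i)$ with $\lambda_i = \lambda_i(X^TX)$. From Lemma~\ref{lem:moments}, $\Expec\phi = \Expec\trace[(X^TX)^2] - 2\,\Expec\trace[X^TX] = 1 - \tfrac{(n-1)^2}{m-1} + O(n/m)$, which lies in $[0,2]$ for $n>n_0$ and $m>n^2$ --- the essential point being that this is $O(1)$, not $O(n)$, a genuine cancellation between $\Expec\trace[(X^TX)^2]\approx 2n$ and $2\,\Expec\trace[X^TX] = 2n$. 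Now $\phi$ is not globally Lipschitz (it is quartic), so I would truncate: put $g(t) = t^2 - 2t$ on $[0,5]$ and $g(t) = 15$ for $t>5$, so that $\phi(X) = \trace g(X^TX)$ on $\{\lambda_{max}(X^TX)\le 5\}$. The map $X\mapsto \trace g(X^TX)$ has gradient $2Xg'(X^TX)$ of Hilbert--Schmidt norm $2(\sum_i \lambda_i g'(\lambda_i)^2)^{1/2}$; since $t\,g'(t)^2 \le 320$ for every $t\ge 0$ (the supremum is attained at $t=5$, and $g'$ vanishes beyond it), this is at most $2\sqrt{320 n}$ \emph{globally}, so $O\mapsto \trace g(X^TX)$ is $\sqrt{1280\,m}$-Lipschitz. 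Moreover $\Expec\trace g(X^TX)$ differs from $\Expec\phi$ only through a term carried by the super-polynomially rare event $\{\lambda_{max}(X^TX)>5\}$, on which both quantities are bounded by $\poly(n,m)$, so $\Expec\trace g(X^TX) \le 2 + o(1)$. Lemma~\ref{lem:orth-lip} then gives $\Prob(\trace g(X^TX) > 100) \le \Prob\big(|\trace g(X^TX) - \Expec\trace g(X^TX)| > 98\big) \le \exp\big(-m\cdot 98^2/(8\cdot 1280\, m)\big) = \exp(-98^2/10240) < 0.4$. Union-bounding with the previous paragraph, $\Prob\big(\lambda_{max}(X^TX) > 5 \ \text{or}\ \phi(X) > 100\big) \le 1/100 + 0.4 < 1/2$, as claimed.

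The main obstacle is this last step. Unlike the operator norm, a quartic spectral statistic admits no global Lipschitz bound on the orthogonal group, so one is forced to truncate it at the operator-norm scale; and even the truncated functional has Lipschitz constant of order $\sqrt m$ while the curvature of $SO(m)$ is of order $m$, so Lemma~\ref{lem:orth-lip} only produces a \emph{dimension-independent} Gaussian tail $\exp(-\Theta(t^2))$ for it, not an $n$-dependent one. The argument therefore turns entirely on the numerics: it closes because (i) $\Expec\trace[(X^TX)^2 - 2X^TX]$ is $O(1)$ rather than $O(n)$ --- the cancellation in Lemma~\ref{lem:moments} --- so the threshold $100$ sits many effective widths above the mean, and (ii) the operator-norm failure is super-polynomially rare, leaving essentially the whole probability budget below $1/2$ for the fluctuation term; the constants $5$ (strictly above the Wishart edge $4$) and $100$ are engineered precisely so that this bookkeeping goes through. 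The one genuinely external input is the sharp no-outlier bound on the top singular value of truncated Haar orthogonals; everything else is a consequence of Lemmas~\ref{lem:moments} and~\ref{lem:orth-lip}.
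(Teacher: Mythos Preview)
Your proof follows essentially the same plan as the paper: split into the two bad events, use Lemma~\ref{lem:moments} to see that $\Expec\phi=O(1)$, truncate the quartic to obtain a globally Lipschitz spectral functional, apply Lemma~\ref{lem:orth-lip}, and union-bound. The paper handles the operator-norm event by a Markov-type moment bound rather than concentration, but this is cosmetic; both arguments rely on the same external input (convergence of the top eigenvalue of rescaled truncations of Haar orthogonals to the Marchenko--Pastur edge).

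One technical point to fix: your argument that $\Expec\trace g(X^TX)\le 2+o(1)$ via ``on the rare event both quantities are bounded by $\poly(n,m)$'' is not uniform in $m$. Since $\|Y\|_{op}\le 1$ gives $\lambda_{max}(X^TX)\le m/n$, the untruncated $\phi(X)$ can be as large as $m^2/n$ deterministically, while the lemma places no upper bound on $m$; for $m\gg e^{cn}$ the product $(m^2/n)\cdot e^{-cn}$ diverges. The fix is already implicit in your choice of truncation: because you cap $g$ at $g(5)=15$ and $t^2-2t\ge 15$ for all $t\ge 5$, you have $g(t)\le t^2-2t$ pointwise on $[0,\infty)$, hence $\trace g(X^TX)\le \phi(X)$ always and $\Expec\trace g(X^TX)\le \Expec\phi\le 1$ directly from Lemma~\ref{lem:moments}. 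No rare-event accounting is needed, and the rest of your numerics go through unchanged.
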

\begin{proof}
We use the fact that, for any sequence $m(n)$ satisfying  $m(n) \geq n^2$, we have
\[
\lim_{n\to\infty} \Expec \lambda_{max}(X^TX)^2 = 4.
\]
In particular for some $n_0>0$ we have for any $n>n_0$ and $m>n^2$ the inequality
\[
\Expec \lambda_{max}(X^TX)^2 \leq 5.
\]
Thus we conclude using Markov's inequality that
\[
\Prob(\lambda_{max}(X^TX) \geq 5)
\leq \frac{1}{25} \Expec(\lambda_{max}(X^TX)^2) \leq \frac15,
\]

Next, by Lemma~\ref{lem:moments} we have
\[
|\Expec \trace[(X^TX)^2 - 2X^TX]| \leq 1.
\]
Now let $g(t)$ be the same truncated version of $t^2-2t$ as in the proof of Lemma~\ref{lem:gauss-conc}.
The same argument as in there, combined with the concentration inequality of Lemma~\ref{lem:orth-lip}
implies that
\[
\Prob(|\trace[(X^TX)^2-2X^TX] | \geq 100) \leq \frac{1}{4}.
\]
Combining these with a union bound proves the result.
\end{proof}

We are finally ready to prove Lemma~\ref{lem:Z-compare}, and thus conclude the proof
of Proposition~\ref{prp:small-small}.
\begin{proof}[Proof of Lemma~\ref{lem:Z-compare}]
Let $A\subset\Real^{n\times n}$ be the set of matrices that are ``typical'' both for the Gaussian
distribution and as submatrices of orthogonal matrices:
\[
A := \{\lambda_{max}(X^TX) \leq 5\} \cap \{|\trace[(X^TX)^2 - 2X^TX]| \leq 100\}.
\]
Then by Lemma~\ref{lem:typical-haar} and also Lemma~\ref{lem:gauss-conc} we have
\[
\frac12 \leq \Prob_G(A) \leq 1
\]
and also
\[
\frac12 \leq \Prob_S(A) \leq 1.
\]
In particular,
\[
\frac12 \leq \frac{\Prob_G(A)}{\Prob_S(A)} \leq 2.
\]
Moreover, by Lemma~\ref{lem:pGS-compare} we have for $X\in A$ that
\[
|\log(p_G(X)/p_S(X)) - \log(c_G/c_S)| \leq 25 \frac{n^3}{m^2}
+ 125 \frac{n^4}{m^3} + 100 \frac{n^2}{m}  \leq 1
\]
for $m>101 n^2$ and $n$ sufficiently large.
Therefore
\begin{align*}
\frac{\Prob_G(A)}{\Prob_S(A)}
&= \frac{\int_A p_G(X)\diff X}{\int_A p_S(X)\diff X} \\
&= \frac{\int_A \frac{p_G(X)}{p_S(X)} p_S(X)\diff X}{\int_A p_S(X)\diff X} \\
&= \frac{\int_A \frac{p_G(X)}{p_S(X)} p_S(X)\diff X}{\int_A p_S(X)\diff X} \\
&= \frac{c_G}{c_S} \sup_{X\in A} \frac{c_Sp_G(X)}{c_Gp_S(X)}
\leq e\frac{c_G}{c_S} .
\end{align*}
Thus $\frac{c_G}{c_S} \geq \frac{1}{2e}$.  The argument also works to show
that $\frac{c_S}{c_G} \geq \frac{1}{2e}$, so the proof follows from $2e<10$.
\end{proof}

\section{Numerical evidence for Conjecture \ref{conj:anticoncentration}}\label{sec:num}
In this section, we provide brief numerical evidence for Conjecture \ref{conj:anticoncentration}:

\anticoncentration*

In particular, we compute permanents of several ensembles of Gaussian matrices with varying means, and plot their distribution. We observe similar distributions on the minimum non-zero permanent observed for all means tested. 

\begin{figure}[h]
    \centering
    \includegraphics[width=0.75\linewidth]{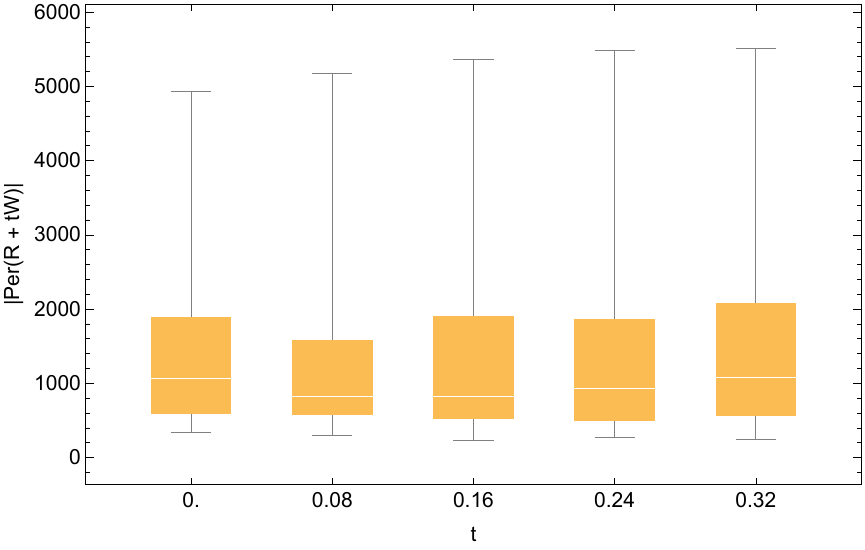}
    \caption{Box plots for the distribution of $\left|\Per(R+tW)\right|$ of the form depicted in Fig.\ \ref{fig:all-tricks} for $n=10$ and $k=n^\varepsilon=5.$ For five equally spaced values of $t\in[0,1/\sqrt{n}],$     
    we randomly generate $30$ such $R$ and $W$. Notably, the box plots show remarkably little variation for increasing $t$ in the relevant range, and in particular the lower bound for $t=0$ holds for shifted $t,$ as conjectured.}
    \label{fig:anticoncentration-statistics}
\end{figure}

\section{Square method for complex polynomials} 
\label{sec:complex-squares}
In this section we quickly discuss the difficulties involved in proving variants of the square trick (Lemma~\ref{lem:square-trick-coex} in the context of coefficient extraction, and Lemma~\ref{lem:square-trick-extrapolation} in the context of extrapolation) in the case that $p$ and $q$ are \textit{complex} valued polynomials.
This is relevant if one wants to obtain hardness results for BosonSampling with a unitary random matrices (as opposed to orthogonal).  

The complex
case is significantly different from the real case because now one must recover a
complex phase from the unit circle in $\Complex$ rather than simply a sign $\pm1$ (for which there are only two possibilities -- this is used in Lemma~\ref{lem:square-trick-coex} for example).

This can be seen in the following example.
Let $q(t)=1$ be the constant polynomial, and let
\[
p_d(t) = \sum_{j=0}^d \frac{(it/2)^j}{j!}
\]
be the Taylor truncation of order $d$ of the exponential $e^{it/2}$.  Then by the Taylor
remainder formula,
\[
|p_d(t) - e^{it/2}| \leq \frac{2^{-d}}{d!}
\]
on the interval $[-1,1]$.  Therefore $||p_d|-1|\leq 2^{-d} (d!)^{-1}$ but also
for any phase $e^{i\theta}$ there exists $t\in[0,1]$ such that
$|p_d - e^{i\theta}| > 1/4$.

This example shows that there is no complex analogue of
Lemmas~\ref{lem:square-trick-coex} or \ref{lem:square-trick-extrapolation} if we only compare the values of $p$ and $q$ on
some real interval.  We can however perform an extrapolation by considering the values of the polynomial on the unit disk on the complex plane instead of the real interval $[-1,1]$.  To see that this has a hope of succeeding one can
see that
\[
\sup_{z\in B_1} ||p_d(z)| - 1| \gtrsim 1.
\]

We are unfortunately unable to provide a complete proof of a variant of Lemma~\ref{lem:square-trick-coex} in the complex case.  What we are missing is 
an interesting and seemingly difficult question in complex analysis \cite{mathoverflow1}.
\begin{conjecture}
\label{conj:similar-modulus}
For any complex-analytic functions $f$ and $g$ on the unit disk, there exists $\omega\in\Complex$, $|\omega|=1$ such that 
$|\omega|=1$ such that
\[
\max_{|z|\leq \frac14} |f(z) - \omega g(z)| \leq C\max_{|z|\leq 1} ||f(z)|-|g(z)||.
\]
Above $D_r$ is the complex disk of radius $r$ centered at the origin.
\end{conjecture}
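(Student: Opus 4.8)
The plan is to exploit the scale-invariance of the statement---both sides are homogeneous of degree $1$ in $(f,g)$---to normalize $\delta := \max_{|z|\le 1}\big||f(z)|-|g(z)|\big| = 1$; the degenerate case $\delta = 0$ is the rigid assertion ``$|f|\equiv|g|$ on the disk $\Rightarrow f = \omega g$'', which holds because $\log|f/g|$ is harmonic off the common zero set and identically zero, forcing $f/g$ to be a locally constant function of modulus $1$. Fix an absolute constant $M$. \emph{Trivial regime:} if $\max(|f|,|g|)\le 3M$ throughout $|z|\le 1/4$, then $|f-\omega g|\le 6M = 6M\delta$ there for every unimodular $\omega$, and we are done. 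By the maximum-modulus principle this occurs whenever the closed set $\{|f|\le M\}\cup\{|g|\le M\}$ contains a Jordan curve $\gamma$ in the open disk enclosing the origin: at each point of $\gamma$ one has $\min(|f|,|g|)\le M$, hence $\max(|f|,|g|)\le M+1$ (using $\big||f|-|g|\big|\le 1$), and the maximum principle applied separately to $f$ and to $g$ propagates this bound to the region bounded by $\gamma$. So the crux is the complementary \emph{large regime}, in which $f$ and $g$ are simultaneously large on a set connecting a neighborhood of the origin to $\partial D$.

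In the large regime the idea is to linearize around a single phase. The exact identity
\[
|f-\omega g|^2 = \big(|f|-|g|\big)^2 + 2|f||g|\big(1-\cos(\arg f - \arg g - \arg\omega)\big)
\]
reduces the problem to choosing $\omega$ so that $\Psi := \arg(f/g)$ is nearly equal to $\arg\omega$ wherever $|f||g|$ is large on $|z|\le 1/4$, since the first term never exceeds $\delta^2$. Where $f$ and $g$ are both nonzero, $\log(f/g)$ is locally holomorphic with $\Psi = \operatorname{Im}\log(f/g)$ the harmonic conjugate of $\operatorname{Re}\log(f/g) = \log|f|-\log|g|$; a Borel--Carath\'eodory / conjugate-function estimate converts control of $\log|f/g|$ into control of $\osc\Psi$, and the pointwise bound $\big|\log(|f|/|g|)\big|\le 2\delta/|g|$ (valid where $|g|>2\delta$) feeds in to make the term $2|f||g|(1-\cos\cdot)$, which is of order $|f||g|(\delta/|g|)^2$, comparable to $\delta^2$ once $|f|\lesssim|g|$. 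Carrying this out on a fixed simply-connected region $U\supseteq\{|z|\le 1/2\}$ on which $f$ and $g$ are zero-free yields the conclusion with $\omega = (f/g)(0)/|(f/g)(0)|$. (A soft normal-families argument already gives the \emph{qualitative} version: along any sequence with $\delta\to 0$, suitably normalized limits satisfy $|f|\equiv|g|$ on the open disk, hence $f = \omega_0 g$, forcing $\max_{|z|\le 1/4}|f_n-\omega_0 g_n|\to 0$; but this yields no rate, and recovering the \emph{linear} dependence on $\delta$ is the entire difficulty.)

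The main obstacle is that the assumptions of the previous paragraph can fail: $f$ or $g$ may vanish inside $|z|\le 1/2$, and even where they do not, $|g|$ may be badly non-comparable across $U$ so the crude pointwise bounds are far from sharp. The natural remedy for the zeros is the inner--outer factorization $f = B_f O_f$, $g = B_g O_g$ (after a harmless restriction to a slightly smaller disk), in which $B_f, B_g$ are finite Blaschke products carrying the zeros and $O_f, O_g$ are zero-free; the outer parts can then be compared as above---on $\partial D$, $\big||O_f|-|O_g|\big| = \big||f|-|g|\big|\le\delta$---but one is still left to prove $\max_{|z|\le 1/4}|B_f - \omega B_g|\lesssim\delta$. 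This is where the hypothesis that $|f|$ is close to $|g|$ on the \emph{whole closed disk}---and not merely on $\partial D$, where $|B_f| = |B_g| = 1$ automatically---must be used: it forces every zero of $f$ into $\{|g|\le\delta\}$ and every zero of $g$ into $\{|f|\le\delta\}$, a weak ``valley-to-valley'' matching of the two zero sets. Converting this into a quantitative bound on $B_f/B_g$ near the origin that is uniform in the number and configuration of zeros is precisely the open question recorded in \cite{mathoverflow1}, and it is the step I expect to be the true bottleneck; a complete proof will most likely require either such a robust Blaschke-matching lemma, or a factorization-free argument that estimates $f-\omega g$ directly from a Cauchy integral over $|z|=1$ in which both the boundary discrepancy $\big||f|-|g|\big|$ and the interior zeros enter as controlled error terms.
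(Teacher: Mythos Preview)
The paper does not prove this statement: it is explicitly labeled a \emph{conjecture} and presented as an open problem in complex analysis (see the surrounding discussion in Appendix~\ref{sec:complex-squares} and the reference to~\cite{mathoverflow1}). There is therefore no proof in the paper to compare your attempt against.

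Your proposal is not a proof either, and to your credit you say so. The outline up through the inner--outer factorization is reasonable: the degenerate case $\delta=0$ is handled correctly, the ``trivial regime'' maximum-principle reduction is sound (modulo a minor looseness about where the separating Jordan curve sits relative to $|z|=1/4$), and in the zero-free setting the Borel--Carath\'eodory/conjugate-function step does deliver the linear rate in $\delta$. The genuine gap is exactly the one you name at the end: once $f$ or $g$ has zeros in the disk, you must compare the Blaschke products $B_f$ and $B_g$ uniformly in the number and configuration of zeros, and the only information the hypothesis supplies is that each zero of $f$ lies in $\{|g|\le\delta\}$ and vice versa. This ``valley-to-valley'' matching does not obviously force $|B_f-\omega B_g|\lesssim\delta$ on $|z|\le 1/4$---the zeros of $f$ and $g$ need not be in bijection, need not be close to one another, and a single Blaschke factor with a zero near $|z|=1/4$ is very sensitive to perturbations of that zero. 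So your proposal reduces the conjecture to a sharp form of itself (a robust Blaschke-matching lemma) rather than resolving it, which is consistent with the paper's own assessment that the question remains open.
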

Note that the restriction to $D_{1/4}$ (or at least some 
$D_r$ with $r<1$) is necessary for the conjecture to hold.
For example if $f=z^n$ and $g=z^{n+1}$ then for any $\omega=e^{i\phi}$, $\phi\in[0,2\pi]$
one has 
\[
\max_{|z|\leq 1} |z^n - \omega z^{n+1}|
\geq \max_{\theta\in[0,2\pi]} |e^{in\theta} - e^{i\phi} e^{i(n+1)\theta}| 
= \max_{\theta\in[0,2\pi]} |e^{i\theta} - e^{i\phi}|
= 2.
\]
On the other hand 
\[
\max_{|z|\leq 1}{||z|^n - |z|^{n+1}|}
\leq \max_{0\leq r\leq 1} r^n (1-r) 
\leq C n^{-1}.
\]

We are not claiming that Conjecture~\ref{conj:similar-modulus} is 
\textit{sufficient} to transfer our results to Haar-random unitary matrices, only that it seems to be necessary to overcome this obstacle before one can transfer our techniques to that setting.

\newpage 
\newcommand{\etalchar}[1]{$^{#1}$}

\end{document}